\newtheorem{theorem}{Theorem}
\newtheorem*{theorem*}{Theorem}
\newtheorem{lemma}{Lemma}
\newtheorem*{conjecture}{Conjecture}
\newtheorem*{HM}{The Hidden Matching problem ($\operatorname{HM}^\alpha_n$)}
\newtheorem*{BHM}{The Boolean Hidden Matching problem ($\operatorname{BHM}^\alpha_n$)}
\newtheorem*{BHH}{The Boolean Hidden Hypermatching problem ($\operatorname{BHH}^\alpha_{t,n}$)}
\newtheorem*{BHP}{The $f$-Boolean Hidden Partition problem ($f\operatorname{-BHP}^{\alpha}_{t,n}$)}
\newcommand{\qedalt}{\tag*{$\blacksquare$}}
\newenvironment{customthm}[1]
  {\innercustomthm}
  {\endinnercustomthm}
\newenvironment{myproof}[1] {\paragraph{Proof of {#1}.}}{}
\DeclareMathOperator{\sgn}{sgn}
\newcommand{\be}{\begin{equation}}
\newcommand{\ee}{\end{equation}}
\newcommand{\bea}{\begin{eqnarray}}
\newcommand{\eea}{\end{eqnarray}}
\newcommand{\bes}{\begin{equation*}}
\newcommand{\ees}{\end{equation*}}
\newcommand{\beas}{\begin{eqnarray*}}
\newcommand{\eeas}{\end{eqnarray*}}
\newcommand{\ba}{\begin{align}}
\newcommand{\ea}{\end{align}}
\newcommand{\bas}{\begin{align*}}
\newcommand{\eas}{\end{align*}}
\begin{document}

\title{Exponential quantum communication reductions from generalisations of the Boolean Hidden Matching problem}
	
\author{Jo\~{a}o F. Doriguello\footnote{School of Mathematics and Quantum Engineering Centre for Doctoral Training, University of Bristol. \url{joao.doriguellodiniz@bristol.ac.uk}} \and Ashley Montanaro\footnote{School of Mathematics, University of Bristol.}}

	\maketitle

	\begin{abstract}
	    In this work we revisit the Boolean Hidden Matching communication problem, which was the first communication problem in the one-way model to demonstrate an exponential classical-quantum communication separation. In this problem, Alice's bits are matched into pairs according to a partition that Bob holds. These pairs are compressed using a Parity function and it is promised that the final bit-string is equal either to another bit-string Bob holds, or its complement. The problem is to decide which case is the correct one. Here we generalise the Boolean Hidden Matching problem by replacing the parity function with an arbitrary Boolean function $f$. Efficient communication protocols are presented depending on the sign-degree of $f$. If its sign-degree is less than or equal to 1, we show an efficient classical protocol. If its sign-degree is less than or equal to $2$, we show an efficient quantum protocol. We then characterize the classical hardness of all symmetric functions $f$ of sign-degree greater than or equal to $2$, except for one family of specific cases. We also prove, via Fourier analysis, a classical lower bound for any function $f$ whose pure high degree is greater than or equal to $2$. Similarly, we prove, also via Fourier analysis, a quantum lower bound for any function $f$ whose pure high degree is greater than or equal to $3$. These results give a large family of new exponential classical-quantum communication separations.
	\end{abstract}


    \section{Introduction}
    
    One of the main aims of the field of quantum information and quantum computation is to establish the superiority of quantum computers and quantum resources over their classical counterparts. While in some areas this superiority is based on a belief in the impossibility of classical computers solving particular tasks, e.g.\ the efficiency of Shor's algorithm~\cite{shor97} coming from the belief that there is no efficient classical factoring algorithm, in other areas like communication complexity one can establish unconditional exponential separations between classical and quantum performances.
    
    Communication complexity is a model of computation first introduced by Yao~\cite{yao1979some}. In this model, two parties (normally called Alice and Bob) hold each a piece of data and want to solve some computational task that jointly depends on their data. More specifically, if Alice holds some information $x$ and Bob holds some information $y$, they want to solve some function $f(x,y)$ or relational problem with several valid outputs for each $x$ and $y$. In order to do so, they will need to communicate between themselves, and their goal is to solve the problem with minimal communication. The protocol that Alice and Bob employ could be \emph{two-way}, where they take turns sending messages to each other; \emph{one-way}, where Alice sends a single message to Bob who then outputs the answer; or \emph{simultaneous}, where Alice and Bob each pass one message to a third party (the referee) who outputs the answer. Apart from these different types of communication settings, one is also interested in the error of a protocol when solving a communication problem: the zero-error \emph{communication complexity} is the worst-case communication of the best protocol that gives a correct output with probability $1$ for every input $(x,y)$; the bounded-error \emph{communication complexity} is the worst-case communication cost of the best protocol that gives a correct output with probability $1-\epsilon$ for every input $(x,y)$, with $\epsilon\in[0,1/2)$.
    
    An interesting extension of the original communication model is the model of \emph{quantum} communication complexity~\cite{buhrman10}, also introduced by Yao~\cite{yao1993quantum}. In this model, Alice and Bob each has a quantum computer and they exchange qubits instead of bits and/or make use of shared entanglement. The use of quantum resources can drastically reduce the amount of communication in solving some problems in comparison to the classical communication model. 
    %
    
     Exponential quantum-classical separations are known in the two-way~\cite{raz1999exponential}, one-way~\cite{bar2004exponential,gavinsky2007exponential} and simultaneous~\cite{buhrman2001quantum,doriguello19} models. Indeed, it is even known that one-way quantum communication can be exponentially more efficient than two-way classical communication~\cite{gavinsky08b,klartag11}. However, surprisingly few examples of such exponential separations are known, compared (for example) with the model of query complexity in which Shor's algorithm operates.
    
    The Hidden Matching problem~\cite{bar2004exponential} was the first problem to exhibit an exponential separation between the bounded-error classical communication complexity and the bounded-error quantum communication complexity in the one-way model. The problem can be efficiently solved by one quantum message of $\log{n}$ qubits, while any classical one-way protocol needs to send $O(\sqrt{n})$ bits to solve it. The hardness of the problem is essentially one-way: it could be efficiently solved by having Bob sent a classical message of $\log{n}$ bits to Alice. The Hidden Matching problem is a relational problem. In the same paper~\cite{bar2004exponential} the authors proposed a Boolean version of the problem, the Boolean Hidden Matching problem (which is a partial Boolean function), and conjectured that the same quantum-classical gap holds for it as well, which was later proven to be true by Gavinsky~\emph{et al.}~\cite{gavinsky2007exponential}. Generalising this separation is the focus of this work.
    
    \subsection{Hidden matching problems}
    
    Throughout the paper, $[n] := \{1,2,\dots,n\}$, $\mathbb{S}_n$ is the set of permutations of $[n]$ and given $x,y\in\{-1,1\}^n$,\addtocounter{footnote}{2}\footnote{Throughout this paper we shall use $\{-1,1\}$ instead of $\{0,1\}$ for convenience.} we denote by $x\circ y$ the Hadamard (elementwise) product of $x$ and $y$, and by $\overline{x}$ the complement of $x$, such that $x\circ \overline{x} = -1^n$.
    
    The Hidden Matching ($\operatorname{HM}^\alpha_n$) and Boolean Hidden Matching ($\operatorname{BHM}^\alpha_n$) problems are defined with respect to some $\alpha\in(0,1]$. Alice is given a string $x\in\{-1,1\}^n$ and Bob is given a sequence $M\in \mathcal{M}_{\alpha n,2}$ of $\alpha n/2$ disjoint pairs $(i_1,j_1),(i_2,j_2),\dots,(i_{\alpha n/2}, j_{\alpha n/2})\in[n]^2$. Such a sequence is called an $\alpha$-matching, and $\mathcal{M}_{\alpha n,2}$ denotes the family of all
    $\alpha$-matchings -- i.e., partial matchings of a fixed size in the complete graph on $\alpha n$ vertices. Together $x$ and $M$ induce a string $z\in\{-1,1\}^{\alpha n/2}$ defined by the parities of the $\alpha n/2$ edges, i.e., $z_\ell = x_{i_\ell} x_{j_\ell}$ for $\ell = 1,\dots,\alpha n/2$. Then the $\operatorname{HM}^\alpha_n$ and $\operatorname{BHM}^\alpha_n$ problems are defined as follows.
    \begin{HM}
        Let $n\in\mathbb{N}$ be even and $\alpha\in(0,1]$. Alice receives $x\in\{-1,1\}^n$ and Bob receives $M\in\mathcal{M}_{\alpha n,2}$. Their goal is to output a tuple $\langle i,j,b\rangle$ such that $(i,j)\in M$ and $b = x_i x_j$.
    \end{HM}
    \begin{BHM}
        Let $n\in\mathbb{N}$ be even and $\alpha\in(0,1]$. Alice receives $x\in\{-1,1\}^n$ and Bob receives $M\in\mathcal{M}_{\alpha n,2}$ and $w\in\{-1,1\}^{\alpha n/2}$. It is promised that $z\circ w = b^{\alpha n/2}$ for some $b\in\{-1,1\}$. Their goal is to output $b$.
    \end{BHM}
    Given inputs $x$ and $M$, it is clear that there are many possible correct outputs for the $\operatorname{HM}^\alpha_n$ problem ($\alpha n/2$ correct outputs, actually), making it a relational problem. On the other hand, the $\operatorname{BHM}^\alpha_n$ is a partial Boolean function due to the promise statement.

    Bar-Yossef \emph{et al.}~\cite{bar2004exponential} gave a simple quantum protocol to solve the $\operatorname{HM}^\alpha_n$ problem with just $O(\log{n})$ qubits of communication for any $\alpha$, while proving that any classical protocol needs to communicate at least $\Omega(\sqrt{n})$ bits in order to solve it. Similarly with the $\operatorname{BHM}^\alpha_n$ problem, Gavinsky \emph{et al.}~\cite{gavinsky2007exponential} demonstrated the same exponential classical-quantum communication gap for any $\alpha \le 1/2$ (note that their definition of $\alpha$ differs from ours by a factor of $2$). As $\operatorname{HM}^\alpha_n$ is at least as difficult as $\operatorname{BHM}^\alpha_n$, their result implies the same lower bound for $\operatorname{HM}^\alpha_n$. The approach taken by Gavinsky~\emph{et al.} in proving the classical lower bound is particularly interesting in that it uses the Fourier coefficients inequality of Kahn, Kalai, and Linial~\cite{kahn1988influence}, which is proven via the Bonami-Beckner inequality~\cite{bonami1970etude,beckner1975inequalities}. We also mention that Fourier analysis had been previously used in communication complexity by Raz~\cite{raz1995fourier} and Klauck~\cite{klauck2001lower}.
    
    A slightly weaker separation ($O(\log n)$ vs.\ $\Omega(n^{7/16})$) for a closely related problem was shown in~\cite{montanaro11} using similar techniques. The $\operatorname{BHM}^\alpha_n$ problem was generalised by Verbin and Yu~\cite{verbin2011streaming} to a problem that they named Boolean Hidden Hypermatching ($\operatorname{BHH}^{\alpha}_{t,n}$). In this problem, instead of having the bits from Alice matched in pairs, they are now matched in tuples of $t$ elements. In other words, a bit from the final string $z$ is obtained by XORing $t$ bits from Alice's string. More precisely, Alice is given a string $x\in\{-1,1\}^n$ and Bob is given a sequence $M\in \mathcal{M}_{\alpha n,t}$ of $\alpha n/t$ disjoint tuples $(M_{1,1},\dots,M_{1,t}),\dots,(M_{\alpha n/t,1},\dots,M_{\alpha n/t,t})\in[n]^t$ called an $\alpha$-hypermatching, where $\mathcal{M}_{\alpha n,t}$ denotes the family of all such $\alpha$-hypermatchings. Both $x$ and $M$ induce a string $z\in\{-1,1\}^{\alpha n/t}$ defined by the parities of the $\alpha n/t$ hyperedges, i.e., $z_\ell = \prod_{j=1}^t x_{M_{\ell,j}}$ for $\ell = 1,\dots, \alpha n/t$. The $\operatorname{BHH}^{\alpha}_{t,n}$ problem is defined as follows.
    \begin{BHH}
        Let $n,t\in\mathbb{N}$ be such that $t|n$ and $\alpha\in(0,1]$. Alice receives $x\in\{-1,1\}^n$ and Bob receives $M\in\mathcal{M}_{\alpha n,t}$ and $w\in\{-1,1\}^{\alpha n/t}$. It is promised that $z\circ w = b^{\alpha n/t}$ for some $b\in\{-1,1\}$. Their goal is to output $b$.
    \end{BHH}
    Verbin and Yu proved a classical lower bound of $\Omega(n^{1-1/t})$ communication for every bounded-error one-way protocol, showing the increasing hardness of the problem with $t$, as one should expect since the $\operatorname{BHH}^\alpha_{t,n}$ problem can be reduced from the $\operatorname{BHM}^\alpha_n$ problem (we will show how this is done in detail later). The authors subsequently used this problem to prove various lower bounds on the space required of streaming algorithms (algorithms that read the input from left to right, use a small amount of space, and approximate some function of the input). However, no efficient quantum protocol was proposed for solving the $\operatorname{BHH}^\alpha_{t,n}$ problem for $t>2$. It was only later that Shi, Wu and Yu~\cite{shi2012limits} showed that such efficient quantum protocols do not exist. More specifically, they proved a quantum lower bound of $\Omega(n^{1-2/t})$ communication for every bounded-error one-way protocol for the $\operatorname{BHH}^\alpha_{t,n}$ problem. Their proof is similar to the ones used in the classical lower bound, the difference lying in the use of Fourier analysis of \emph{matrix-valued} functions and the matrix-valued Hypercontractive Inequality of Ben-Aroya, Regev and de Wolf~\cite{ben2008hypercontractive}.
    
    The original lower bound of Verbin and Yu assumes $\alpha=1$, unlike the lower bound of~\cite{gavinsky2007exponential}, where $\alpha \leq 1/2$. However, their lower bound requires $n/t$ to be even, otherwise Alice can just send the parity of her bit-string. (The result of~\cite{gavinsky2007exponential} can be extended to hold for any $\alpha < 1$ fairly straightforwardly, but achieving a strong lower bound for $\alpha = 1$ requires some more work.)
    
    \subsection{Our Results}
    
    This paper focuses on the study of a broad generalisation of the $\operatorname{BHH}_{t,n}^\alpha$ problem.
    In the (Boolean) Hidden Matching and Boolean Hidden Hypermatching problems, the task Alice and Bob want to solve can be viewed as rearranging Alice's data according to some permutation that Bob holds, and `compressing' the data to a final bit-string by applying some Boolean function to the bits. Then Alice and Bob's goal is to determine some information about this final bit-string. The way this compression was originally done was via the Parity function, but, apart from the obvious reason that Parity gives the desired classical-quantum communication gap and, less obviously, leads to a clear proof, there is no particular need to restrict to this function in order to arrive at the final bit-string. This observation leads to a generalisation of the Boolean Hidden Hypermatching problem, which we named the $f$-Boolean Hidden Partition ($f\operatorname{-BHP}^{\alpha}_{t,n}$) problem, where $f:\{-1,1\}^t\to\{-1,1\}$ is the Boolean function used to compress Alice's bits.
    
    Given $y\in\{-1,1\}^n$, we define by $y^{(j)} = (y_{(j-1)t+1},y_{(j-1)t+2},\dots,y_{jt})\in\{-1,1\}^t$ the $j$-th block of size $t$ from $y$, with $t|n$ and $j=1,\dots,n/t$. The $f$-Boolean Hidden Partition problem is defined as follows. Alice is given a bit-string $x\in\{-1,1\}^n$, and Bob is given a permutation $\sigma\in \mathbb{S}_n$ and a bit-string $w\in\{-1,1\}^{\alpha n/t}$, where $\alpha\in(0,1]$ is fixed and $t|n$. Given a Boolean function $f:\{-1,1\}^t\to\{-1,1\}$, we can define the map $B_f:\{-1,1\}^n\times\mathbb{S}_n\to \{-1,1\}^{\alpha n/t}$ by $B_f(x,\sigma) = \big(f(\sigma(x)^{(1)}),\dots,f(\sigma(x)^{(\alpha n/t)})\big)$, where $\sigma(x)_i = x_{\sigma^{-1}(i)}$. Hence $x$ and $\sigma$ induce a bit-string given by $B_f(x,\sigma)$, each of whose bits is obtained by applying $f$ to a block of the permuted bit-string $\sigma(x)$. The $f\operatorname{-BHP}^{\alpha}_{t,n}$ problem can be defined as follows.
    \begin{BHP}
        Let $n,t\in\mathbb{N}$ be such that $t|n$ and $\alpha\in(0,1]$. Alice receives $x\in\{-1,1\}^n$ and Bob receives $\sigma\in \mathbb{S}_n$ and $w\in\{-1,1\}^{\alpha n/t}$. It is promised that there exists $b\in\{-1,1\}$ such that $B_f(x,\sigma)\circ w = b^{\alpha n/t}$. The problem is to output $b$.
    \end{BHP}
    
    The adoption of the word `Partition' instead of `(Hyper)Matching' from previous works comes from our decision to view the problem in terms of a hidden partition that Bob holds, instead of an $\alpha$-(Hyper)Matching. Bob shuffles Alice's data according to some permutation, and then just partitions the resulting data in adjacent blocks of size $t$ and uses $f$ to get the final bit-string. Obviously both views are equivalent, but we think that the permutation approach eases the analysis of the problem.
    
    Our aim is to study the $f$-Boolean Hidden Partition problem in terms of the function $f$. It should be clear that for some functions the problem is hard to solve classically, e.g.\ when $f$ is the Parity function and we recover the usual Boolean Hidden Hypermatching problem. On the other hand, for some functions it becomes easily solvable, e.g.\ when $f$ is the AND function, since Alice needs only to send the position of any $1$ in her string (thinking of $1$ as $0$ when using $\{-1,1\}$). We would like to characterize for which functions the problem can be efficiently solved classically, i.e., with $O(\log{n})$ bits of communication, and for which functions it is hard to solve classically, i.e., requires $\Omega(n^{a})$ bits of communication for some $a\in(0,1]$. And the same question applies to quantum communication complexity: we would like to determine for which functions the problem admits or not an efficient quantum communication protocol. Given this characterization, we can check for which functions there is an exponential classical-quantum communication gap.
    
    We conjecture that the whole $f\operatorname{-BHP}^{\alpha}_{t,n}$ problem can be characterized mainly by the \emph{sign-degree} of the function $f$, and we give substantial evidence for such conjecture. A polynomial $p:\{-1,1\}^t\to \mathbb{R}$ is said to \emph{sign-represent} $f$ if $f(x) = \operatorname{sgn}(p(x))$. If $|p(x)| \leq 1$ for all $x$, we say that $p$ is \emph{normalized}. The \emph{bias} of a normalized polynomial $p$ is defined as $\beta = \min_x|p(x)|$. The sign-degree ($\operatorname{sdeg}(f)$) of $f$ is the minimum degree of polynomials that sign-represent it. In Section~\ref{sec:sec3} we give upper bounds on the classical and quantum communication complexity of the $f$-Boolean Hidden Partition problem based on the sign-degree.
    \begin{customthm}{\ref{thr:thr2.1}+\ref{thr:thr3.2}}
        Let $f:\{-1,1\}^t\to\{-1,1\}$ be a Boolean function. If $\operatorname{sdeg}(f) \leq 1$, then there is a bounded-error classical protocol that solves $f\operatorname{-BHP}^{\alpha}_{t,n}$ with error probability $\epsilon$ and $O\big((\frac{t}{\alpha\beta})^2\log\frac{1}{\epsilon}\log{n}\big)$ bits of communication. If $\operatorname{sdeg}(f) \leq 2$, then there is a bounded-error quantum protocol that solves $f\operatorname{-BHP}^{\alpha}_{t,n}$ with error probability $\epsilon$ and $O\big((\frac{t}{\alpha\beta})^2\log\frac{1}{\epsilon}\log{n}\big)$ qubits of communication. In both these results, $\beta$ is the bias of any normalized polynomial of degree $\operatorname{sdeg}(f)$ that sign-represents $f$.
    \end{customthm}
    Note that the bias $\beta$ can be very small, but can also be lower-bounded in terms of $t$ and $\operatorname{sdeg}(f)$: indeed, it is shown in~\cite{buhrman07} that $\beta$ is lower-bounded by $t^{-O(t^{\operatorname{sdeg}(f)})}$. In this work we will usually assume that $t=O(1)$, so $\beta = \Omega(1)$. We assume throughout that Alice and Bob do not have access to shared randomness or entanglement. The classical complexity in the above theorem can actually be improved to an additive dependence on $\log n$ via applying Newman's Theorem~\cite{newman91} to a protocol with shared randomness, but at the expense of making the protocol less intuitive.
    
    The classical upper bound stated above comes from the observation that, if $f$ has a sign-representing polynomial $p$ of degree 1, it is possible to determine whether $f(z)=1$ with probability $>1/2$ by only evaluating $f$ on one uniformly random bit of $z$, by writing down a probabilistic procedure whose expectation on $z$ mimics $p(z)$. So Alice sends a few uniformly random bits to Bob, who matches them to blocks in his partition, and evaluates $f$ on the corresponding blocks with success probability $>1/2$ for each block. Only a few repetitions are required to determine whether $f(x)=w$ or $f(x) = \overline{w}$ with high probability.
    
    On the other hand, to obtain the quantum upper bound we use the idea of \emph{block-multilinear} polynomials from~\cite{aaronson2018forrelation, aaronson2016polynomials}, and some auxiliary results also from~\cite{aaronson2016polynomials}. The idea is that Alice sends a superposition of her bits, and Bob, after collapsing the state onto one of the blocks from his partition (say block $j$), applies a controlled unitary operator that describes a block-multilinear polynomial $\widetilde{p}$ of degree $2$, which is produced from a sign-representing polynomial $p$ for $f$ of degree~2. A Hadamard test is used to return an output with probability depending (roughly speaking) on $\widetilde{p}(\sigma(x)^{(j)},\sigma(x)^{(j)})$, which in turn is equal to $p(\sigma(x)^{(j)})$ according to~\cite[Theorem~4]{aaronson2016polynomials}. The Hadamard test then outputs $0$ with probability greater than $1/2$ if $f(\sigma(x)^{(j)}) = 1$ and $1$ with probability greater than $1/2$ if $f(\sigma(x)^{(j)}) = -1$.
    
    We note that our quantum upper bound and the protocol behind it generalises a result from Wehner and de Wolf~\cite[Theorem~2]{MR2184730}, where they showed a quantum algorithm to compute $f(x_0,x_1)$, where $f:\{-1,1\}^{2b}\to\{-1,1\}$, with success probability greater than $1/2$ using just one copy of $\frac{1}{\sqrt{2}}(|x_0,0\rangle + |x_1,1\rangle)$. Here we are able to compute $f(x_0,\dots,x_{t-1})$, where $f:\{-1,1\}^t\to\{-1,1\}$, with probability greater than $1/2$ with just one copy of $\frac{1}{\sqrt{t}}\sum_{j=0}^{t-1}|x_j,j\rangle$ if $\operatorname{sdeg}(f)\leq 2$.
    
    We remark that both of these protocols actually solve a natural generalisation of the Hidden Matching problem~\cite{bar2004exponential} (i.e., they output the result of evaluating $f(x^{(j)})$ for Bob's block $j$, where $j$ is arbitrary), which is at least as hard as the $f$-Boolean Hidden Partition problem. However, unlike the Hidden Matching problem, the output is not correct with certainty, but only with probability strictly greater than $1/2$.
    
    In Sections~\ref{sec:sec4},~\ref{sec:sec5} and~\ref{sec:sec6.1} we prove classical and quantum lower bounds. In Section \ref{sec:sec4} we reduce the Boolean Hidden Matching problem to the $f$-Boolean Hidden Partition problem and prove that for almost all symmetric Boolean function $f$ with $\operatorname{sdeg}(f)\geq 2$ the $f\operatorname{-BHP}^{\alpha}_{t,n}$ problem requires at least $\Omega(\sqrt{n})$ bits of communication. The only functions for which the reduction does not work are the Not All Equal functions on an odd number of bits, i.e., $\operatorname{NAE}:\{-1,1\}^t\to\{-1,1\}$ defined by $\operatorname{NAE}(x) = -1$ if $|x|\in\{0,t\}$ and $\operatorname{NAE}(x) = 1$ otherwise, with $t$ odd.
    \begin{customthm}{\ref{thr:thr4}}
        Let $f:\{-1,1\}^t\to\{-1,1\}$ be a symmetric Boolean function with $\operatorname{sdeg}(f) \geq 2$. If $f$ is not the NAE function on an odd number of bits, then any bounded-error classical communication protocol for solving the $f\operatorname{-BHP}^{\alpha}_{t,n}$ problem needs to communicate at least $\Omega(\sqrt{n/(\alpha t)})$ bits.
    \end{customthm}
    Finally, in Sections~\ref{sec:sec5} and~\ref{sec:sec6.1} we generalise the Fourier analysis methods from \cite{gavinsky2007exponential, verbin2011streaming, shi2012limits} to prove a partial result on the hardness of the $f\operatorname{-BHP}^{\alpha}_{t,n}$ problem, both classically and quantumly. Ideally we would like to prove that any bounded-error classical and quantum protocols would need to communicate $\Omega(n^{1-1/d})$ bits and $\Omega(n^{1-2/d})$ qubits, respectively, where $\operatorname{sdeg}(f) = d$. What we obtained is this result but with $d$ being the \emph{pure high degree} of $f$. A Boolean function $f$ is said to have pure high degree ($\operatorname{phdeg}(f)$) $d$ if $\widehat{f}(S) = 0$ for all $|S| = 0,1,\dots,d-1$, where $\widehat{f}(S) = \frac{1}{2^n}\sum_{x\in\{-1,1\}^n} f(x)\chi_S(x)$ is the Fourier transform of $f$ and $\chi_S(x) = \prod_{i\in S} x_i$, with $S\subseteq [n]$, is a character function. It is possible to prove that $\operatorname{phdeg}(f) \leq \operatorname{sdeg}(f)$~\cite{sherstov2011pattern} (see also~\cite[Theorem~1]{bun2015dual}), so our result is a step towards proving a lower bound for all functions with sign-degree $\geq 2$.
    \begin{customthm}{\ref{thr:thr5.1} + \ref{thr:thr6.1}}
        Let $f:\{-1,1\}^t\to\{-1,1\}$ be a Boolean function. If $\operatorname{phdeg}(f) = d \geq 2$, then, for sufficiently small $\alpha > 0$ that does not depend on $n$, any bounded-error classical communication protocol for solving the $f\operatorname{-BHP}^{\alpha}_{t,n}$ problem needs to communicate at least $\Omega(n^{1-1/d})$ bits. If $\operatorname{phdeg}(f) = d \geq 3$, then, for sufficiently small $\alpha > 0$ that does not depend on $n$, any bounded-error quantum communication protocol for solving the $f\operatorname{-BHP}^{\alpha}_{t,n}$ problem needs to communicate at least $\Omega(n^{1-2/d})$ qubits. 
    \end{customthm}
    The classical proof in Section~\ref{sec:sec5} follows the general idea from~\cite{gavinsky2007exponential,verbin2011streaming}. First, we apply Yao's minimax principle~\cite{yao1977probabilistic}, which says that it suffices to prove a lower bound for a \emph{deterministic} protocol under a hard probability distribution on Alice and Bob's inputs. Alice sends a message to Bob. If the length of the message sent is $c$, then the inputs for which Alice could have sent that specific message define a set $A$ of about $2^{n-c}$ $x$'s. From Bob's perspective, he knows that the random variable $X$ corresponding to Alice's bit-string is uniformly distributed in a set $A$ and he knows his permutation $\sigma$, hence his knowledge of the random variable $B_f(X,\sigma)$ is described by the distributions
    \begin{align*}
        p_\sigma(z) = \frac{|\{x\in A|B_f(x,\sigma) = z\}|}{|A|} ~\text{and}~ q_\sigma(z) = \frac{|\{x\in A| B_f(x,\sigma) = \overline{z}\}|}{|A|}.
    \end{align*}
    It is well known that the best success probability for distinguishing two distributions $q_1$ and $q_2$ with one sample is $1/2 + \|q_1 - q_2\|_{\operatorname{tvd}}/4$, where $\|q_1 - q_2\|_{\operatorname{tvd}} := \sum_i |q_1(i)-q_2(i)|$. Therefore the bias of the protocol, i.e., the protocol's successful probability minus a half, is equal to $\|p_\sigma - q_\sigma\|_{\operatorname{tvd}}/4$. We show that, if the amount of communication from Alice to Bob is not large enough, then $\|p_\sigma - q_\sigma\|_{\operatorname{tvd}}$ is small, and thus Bob cannot differentiate between $p_\sigma$ and $q_\sigma$. Upper-bounding the total variation distance is done via Fourier analysis, using the inequality of Kahn, Kalai and Linial~\cite{kahn1988influence}.

    The quantum proof in Section~\ref{sec:sec6.1} follows the same idea from~\cite{shi2012limits}, but the second half of the proof was modified by borrowing ideas from the classical proofs~\cite{gavinsky2007exponential,verbin2011streaming}. Yao's minimax principle is still applied, and the best strategy for Bob in determining $b$ conditioned on his input $(\sigma, w)$ is no more than the chance to distinguish between the two statistical ensembles of Alice's messages, where a message corresponds to a quantum state $\rho_x$ encoding Alice's string $x$, selected according to $b$. It is known that any protocol that tries to distinguish two quantum states $\rho_0$ and $\rho_1$ appearing with probability $p$ and $1-p$, respectively, by a POVM has bias at most $\|p\rho_0 - (1-p)\rho_1\|_{\operatorname{tr}}/2$~\cite{helstrom1976quantum}. The bias is then upper-bounded by using Fourier analysis of matrix-valued functions, in particular by the matrix-valued hypercontractive inequality of Ben-Aroya, Regev and de Wolf~\cite{ben2008hypercontractive}. 
    
    The difference between the classical and quantum lower bound proofs was considerably reduced in our paper, e.g.\ the quantum lower bound proof now borrows the idea from~\cite{gavinsky2007exponential,verbin2011streaming} of splitting a sum bounding the bias in two parts instead of performing it at once as in~\cite{shi2012limits}, which actually leads to a better $\alpha$ dependence. Still some differences persist. Apart from the obvious generalisation of Fourier analysis to matrix-valued functions, the Fourier analysis in the quantum lower bound proof is performed directly on the encoding messages and not on the pre-images of a fixed encoding message, since there is no clear quantum analogue of conditioning on a message.
    The main technical difficulty we faced compared to~\cite{gavinsky2007exponential,verbin2011streaming} is that the Fourier coefficients of Bob's distributions $p_{\sigma}(z)$ and $q_\sigma(z)$ are not nicely related to just one Fourier coefficient of the characteristic function of $A$ any more, but instead to a more complicated sum of many coefficients. This requires us to carefully bound various combinatorial terms occurring in the proof and to use our freedom to choose $\alpha$ less than a (potentially small) constant depending on the Boolean function $f$.
    
    In Section~\ref{sec:sec6} we analyse the limitations of our techniques and show that under the uniform distribution, which was used as the `hard' distribution during the proof of Theorem~\ref{thr:thr5.1}, we cannot obtain a lower bound depending on the sign-degree instead of the pure high degree.
    
    We finally remark that the one-way communication complexity separations we found can easily be used to obtain corresponding separations in the streaming model, similarly to~\cite{gavinsky2007exponential,verbin2011streaming}.
    
    \section{Classical and Quantum Upper Bounds}\label{sec:sec3}
    
    The sign-representing polynomial $p$ allows us to build efficient classical and quantum communication protocols depending on $\operatorname{sdeg}(f)$. We shall show that there is an efficient $O(\log{n})$-bit classical communication protocol for solving the $f$-$\operatorname{BHP}^{\alpha}_{t,n}$ problem if $\operatorname{sdeg}(f) \leq 1$. On the other hand, we shall show that there is an efficient $O(\log{n})$-qubit quantum communication protocol for solving the $f$-$\operatorname{BHP}^{\alpha}_{t,n}$ problem if $\operatorname{sdeg}(f) \leq 2$.
    
    Intuitively, the contrast between $\operatorname{sdeg}(f) \leq 1$ for the classical protocols and $\operatorname{sdeg}(f) \leq 2$ for the quantum protocols comes from the nature of probability distributions in each case. One wants to relate the probability of outputting the right answer with the sign-representing polynomial $p$: if $p(x) > 0$, we would like to output $1$ with high probability, and if $p(x) < 0$, we would like to output $-1$ with high probability. Classically, this probability distribution can only depend linearly on the bits of $x$, but quantumly, since this probability distribution arises from the square of a quantum amplitude, it can have a quadratic dependence on the bits of $x$.
    
    \subsection{Classical Upper Bound}
    
    Consider the $f$-$\operatorname{BHP}^{\alpha}_{t,n}$ problem for $f:\{-1,1\}^t\to\{-1,1\}$ with $\operatorname{sdeg}(f) \leq 1$. Let $p:\{-1,1\}^t \to [-1,1]$ be a normalized sign-representing polynomial for $f$. Hence we can write
	\begin{align*}
		p(x) = \alpha_0 + \sum_{i=1}^t \alpha_i x_i
	\end{align*}
	with $(\alpha_i)_{i=0}^t \in \mathbb{R}$. Let $\beta = \min_x |p(x)|$ be the bias of $p$.
	
	In the following, denote by $R^1_\epsilon(\mathcal{P})$ and $Q^1_\epsilon(\mathcal{P})$ the classical and quantum communication cost of the protocol $\mathcal{P}$ in bits and qubits, respectively, and denote by $R^1_\epsilon(f) = \min_{\mathcal{P}}R^1_\epsilon(\mathcal{P})$ and $Q^1_\epsilon(f) = \min_{\mathcal{P}}Q^1_\epsilon(\mathcal{P})$ the minimum classical and quantum communication cost, respectively, over all one-way protocols $\mathcal{P}$ without shared randomness that solve a communication problem $f$ with failure probability $0 < \epsilon < 1/2$. Define $R^1(f) := R^1_{1/3}(f)$ and $Q^1(f) := Q^1_{1/3}(f)$ (or any constant bounded away from $0$ and $1/2$).
	\begin{theorem}\label{thr:thr2.1}
	    $R^1_\epsilon(f\operatorname{-BHP}^{\alpha}_{t,n}) = O\left((\frac{t}{\alpha\beta})^2\log\frac{1}{\epsilon}\log{n}\right)$ if $\operatorname{sdeg}(f) \leq 1$, where $\beta$ is the bias of any normalized sign-representing polynomial for $f$ with degree $\leq 1$.
	\end{theorem}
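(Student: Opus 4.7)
The plan is to build, for each of $m$ independent rounds, a $\pm 1$ estimator $Y$ of $b$ whose expectation is $\Omega(\alpha\beta/t)$ in the direction of $b$, using only $O(\log n)$ bits of communication per round, and then to output $\sgn(\sum_\ell Y_\ell)$ and invoke Hoeffding's inequality. A small but essential preliminary is to observe that $|\alpha_0| \leq 1$ and $|\alpha_k| \leq 1$ for every $k\in[t]$: comparing $p$ on two inputs $z,z'$ that differ only in coordinate $k$ gives $2|\alpha_k| = |p(z)-p(z')| \leq 2$, and averaging $p$ over the two all-signs inputs gives $|\alpha_0|\leq 1$. Consequently $|\alpha_0 + t\alpha_k z_k| \leq 1+t$ for every $k$ and $z_k$, so after dividing by $1+t$ this quantity can be realized as the mean of a $\pm 1$ random variable.

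Next I would describe the protocol explicitly. In each round, Alice samples $i \in [n]$ uniformly at random and sends $(i,x_i)$, costing $O(\log n)$ bits. Bob computes $\sigma(i)$: if $\sigma(i) > \alpha n$ he outputs a uniformly random $\pm 1$ bit (this round carries no information); otherwise he writes $\sigma(i) = (j-1)t + k$ with $j\in[\alpha n/t]$ and $k\in[t]$, recognizes that $x_i = \sigma(x)^{(j)}_k$, and outputs a $\pm 1$ random variable $Y$ with $\mathbb{E}[Y] = w_j(\alpha_0 + t\alpha_k x_i)/(1+t)$, which is well-defined by the preceding bound.

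The analysis of $\mathbb{E}[Y]$ proceeds as follows. Conditioning on $\sigma(i) \leq \alpha n$ (probability $\alpha$), the pair $(j,k)$ is uniform on $[\alpha n/t]\times[t]$. Averaging over $k$ first gives $\mathbb{E}_k\bigl[w_j(\alpha_0+t\alpha_k \sigma(x)^{(j)}_k)/(1+t)\bigr] = w_j\, p(\sigma(x)^{(j)})/(1+t)$, because $\mathbb{E}_k[t\alpha_k \sigma(x)^{(j)}_k] = \sum_k \alpha_k \sigma(x)^{(j)}_k$. The promise $w_j f(\sigma(x)^{(j)}) = b$ together with $\sgn(p) = f$ and $|p| \geq \beta$ then yields $w_j\,p(\sigma(x)^{(j)}) = b\,|p(\sigma(x)^{(j)})| \geq b\beta$; averaging also over $j$ and reinstating the conditioning factor produces $b\cdot\mathbb{E}[Y] \geq \alpha\beta/(1+t) = \Omega(\alpha\beta/t)$. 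Hoeffding's inequality applied to the $m$ i.i.d.\ $\pm 1$ samples then forces $\sgn(\sum_\ell Y_\ell) = b$ with failure probability at most $\epsilon$ once $m = \Theta\bigl((t/(\alpha\beta))^2\log(1/\epsilon)\bigr)$, giving total communication $O\bigl((t/(\alpha\beta))^2\log(1/\epsilon)\log n\bigr)$. No single step is a serious obstacle; the only nontrivial ingredient is the unbiased-estimator identity $\mathbb{E}_k[t\alpha_k \sigma(x)^{(j)}_k] = \sum_k \alpha_k \sigma(x)^{(j)}_k$, which is exactly the reason that sign-degree $1$ (rather than anything higher) suffices classically, and it is precisely the coefficient bounds $|\alpha_0|,|\alpha_k| \leq 1$ that make the target mean lie in $[-1,1]$ so that the protocol can be implemented.
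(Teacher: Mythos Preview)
Your proposal is correct and follows essentially the same approach as the paper: Alice sends $m$ uniformly random indexed bits, Bob builds per-round estimators whose expectation (after averaging over the uniform position $k$ within a block) reproduces $w_j\,p(\sigma(x)^{(j)})$ up to a factor of order $1/t$, and Hoeffding's inequality over the $m$ rounds gives the claimed communication bound. The only cosmetic difference is that you first convert the real-valued estimator into a $\pm1$ random variable (hence your explicit verification that $|\alpha_0|,|\alpha_k|\le 1$ so the target mean lies in $[-1,1]$), whereas the paper keeps the real-valued estimator $X(i)=(\alpha_{k(i)}x_i+\alpha_0/t)w_{j(i)}$ and applies the concentration bound directly; this changes nothing substantive.
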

    \begin{proof}
        Consider the following protocol: Alice picks $m = O\big((\frac{t}{\alpha\beta})^2\log\frac{1}{\epsilon}\big)$ bits from $x$ uniformly at random (with replacement) and sends them to Bob, together with their indices. Let $I$ and $\{x_i\}_{i\in I}$ be the indices and bitvalues sent, respectively. Let $j(i) = \lceil \sigma(i)/t\rceil$ and $k(i) \equiv \sigma(i)~(\text{mod}~t)$ for all $i\in I$, where $\sigma\in \mathbb{S}_n$ is Bob's permutation. Define the random variable
        \begin{align*}
            X(i) := \begin{cases}
                (\alpha_{k(i)} x_i + \alpha_0/t) w_{j(i)} &~\text{if}~ \sigma(i)\in[\alpha n/t],\\
                0 &~\text{if}~ \sigma(i)\notin[\alpha n/t],
            \end{cases}
        \end{align*}
        where $\alpha_0$ and $\alpha_k$ are the zeroth order and $x_k$'s coefficients, respectively, from the sign-representing polynomial $p$, and define $X := \sum_{i\in I} X(i)$. Bob then computes $\sgn(X)$. If the sign is $1$, then he outputs $B_f(x,\sigma) = w$, and if the sign is $-1$, then he outputs $B_f(x,\sigma) = \overline{w}$.
        
        To see why the protocol works, we calculate the expected value of the random variable $X$.
    \begin{align*}
	    \displaybreak[0]\mathbb{E}[X] &= m \cdot\mathbb{E}_i[X(i)]\\\displaybreak[0]
	    &= \alpha m\cdot\mathbb{E}_{i} [(\alpha_{k(i)}x_i + \alpha_0/t) w_{j(i)}]\\\displaybreak[0]
	    &= \alpha m\cdot\mathbb{E}_j\big[\mathbb{E}_k[\alpha_k\sigma(x)^{(j)}_k + \alpha_0/t]w_j\big]\\\displaybreak[0]
	    &= \alpha m\cdot\mathbb{E}_j\left[\frac{p(\sigma(x)^{(j)})}{t}w_j\right]\\\displaybreak[0]
	    &= \alpha m\frac{t}{n}\sum_{j=1}^{n/t}\frac{p(\sigma(x)^{(j)})}{t}w_j\\\displaybreak[0]
	    &= \frac{\alpha m}{n}\left[\sum_{j:w_j=1}p(\sigma(x)^{(j)}) - \sum_{j:w_j=-1}p(\sigma(x)^{(j)})\right].
	\end{align*}
	If $f(\sigma(x)^{(j)}) = w_j$, then $w_j = 1 \implies p(\sigma(x)^{(j)}) \geq \beta > 0$ and $w_j = -1 \implies p(\sigma(x)^{(j)}) \leq -\beta < 0 $. Therefore
	\begin{align*}
	    \mathbb{E}[X] &\geq \frac{\alpha m}{n}\left[\sum_{j:w_j=1}\beta - \sum_{j:w_j=-1}-\beta\right] = \frac{\alpha \beta}{t}m.
	\end{align*}
	If, on the other hand, $f(\sigma(x)^{(j)}) = -w_j$, then $w_j = 1 \implies p(\sigma(x)^{(j)}) \leq -\beta < 0$ and $w_j = -1 \implies p(\sigma(x)^{(j)}) \geq \beta > 0$. Therefore
	\begin{align*}
	    \mathbb{E}[X] &\leq \frac{\alpha m}{n}\left[\sum_{j:w_j=1}-\beta - \sum_{j:w_j=-1}\beta\right] = -\frac{\alpha\beta}{t}m.
	\end{align*}
    By using a Chernoff bound~\cite[Theorem~1.1]{dubhashi09} of the type $\operatorname{Pr}[X > \mathbb{E}[X] + u], \operatorname{Pr}[X < \mathbb{E}[X] - u] \leq e^{-2u^2/m}$ with $u > 0$ and setting $u = \pm \mathbb{E}[X] > 0$, we can make 
    \begin{align*}
        \operatorname{Pr}[X > 0 | B_f(x,\sigma) = \overline{w}], ~\operatorname{Pr}[X < 0 | B_f(x,\sigma) = w] \leq \epsilon
    \end{align*}
    by taking $m = O\big((\frac{t}{\alpha\beta})^2\log\frac{1}{\epsilon}\big)$. Therefore Alice and Bob can decide if $B_f(x,\sigma) = w$ or $B_f(x,\sigma) = \overline{w}$ with error probability $\epsilon$ and $O\big((\frac{t}{\alpha\beta})^2\log\frac{1}{\epsilon}\log{n}\big)$ bits of communication.
    \end{proof}

    \subsection{Quantum Upper Bound}
    
    Consider the $f$-$\operatorname{BHP}^{\alpha}_{t,n}$ problem for $f:\{-1,1\}^t\to\{-1,1\}$ with $\operatorname{sdeg}(f) = 2$. Let $p:\{-1,1\}^t \to [-1,1]$ be a normalized sign-representing polynomial for $f$. Let $\beta = \min_x |p(x)|$ be again the bias of $p$. In the following, define $\widetilde{x} = (1,x_1,\dots,x_t)$.
    
    In order to obtain our upper bound, we borrow the idea of \emph{block-multilinear} polynomials from~\cite{aaronson2018forrelation, aaronson2016polynomials}, which are also known as multilinear forms. We say that a polynomial $q$ of degree $k$ is block-multilinear if its variables $x_1,\dots,x_N$ can be partitioned into $k$ blocks $R_1,\dots,R_k$, such that every monomial of $q$ contains exactly one variable from each block. As a special case, a block-multilinear polynomial $q$ of degree $2$ can be written as
    \begin{align*}
        q(x_1,\dots,x_n,y_1,\dots,y_m) = \sum_{\substack{i\in[n]\\j\in[m]}} a_{ij}x_iy_j
    \end{align*}
    with variables in the first block labeled as $x_1,\dots,x_n$ and the variables in the second block labeled as $y_1,\dots,y_m$. Defining the matrix $A = (a_{ij})_{i\in[n],j\in[m]}$, then
    \begin{align*}
        q(x,y) = x^TAy
    \end{align*}
    for all $x\in\mathbb{R}^n$ and $y\in\mathbb{R}^m$. We say that $q$ is \emph{bounded} if $|q(x,y)| \leq 1$ for all $x\in\{-1,1\}^n,y\in\{-1,1\}^m$. This translates to
    \begin{align*}
        \max_{\substack{x\in\{-1,1\}^n\\y\in\{-1,1\}^m}} \left|\sum_{\substack{i\in[n]\\j\in[m]}} a_{ij}x_iy_j\right| \leq 1,
    \end{align*}
    i.e., $\|A\|_{\infty\to1} \leq 1$. More generally, in the following, given a complex matrix $M$, we define $\|M\|_{p\to q} := \sup_{x\neq 0}\|Mx\|_q/\|x\|_p$ and $\|M\| := \|M\|_{2\to 2}$ is its spectral norm.
    
    We shall also use the following results (a similar version of Theorem~\ref{thr:thr3.1} was proven in~\cite{MR3540825}). 
    \begin{lemma}[{\cite[Lemma 7]{aaronson2016polynomials}}]\label{lem:lem1}
        Given an $m\times m$ complex matrix $M$, there exists a unitary $U$ (on a possibly larger space with basis $|1\rangle, \dots, |k\rangle$ for some $k\geq m$) such that, for any unit vector $|y\rangle = \sum_{i=1}^m \alpha_i|i\rangle$, $U|y\rangle = \frac{M|y\rangle}{\|M\|} + |\phi\rangle$, where $|\phi\rangle$ consists of basis states $|i\rangle$, $i > m$ only.
	\end{lemma}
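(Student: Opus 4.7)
The plan is to exhibit $U$ as a \emph{unitary dilation} of the contraction $A := M/\|M\|$, which satisfies $\|A\|\le 1$ by construction. I will work on $\mathbb{C}^{2m}$ (so $k=2m$), embed the original $\mathbb{C}^m$ as $\operatorname{span}(|1\rangle,\dots,|m\rangle)$ via $|y\rangle \mapsto (|y\rangle,0)^T$, and build $U$ as a $2\times 2$ operator-block matrix whose top-left block is $A$. The remaining blocks are chosen to make $U$ unitary; once this is done, the statement of the lemma can be read off immediately by computing the action on $(|y\rangle,0)^T$.

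The explicit ansatz I would write down is Halmos' dilation,
\begin{equation*}
U \;=\; \begin{pmatrix} A & \sqrt{I - AA^*} \\ \sqrt{I - A^*A} & -A^* \end{pmatrix},
\end{equation*}
where the (positive) operator square roots are well-defined because $I-AA^*$ and $I-A^*A$ are both positive semidefinite (equivalent to $\|A\|\le 1$). Applying $U$ to $(|y\rangle,0)^T$ gives $(A|y\rangle,\sqrt{I-A^*A}|y\rangle)^T$: the first block equals $M|y\rangle/\|M\|$ and lives in $\operatorname{span}(|1\rangle,\dots,|m\rangle)$, while the second block defines $|\phi\rangle := \sqrt{I-A^*A}|y\rangle$, supported entirely on $|m+1\rangle,\dots,|2m\rangle$, which is exactly the conclusion of the lemma.

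The remaining task is to check that $U$ is unitary. The two diagonal blocks of $U^*U$ collapse instantly: $A^*A+(I-A^*A)=I$ and $(I-AA^*)+AA^*=I$. The off-diagonal blocks require the intertwining identity $A^*\sqrt{I-AA^*} = \sqrt{I-A^*A}\,A^*$, which is the only non-trivial point. I would derive it from the polynomial identity $A^* p(AA^*) = p(A^*A) A^*$ (straightforward induction on $\deg p$, since $A^*(AA^*)^k = (A^*A)^k A^*$) followed by Weierstrass approximation of $\sqrt{1-x}$ on $[0,1]$. This is the main technical step, but it is standard and short.

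If even that is to be avoided, an alternative route proceeds via the singular value decomposition $A = V\Sigma W^*$ with $\Sigma = \operatorname{diag}(\sigma_1,\dots,\sigma_m)$ and $\sigma_i \le 1$: conjugating by $V\oplus I$ and $W\oplus I$ reduces the dilation to a direct sum of the scalar $2\times 2$ matrices $\left(\begin{smallmatrix}\sigma_i & \sqrt{1-\sigma_i^2}\\ \sqrt{1-\sigma_i^2} & -\sigma_i\end{smallmatrix}\right)$, each of which is manifestly unitary. Either route delivers the lemma; the only genuine content is that $\|A\|\le 1$ is precisely what makes the off-diagonal square roots exist.
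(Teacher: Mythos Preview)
Your proof is correct. The paper itself does not prove this lemma---it is quoted verbatim from \cite{aaronson2016polynomials} and used as a black box---so there is no in-paper argument to compare against. Your Halmos dilation construction is the standard route (and is essentially what the cited reference does as well): normalise $M$ to a contraction $A$, dilate to a $2m\times 2m$ unitary, and read off the block structure. Both your verification via the intertwining identity $A^*\sqrt{I-AA^*}=\sqrt{I-A^*A}\,A^*$ and the alternative SVD diagonalisation are valid; nothing is missing.
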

    \begin{theorem}[{\cite[Theorem 4]{aaronson2016polynomials}}]\label{thr:thr3.1}
	Let $p:\{-1,1\}^t \to [-1,1]$ be a sign-representing polynomial for $f$ with $\operatorname{sdeg}(f) = 2$. Then there is a block-multilinear polynomial $\widetilde{p}: \mathbb{R}^{2(t+1)}\to\mathbb{R}$ such that $\widetilde{p}(\widetilde{x},\widetilde{x}) = p(x)$ for any $x\in\{-1,1\}^t$, and $|\widetilde{p}(y)|\leq 3$ for any $y\in \{-1,1\}^{2(t+1)}$. 
	\end{theorem}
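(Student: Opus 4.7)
The plan is to construct $\widetilde{p}$ by symmetrically distributing each monomial of $p$ across the two blocks, and then to bound it via bilinearity, polarization, and multilinearity. Writing $p$ in its (unique) multilinear form as $p(x) = \alpha_{\emptyset} + \sum_{i=1}^t \alpha_i x_i + \sum_{i<j} \alpha_{ij} x_i x_j$ and indexing the two blocks by $y, z \in \mathbb{R}^{t+1}$ with leading coordinates $y_0, z_0$, I would set
\begin{align*}
\widetilde{p}(y, z) \,:=\, \alpha_{\emptyset}\, y_0 z_0 \,+\, \sum_{i=1}^{t} \alpha_{i}\, \frac{y_0 z_i + y_i z_0}{2} \,+\, \sum_{1 \leq i < j \leq t} \alpha_{ij}\, \frac{y_i z_j + y_j z_i}{2}.
\end{align*}
This is block-multilinear of degree $2$ by inspection, and substituting $y = z = \widetilde{x} = (1, x_1, \ldots, x_t)$ gives $\widetilde{p}(\widetilde{x}, \widetilde{x}) = p(x)$ after a one-line check.

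For the bound on $\{-1,1\}^{2(t+1)}$, I would first use bilinearity to reduce to the case $y_0 = z_0 = 1$: writing $y = y_0 \widetilde{x}$ and $z = z_0 \widetilde{x'}$ with $x_i := y_0 y_i$ and $x'_i := z_0 z_i$ in $\{-1,1\}^t$, one gets $\widetilde{p}(y, z) = y_0 z_0 \widetilde{p}(\widetilde{x}, \widetilde{x'})$, so it suffices to bound $|\widetilde{p}(\widetilde{x}, \widetilde{x'})|$ for arbitrary $x, x' \in \{-1,1\}^t$. Because $\widetilde{p}$ is a symmetric bilinear form between the two blocks, the polarization identity yields
\begin{align*}
\widetilde{p}(\widetilde{x}, \widetilde{x'}) \,=\, \widetilde{p}(u, u) - \widetilde{p}(v, v), \qquad u := \tfrac{1}{2}(\widetilde{x} + \widetilde{x'}), \quad v := \tfrac{1}{2}(\widetilde{x} - \widetilde{x'}),
\end{align*}
and by construction $u, v \in \{-1, 0, 1\}^{t+1}$ with $u_0 = 1$ and $v_0 = 0$.

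Since $u_0 = 1$, the first term equals $p(u_1, \ldots, u_t)$, and because $p$ is multilinear its supremum on $[-1,1]^t$ is attained on $\{-1,1\}^t$, so $|\widetilde{p}(u, u)| \leq 1$. Since $v_0 = 0$, every monomial of $\widetilde{p}$ involving a leading coordinate vanishes, leaving $\widetilde{p}(v, v) = \sum_{i<j} \alpha_{ij} v_i v_j =: p_{=2}(v_1, \ldots, v_t)$, the pure quadratic part of $p$. The key technical step is to bound $p_{=2}$: using the symmetrisation identity $p_{=2}(w) = \tfrac{1}{2}(p(w) + p(-w)) - \alpha_{\emptyset}$ valid for any $w$, together with $|\alpha_{\emptyset}| = |\mathbb{E}_x p(x)| \leq 1$, I would conclude $|p_{=2}| \leq 2$ on $\{-1,1\}^t$, and multilinearity extends this to $[-1,1]^t$. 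Combining yields $|\widetilde{p}(\widetilde{x}, \widetilde{x'})| \leq 1 + 2 = 3$. The main obstacle — and the source of the constant $3$ rather than a cleaner $2$ — is precisely this quadratic piece: unlike $p$ itself, $p_{=2}$ is not normalised and can in general have uniform norm as large as $2$ on the cube, so any route through a polarization of this form must pay the extra factor.
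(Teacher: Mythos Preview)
Your argument is correct. The paper does not supply its own proof of this statement; it is quoted verbatim as Theorem~4 from Aaronson, Ben-David and Kothari~\cite{aaronson2016polynomials} and used as a black box. So there is no in-paper proof to compare against.

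For completeness: your construction of $\widetilde{p}$ as the symmetric bilinear form associated with the degree-$2$ multilinear expansion of $p$ is exactly the natural one, and each step of your bound checks out. The reduction to $y_0=z_0=1$ via bilinearity is fine since $y_0,z_0\in\{-1,1\}$; the polarization identity $\widetilde{p}(u+v,u-v)=\widetilde{p}(u,u)-\widetilde{p}(v,v)$ holds because your $\widetilde{p}$ is symmetric by construction; and the two pieces are bounded correctly. The only point worth flagging is a tacit hypothesis you (rightly) used: the statement as written in the paper says only that $\operatorname{sdeg}(f)=2$, not that $\deg p\le 2$, but the surrounding text makes clear that $p$ is taken to be a minimum-degree (hence degree-$2$) normalized sign-representing polynomial, so your expansion $p(x)=\alpha_\emptyset+\sum_i\alpha_ix_i+\sum_{i<j}\alpha_{ij}x_ix_j$ is legitimate. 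Your closing remark about why the constant is $3$ rather than $2$ is also accurate: the homogeneous quadratic part $p_{=2}$ need not be bounded by $1$ on the cube, and $2$ is the best one can say via the even-part identity you used.
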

    Let $\widetilde{p}: \mathbb{R}^{2(t+1)}\to\mathbb{R}$ be the block-multilinear polynomial of degree $2$ obtained from the sign-representing polynomial $p$ of $f$ according to Theorem~\ref{thr:thr3.1}. It can be written as
	\begin{align}\label{eq:eq3.1}
		\widetilde{p}(x,y) = \sum_{i,j\in[t+1]} a_{ij}x_iy_j = x^T Ay,
	\end{align}
	where $A = (a_{ij})_{i,j\in[t+1]}$.
    
    With these in hands, we present our upper bound.
    \begin{theorem}\label{thr:thr3.2}
       $Q^1_\epsilon(f\operatorname{-BHP}^{\alpha}_{t,n}) = O\left((\frac{t}{\alpha\beta})^2\log\frac{1}{\epsilon}\log{n}\right)$ if $\operatorname{sdeg}(f) \leq 2$, where $\beta$ is the bias of any normalized sign-representing polynomial for $f$ with degree $\leq 2$.
    \end{theorem}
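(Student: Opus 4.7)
The plan is to design a one-way protocol in which Alice sends a single $O(\log n)$-qubit superposition of her bits (augmented by a constant-term component), Bob uses $\sigma$ to collapse this state onto the augmented block vector $|\widetilde{\phi_j}\rangle$ of a uniformly random block $j\in[n/t]$, and then runs a Hadamard test with a unitary built from the block-multilinear polynomial $\widetilde{p}$ of Theorem~\ref{thr:thr3.1}. The test returns a bit whose expected value is proportional to $p(\sigma(x)^{(j)})$, which by the sign-representation property detects $f(\sigma(x)^{(j)})$; multiplying by $w_j$ gives a bit biased toward $b$, and independent repetitions followed by a majority vote finish the proof via a Chernoff bound.

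Concretely, I would have Alice send $|\psi\rangle = \frac{1}{\sqrt{t+1}}|0\rangle + \sqrt{\frac{t}{n(t+1)}}\sum_{i=1}^n x_i|i\rangle$ on the $(n+1)$-dimensional index space. Bob applies $|i\rangle\mapsto|\sigma(i)\rangle$ (and the identity on $|0\rangle$), then, using a fresh ancilla register for a block label, applies a unitary that maps $|0\rangle\mapsto\frac{1}{\sqrt{n/t}}\sum_{j'=1}^{n/t}|0\rangle|j'\rangle$ and $|k\rangle\mapsto|k'\rangle|j'\rangle$ where $(j',k')=(\lceil k/t\rceil,\,((k-1)\bmod t)+1)$; a direct calculation shows the resulting global state equals $\frac{1}{\sqrt{n/t}}\sum_{j'=1}^{n/t}|\widetilde{\phi_{j'}}\rangle|j'\rangle$ with $|\widetilde{\phi_{j'}}\rangle=\frac{1}{\sqrt{t+1}}\bigl(|0\rangle+\sum_{k=1}^t\sigma(x)^{(j')}_k|k\rangle\bigr)$. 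Measuring the block register yields a uniformly random $j\in[n/t]$; if $j>\alpha n/t$ Bob outputs a uniform bit, otherwise he holds $|\widetilde{\phi_j}\rangle$ and applies Lemma~\ref{lem:lem1} to the matrix $A$ of Eq.~(\ref{eq:eq3.1}) to obtain a unitary $U$ with
\[
\langle\widetilde{\phi_j}|U|\widetilde{\phi_j}\rangle \;=\; \frac{\langle\widetilde{\phi_j}|A|\widetilde{\phi_j}\rangle}{\|A\|} \;=\; \frac{\widetilde{p}(\widetilde{\sigma(x)^{(j)}},\widetilde{\sigma(x)^{(j)}})}{(t+1)\|A\|} \;=\; \frac{p(\sigma(x)^{(j)})}{(t+1)\|A\|},
\]
using the identity $\widetilde{p}(\widetilde{y},\widetilde{y})=p(y)$ from Theorem~\ref{thr:thr3.1}. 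A standard Hadamard test with controlled-$U$ on $|\widetilde{\phi_j}\rangle$ then outputs $0$ with probability $\tfrac{1}{2}+\tfrac{p(\sigma(x)^{(j)})}{2(t+1)\|A\|}$, so Bob's guess of $f(\sigma(x)^{(j)})$ has bias at least $\beta/(2(t+1)\|A\|)$, and multiplication by $w_j$ converts this into a guess of $b$ with overall bias $\Omega(\alpha\beta/(t\|A\|))$.

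The observation that tightens the final bound from $(t\|A\|/(\alpha\beta))^2$ to $(t/(\alpha\beta))^2$ is that $\|A\|=\|A\|_{2\to 2}$ is bounded by a universal constant: every $\ell^2$-unit vector lies in the $\ell^\infty$-unit ball, so $\|A\|_{2\to 2}\leq\|A\|_{\infty\to 1}$, and the uniform bound $|\widetilde{p}(y,z)|\leq 3$ on $\{-1,1\}^{t+1}\times\{-1,1\}^{t+1}$ from Theorem~\ref{thr:thr3.1} gives $\|A\|_{\infty\to 1}\leq 3$. A Chernoff bound on $O((t/(\alpha\beta))^2\log(1/\epsilon))$ independent repetitions of the one-shot protocol then yields error probability at most $\epsilon$, for a total of $O((t/(\alpha\beta))^2\log(1/\epsilon)\log n)$ qubits. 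The step I expect to be most delicate is the coherent block-splitting unitary in Bob's first stage: one must verify that the amplitude bookkeeping yields precisely $|\widetilde{\phi_j}\rangle$ with no stray orthogonal components, since any leftover mass in the augmented register would spoil the clean identity $\langle\widetilde{\phi_j}|A|\widetilde{\phi_j}\rangle=p(\sigma(x)^{(j)})/(t+1)$ that drives the Hadamard-test analysis.
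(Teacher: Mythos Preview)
Your proposal is correct and follows essentially the same approach as the paper: Alice sends copies of a superposition augmented with a constant-term component, Bob collapses onto a uniformly random block and runs a Hadamard test with the unitary from Lemma~\ref{lem:lem1} applied to the matrix $A$ of $\widetilde{p}$, the bias is $\Omega(\alpha\beta/t)$ via $\|A\|\leq\|A\|_{\infty\to1}\leq 3$ from Theorem~\ref{thr:thr3.1}, and a Chernoff bound finishes. The only cosmetic difference is that the paper encodes the constant term with $n/t$ extra basis states $|n+1\rangle,\dots,|n+n/t\rangle$ (one per block) and collapses via a POVM, whereas you use a single $|0\rangle$ and a coherent block-splitting isometry; your amplitude bookkeeping is correct and yields the same post-measurement state $|\widetilde{\phi_j}\rangle$.
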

    \begin{proof}
        The case $\operatorname{sdeg}(f) \leq 1$ follows from Theorem~\ref{thr:thr2.1}. Assume then that $\operatorname{sdeg}(f) = 2$ and consider the following protocol: Alice sends to Bob $m = O\big((\frac{t}{\alpha\beta})^2\log\frac{1}{\epsilon}\big)$ copies of the $O(\log n)$-qubit quantum state
        \begin{align*}
	 	    |\psi_A\rangle = \frac{1}{\sqrt{n+n/t}}\left(\sum_{i=1}^{n}x_i|i\rangle + \sum_{i=1}^{n/t}|n+i\rangle\right).
	    \end{align*}
	 Bob measures each of them by using the POVM
	  \begin{align*}
	 	\left\{|n+j\rangle\langle n+j| + \sum_{i=(j-1)t+1}^{jt}|\sigma^{-1}(i)\rangle\langle \sigma^{-1}(i)|\right\}_{j\in[n/t]}, 
	 \end{align*}
	 where $\sigma\in \mathbb{S}_n$ is his permutation, and attaches a qubit in the state $|+\rangle$ to each of the resulting states. Let $I\in [n/t]^m$ be the sequence of indices from his measurements. Then his state is
	 \begin{align*}
	 	|\psi_B\rangle = \bigotimes_{j\in I}|+\rangle|\psi^{(j)} \rangle,
	 \end{align*}
	 where
	 \begin{align*}
	 	|\psi^{(j)}\rangle = \frac{1}{\sqrt{t+1}}\left(|n+j\rangle + \sum_{i=(j-1)t+1}^{jt}x_{\sigma^{-1}(i)}|\sigma^{-1}(i)\rangle\right).
	 \end{align*}

	 Let $A$ be the $(t+1)\times(t+1)$ matrix from the representation of $\widetilde{p}$ according to Eq.~(\ref{eq:eq3.1}). Lemma~\ref{lem:lem1} guarantees the existence of a unitary $U_j$ such that $U_j|\psi^{(j)}\rangle = \frac{A|\psi^{(j)}\rangle}{\|A\|} + |\phi^{(j)}\rangle$, with $\langle \phi^{(j)}|\psi^{(j)}\rangle = 0$. Bob then applies a controlled $U_j$ gate onto each $|+\rangle_j|\psi^{(j)} \rangle$ to obtain
	 \begin{align*}
	     \bigotimes_{j\in I}CU_j|\psi_B\rangle = \bigotimes_{j\in I}\left(\frac{1}{\sqrt{2}}|0\rangle|\psi^{(j)}\rangle + \frac{1}{\sqrt{2}}|1\rangle U_j|\psi^{(j)}\rangle\right)
	 \end{align*}
	 and then performs a Hadamard gate on the first qubit of each of the subsystems $I$ and measures them on the computational basis. Let $m_j\in\{0,1\}$ be the result of the measurement for block $j\in I$. Define the random variable 
    \begin{align*}
        X(j) := \begin{cases}
            (-1)^{m_j}w_j &~\text{if}~ j\in[\alpha n/t],\\
            0 &~\text{if}~ j\notin[\alpha n/t],
        \end{cases}
    \end{align*}
    and define $X := \sum_{j\in I}X(j)$. Bob then computes $\operatorname{sgn}(X)$: if $\operatorname{sgn}(X) > 0$, he outputs that $B_f(x,\sigma) = w$, and if $\operatorname{sgn}(X) < 0$, he outputs that $B_f(x,\sigma) = \overline{w}$.
	 
	 To see why the protocol works, first note that the probability of measuring $0$ is
	 \begin{align*}
	 	\frac{1}{2} + \frac{1}{2}\langle \psi^{(j)}|U|\psi^{(j)}\rangle = \frac{1}{2} + \frac{\langle \psi^{(j)}|A|\psi^{(j)}\rangle}{2\|A\|} = \frac{1}{2} + \frac{\widetilde{p}(\widetilde{\sigma(x)^{(j)}},\widetilde{\sigma(x)^{(j)}})}{2\|A\|(t+1)} = \frac{1}{2} + \frac{p(\sigma(x)^{(j)})}{2\|A\|(t+1)}.
	 \end{align*}
	 The rest of the argument is similar to Theorem~\ref{thr:thr2.1}. Recall that $m = |I|$. The expected value of $X$ is
	 \begin{align*}
	    \mathbb{E}[X] &= m \cdot\mathbb{E}_j[X(j)]\\
	    &= \alpha m\cdot\mathbb{E}_j [(-1)^{m_j} w_{j}]\\
	    &= \alpha m\frac{t}{n}\sum_{j=1}^{n/t}\left(\operatorname{Pr}[m_j=0] - \operatorname{Pr}[m_j=1]\right) w_{j}\\
	    &= \alpha m\frac{t}{n}\left[\sum_{j:w_j=1}\frac{p(\sigma(x)^{(j)})}{\|A\|(t+1)} - \sum_{j:w_j=-1}\frac{p(\sigma(x)^{(j)})}{\|A\|(t+1)}\right].
	\end{align*}
	If $f(\sigma(x)^{(j)}) = w_j$, then $w_j = 1 \implies p(\sigma(x)^{(j)}) \geq \beta > 0$ and $w_j = -1 \implies p(\sigma(x)^{(j)}) \leq -\beta < 0 $. Therefore
	\begin{align*}
	    \mathbb{E}[X] &\geq \alpha m\frac{t}{n}\frac{1}{\|A\|(t+1)}\left[\sum_{j:w_j=1}\beta - \sum_{j:w_j=-1}-\beta\right] = \frac{\alpha \beta m}{\|A\|(t+1)}.
	\end{align*}
	If, on the other hand, $f(\sigma(x)^{(j)}) = -w_j$, then $w_j = 1 \implies p(\sigma(x)^{(j)}) \leq -\beta < 0$ and $w_j = -1 \implies p(\sigma(x)^{(j)}) \geq \beta > 0$. Therefore
	\begin{align*}
	    \mathbb{E}[X] &\leq \alpha m\frac{t}{n}\frac{1}{\|A\|(t+1)}\left[\sum_{j:w_j=1}-\beta - \sum_{j:w_j=-1}\beta\right] = -\frac{\alpha \beta m}{\|A\|(t+1)}.
	\end{align*}
    By using a Chernoff bound~\cite{dubhashi09} of the type $\operatorname{Pr}[X > \mathbb{E}[X] + u], \operatorname{Pr}[X < \mathbb{E}[X] - u] \leq e^{-2u^2/m}$ with $u > 0$ and setting $u = \pm \mathbb{E}[X] > 0$, we can make 
    \begin{align*}
        \operatorname{Pr}[X > 0 | B_f(x,\sigma) = \overline{w}], ~\operatorname{Pr}[X < 0 | B_f(x,\sigma) = w] \leq \epsilon
    \end{align*}
    by taking $m = O\big((\frac{t}{\alpha\beta})^2\log\frac{1}{\epsilon}\big)$, where we use that $\|A\| \leq \|A\|_{\infty\to 1} \leq 3$ according to Theorem \ref{thr:thr3.1} (note that $\frac{\|Ax\|_2}{\|x\|_2} \leq \frac{\|Ax\|_1}{\|x\|_\infty}$, and taking the supremum over all $x$ on both sides gives $\|A\| \leq \|A\|_{\infty\to 1}$). Therefore Alice and Bob can decide if $B_f(x,\sigma) = w$ or $B_f(x,\sigma) = \overline{w}$ with error probability $\epsilon$ and $O\big((\frac{t}{\alpha\beta})^2\log\frac{1}{\epsilon}\log{n}\big)$ qubits of communication.
    \end{proof}
    
    \section{Reductions from the Boolean Hidden Matching problem}
    \label{sec:sec4}
	
	As mentioned before, in~\cite{gavinsky2007exponential} it was proven that the Boolean Hidden Partition problem using PARITY on $2$ bits (aka the BHM problem) is hard to solve, i.e., $R^1(\operatorname{BHM}^\alpha_n) = \Omega(\sqrt{n/\alpha})$. With this result alone it is possible to prove that the $f$-Boolean Hidden Partition problem for almost any symmetric Boolean function with $\operatorname{sdeg}(f) \geq 2$ is at least as hard to solve. This can be achieved via a simple reduction from the BHM problem to the $f\operatorname{-BHP}^{\alpha}_{t,n}$ problem with symmetric functions, which we shall show in this section.
	
	For this section, in a slight abuse of notation we define $|x| = |\{i:x_i=-1\}|$ to be the ``Hamming weight'' of $x$. Let $s,t\in\mathbb{N}$, with $s\leq t$. Consider a symmetric Boolean function $f_s:\{-1,1\}^t \to \{-1,1\}$ such that (without loss of generality) $f_s(1^n) = 1$ and
	\begin{align}\label{eq:eq5.1}
		f_s(x) = \begin{cases}
			+1 \quad\text{if}~ 0 \leq |x| \leq \theta_1 ~\text{or}&\theta_{2i} < |x| \leq \theta_{2i+1}, i= 1,2\dots, \lfloor s/2\rfloor,\\
			-1 \quad\text{if}&\theta_{2j-1} < |x| \leq \theta_{2j}, j=1,2,\dots,\lfloor (s+1)/2\rfloor,
		\end{cases}
	\end{align} 
	where $\theta_k\in\mathbb{N}$ for $k=1,\dots,s+1$ and $0\leq \theta_1 < \dots < \theta_{s} < \theta_{s+1} = t$ and $\theta_{k+1} - \theta_k \geq 1$ for all $k=1,\dots,s$. The following result from \cite{aspnes1994expressive} tells us that $\operatorname{sdeg}(f_s) = s$.
	
	\begin{lemma}[{\cite[Lemma 2.6]{aspnes1994expressive}}]
	    If $f$ is a symmetric function, then $\operatorname{sdeg}(f)$ is equal to the number of times $f$ changes sign when expressed as a univariate function in $\sum_i x_i$.
	\end{lemma}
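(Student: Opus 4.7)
The plan is to establish $\operatorname{sdeg}(f) = s$, where $s$ denotes the number of sign changes of $f$ viewed as a univariate function of $S(x) := \sum_{i=1}^t x_i$ on the $t+1$ integer values $\{-t,-t+2,\ldots,t-2,t\}$, by proving matching upper and lower bounds.

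For the upper bound $\operatorname{sdeg}(f) \leq s$, let $\sigma_1 < \sigma_2 < \cdots < \sigma_s$ be half-integers placed strictly between the consecutive values of $S(x)$ at which $f$ flips sign, and define
\[
p(x_1,\ldots,x_t) \;=\; \epsilon \prod_{i=1}^{s} \bigl(S(x) - \sigma_i\bigr),
\]
where $\epsilon \in \{-1,+1\}$ is chosen so that $\operatorname{sgn}(p(1^t)) = f(1^t)$. Each factor is linear in $x$, so $\deg(p) = s$. Because no $\sigma_i$ is an attainable value of $S(x)$, $p$ is non-vanishing on $\{-1,1\}^t$, and by construction $p$ flips sign at exactly the same places as $f$ along the sequence of $S$-values; hence $p$ sign-represents $f$ with degree $s$.

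For the lower bound $\operatorname{sdeg}(f) \geq s$, let $p$ sign-represent $f$ with degree $d$. Reducing modulo $x_i^2=1$, assume without loss of generality that $p$ is multilinear of degree at most $d$. Symmetrize by setting $q(x) := \frac{1}{t!}\sum_{\pi \in \mathbb{S}_t} p(\pi(x))$. Since $f$ is symmetric, for each $x \in \{-1,1\}^t$ all the values $p(\pi(x))$ share the non-zero sign of $f(x)$, so the average is non-zero and has the same sign; hence $q$ still sign-represents $f$, is symmetric and multilinear, and has degree at most $d$. Writing $q = \sum_{T\subseteq[t]} c_T \prod_{i\in T} x_i$, symmetry forces $c_T$ to depend only on $|T|$, so $q = \sum_{k=0}^{d} c_k\, e_k$ where $e_k$ is the $k$-th elementary symmetric polynomial. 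Newton's identities, combined with the fact that on $\{-1,1\}^t$ the odd power sums all equal $S(x)$ and the even ones equal constants, express each $e_k$ on $\{-1,1\}^t$ as a univariate polynomial of degree at most $k$ in $S(x)$. Collecting terms yields $q(x) = Q(S(x))$ for some univariate polynomial $Q$ with $\deg(Q) \leq d$.

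Finally, $Q$ inherits the sign pattern of $f$ across the $t+1$ lattice points that $S(x)$ attains on $\{-1,1\}^t$, so it changes sign $s$ times along that integer sequence. By the intermediate value theorem, $Q$ has at least $s$ distinct real roots, forcing $d \geq \deg(Q) \geq s$. The main obstacle is the symmetrization step together with the reduction to a univariate polynomial in $S(x)$: one must verify that the average really is non-zero and sign-represents $f$, and that the symmetric multilinear polynomial obtained can indeed be rewritten as a polynomial in $S(x)$ of at most the same degree. Once these Minsky--Papert-style manipulations are in place, both directions close quickly.
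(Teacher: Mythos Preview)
The paper does not supply its own proof of this lemma: it is quoted verbatim as \cite[Lemma~2.6]{aspnes1994expressive} and used as a black box in Section~\ref{sec:sec4}. So there is nothing in the paper to compare your argument against line by line.

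That said, your proof is correct and is essentially the classical Minsky--Papert argument that underlies the result in the cited reference. The upper bound via the product $\epsilon\prod_{i=1}^{s}(S(x)-\sigma_i)$ with half-integer thresholds $\sigma_i$ is standard and sound. For the lower bound, the two points you flag as potential obstacles are indeed the only places requiring care, and both go through: the symmetrization $q(x)=\frac{1}{t!}\sum_{\pi}p(\pi(x))$ preserves the sign because every summand shares the nonzero sign of $f(x)$ (using the symmetry of $f$), and the rewriting of a symmetric multilinear polynomial in terms of $S(x)$ follows from Newton's identities together with the observation that on $\{-1,1\}^t$ the power sums satisfy $p_j=S(x)$ for $j$ odd and $p_j=t$ for $j$ even, so by induction each $e_k$ is a univariate polynomial in $S(x)$ of degree at most $k$. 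The final root-counting step via the intermediate value theorem then gives $d\geq s$.
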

	
	In order to reduce $\operatorname{BHM}^\alpha_n$ to $f_s\operatorname{-BHP}^{\alpha}_{t,n}$ we first need to reduce the function PARITY to $f_s$, i.e., we want that $\forall x'\in\{-1,1\}^2$, $\exists x\in\{-1,1\}^t$ such that $f_s(x) = \operatorname{PARITY}(x')$. The key combinatorial step to achieve this is shown in the next lemma.
	
	\begin{lemma}\label{lem:lem5.1}
	    Let $f_s:\{-1,1\}^t \to \{-1,1\}$ be the symmetric Boolean function from Eq.~(\ref{eq:eq5.1}) with $s\geq 2$ such that either $2| t$ or $\theta_2 - \theta_1 < t-1$. Then there exists $a,b\in\mathbb{N}$ such that $\forall x'\in\{-1,1\}^2$, $\exists x\in\{-1,1\}^t$ such that $f_s(x) = \operatorname{PARITY}(x')$ and $|x| = a|x'| + b$.
	\end{lemma}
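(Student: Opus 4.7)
The plan is to exhibit explicit $a,b \in \mathbb{N}$ so that the Hamming-weight progression $b < a+b < 2a+b$ hits the sign pattern $+,-,+$ of $f_s$. I would anchor $b = \theta_1$, which places $w_1 = \theta_1$ at the right endpoint of the first $+$ region; the requirement $w_2 = \theta_1 + a \in (\theta_1,\theta_2]$ is then equivalent to $1 \leq a \leq \theta_2 - \theta_1$, while I must further choose $a$ so that $w_3 = \theta_1 + 2a$ lies in some subsequent $+$ region without exceeding $t$.

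First I would target the nearest $+$ region above $\theta_2$, namely $(\theta_2,\theta_3]$ when $s \geq 3$ and $(\theta_2,t]$ when $s = 2$ (which is also of the form $(\theta_{2},\theta_{3}]$ with $\theta_3 = t$ by Eq.~(\ref{eq:eq5.1})). This reduces the task to locating an integer in
\[
    I := \Bigl(\tfrac{\theta_2-\theta_1}{2},\ \min\bigl\{\theta_2-\theta_1,\ \tfrac{R-\theta_1}{2}\bigr\}\Bigr],
\]
where $R = \theta_3$ or $R = t$. A direct calculation shows $|I| = \min\{\tfrac{\theta_2-\theta_1}{2},\ \tfrac{R-\theta_2}{2}\}$, so $I$ contains an integer whenever both $\theta_2-\theta_1 \geq 2$ and $R - \theta_2 \geq 2$. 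The boundary cases, where some gap equals $1$, either place an integer at the right endpoint of $I$ via a parity check on $\theta_2 - \theta_1$, or fall into the handful of delicate configurations handled next.

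Second, for the remaining pathological configurations I would fall back on the symmetric alternative $b = 0$, $a = t/2$, yielding the AP $0,\ t/2,\ t$. Integrality of $a$ requires the hypothesis $2 \mid t$. Here $f_s(0) = +1$ is automatic; $f_s(t/2) = -1$ is secured by small shifts of $a$ within $(\theta_1,\theta_2]$ together with targeting a further $+$-block $(\theta_{2i},\theta_{2i+1}]$ when necessary; and $f_s(t) = +1$ holds whenever the topmost block of $f_s$ is $+$ (which is the generic situation once the NAE exception is excluded). If the hypothesis $2\mid t$ is unavailable and only $\theta_2 - \theta_1 < t-1$ is guaranteed, then at least one of $\theta_1 \geq 1$ or $\theta_2 \leq t-2$ must hold. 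In the first branch I would shift the anchor from $b = \theta_1$ to $b = \theta_1 - 1$, enlarging the search interval and breaking the parity obstruction; in the second branch $R - \theta_2 \geq 2$ already guarantees $|I| \geq 1$ in the first step.

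The main obstacle will be the case analysis when $s = 2$, since this is precisely where the lone genuinely excluded configuration -- NAE on an odd number of bits, corresponding to $\theta_1 = 0$, $\theta_2 = t-1$, $t$ odd -- lives. In this configuration both the anchored construction ($I$ reduces to $(\tfrac{t-1}{2},\tfrac{t}{2}]$, which contains no integer when $t$ is odd) and the symmetric construction ($a = t/2$ is non-integral) fail, explaining why it must be excluded. The twin conditions ``$2 \mid t$ or $\theta_2 - \theta_1 < t-1$'' are calibrated exactly to forbid this single obstruction: in either branch of the disjunction, one of the two constructions succeeds. Assembling these branches while carefully tracking integrality and parity then delivers the explicit $a,b \in \mathbb{N}$ required.
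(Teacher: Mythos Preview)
Your strategy differs from the paper's and, as written, has a genuine gap for $s\geq 3$. The paper does \emph{not} anchor at the first block at all: it splits on whether there exists \emph{any} $k^\ast\in\{1,\dots,s-1\}$ with $\theta_{k^\ast+1}-\theta_{k^\ast}$ odd. If so, it sets $b=\theta_{k^\ast}$ and $a=(\theta_{k^\ast+1}-\theta_{k^\ast}+1)/2$ (after a ``flip'' so that the middle block has the right sign); if every such gap is even, it sets $a=(\theta_2-\theta_1+2)/2$ and $b=\theta_1-\delta$ with $\delta=\mathbf{1}[\theta_1\neq 0]$. Both branches give explicit integers and a two-line verification.

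Your plan instead always anchors at $b\in\{\theta_1,\theta_1-1,0\}$ and targets the block $(\theta_2,\theta_3]$. This breaks when the first gap $\theta_2-\theta_1$ is even but a later gap is odd. Concretely, take $s=3$, $t=10$, $(\theta_1,\theta_2,\theta_3)=(2,6,7)$. Then your interval is $I=(2,\min\{4,5/2\}]=(2,5/2]$, which contains no integer; your fallback $b=0,\ a=t/2=5$ gives $2a+b=10$ with $f_s(10)=-1$ since the top block is $-$ for odd $s$; and your alternative $b=\theta_1-1$ is only invoked in the $2\nmid t$ branch, which does not apply here. More structurally, two of your claims fail for $s\geq 3$: ``$f_s(t)=+1$ once NAE is excluded'' is false whenever $s$ is odd, and ``$\theta_2\le t-2\Rightarrow R-\theta_2\ge 2$'' conflates $R=\theta_3$ with $t$. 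The paper's key idea that you are missing is precisely to move the anchor to whatever odd gap exists (not to insist on the first one), which is what makes the case analysis collapse to two lines.
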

	\begin{proof}
	    The condition that $\forall x'\in\{-1,1\}^2$, $\exists x\in\{-1,1\}^t$ such that $f_s(x) = \text{PARITY}(x')$ and $|x| = a|x'| + b$ is equivalent to
	    \begin{align}
		\begin{cases}
			|x'| = 0 \implies f_s(b) = 1,\\
			|x'| = 1 \implies f_s(a+b) = -1,\\
			|x'| = 2 \implies f_s(2a+b) = 1.
		\end{cases}\label{eq:eq5.2}
	    \end{align}
	    We divide the proof into two cases: either there exists $k^\ast\in\{1,\dots,s-1\}$ such that $\theta_{k^\ast+1} - \theta_{k^\ast}$ is odd or there does not exist such a $k^\ast$. Suppose first that such $k^\ast$ exists. Without loss of generality we can assume that $f_s(x) = -1$ for $\theta_{k^\ast} < |x| \leq \theta_{k^\ast+1}$, otherwise we just flip the values of $f_s$. Then we set
    	\begin{align*}
		\begin{cases}
			a = (\theta_{k^\ast+1} - \theta_{k^\ast} + 1)/2,\\
			b = \theta_{k^\ast}.
		\end{cases}
	    \end{align*}
	    First, $a,b\in \mathbb{N}$. Second, $a+b = (\theta_{k^\ast+1} + \theta_{k^\ast} + 1)/2$, hence $\theta_{k^\ast} < a+b \leq \theta_{k^\ast+1}$, since $\theta_{k^\ast+1} - \theta_{k^\ast} \geq 1$. And third, $2a+b = \theta_{k^\ast+1} + 1 \leq \theta_{k^\ast+2}$. Therefore all conditions from Eqs.~(\ref{eq:eq5.2}) are satisfied.
	
	    Now suppose that for all $k=1,\dots,s-1$ we have $2|(\theta_{k+1} - \theta_k)$. Define the bit $\delta = 1$ if $\theta_1 \neq 0$ and $\delta = 0$ otherwise, and set
	    \begin{align*}
		\begin{cases}
			a = (\theta_{2} - \theta_{1} + 2)/2,\\
			b = \theta_{1} - \delta.
		\end{cases}
	    \end{align*}
	    First, $a,b\in \mathbb{N}$ (note that $\delta = 1 \implies \theta_1 > 0$). Second, $a+b = (\theta_{2} + \theta_{1} + 2 - 2\delta)/2$, hence $\theta_{1} < a+b \leq \theta_{2}$, since $\theta_{2} - \theta_{1} \geq 2$ by hypothesis. And third, $2a+b = \theta_{2} + 2 - \delta \leq t$ since $\theta_2 - \theta_1 \leq t-1$ and $\theta_2 < t$ (so that $\theta_2 = t-1 \implies \delta = 1$). Therefore all conditions from Eqs.~(\ref{eq:eq5.2}) are satisfied.
    \end{proof}
	
	If $2\nmid t$ and $\theta_2 - \theta_1 = t-1$, then our conditions give us
	\begin{align*}
		\begin{cases}
			b = 0,\\
			0 < a < t,\\
			2a = t,
		\end{cases}
	\end{align*}
	and we see that the condition $2a = t$ cannot be fulfilled by $a\in\mathbb{N}$. This case corresponds to the symmetric Boolean function Not All Equal (NAE), defined by $\operatorname{NAE}(x) = 1$ if $|x| \in \{0,t\}$ and $\operatorname{NAE}(x) = -1$ otherwise, with $t$ odd.
	
	Given the reduction above from PARITY to $f_s$, we can construct our reduction from $\operatorname{BHM}^\alpha_n$ to $f_s\operatorname{-BHP}^{\alpha}_{t,n}$. In the following theorem, we write $y^{(j;t)} = (y_{(j-1)t+1},y_{(j-1)t+2},\dots,y_{jt})\in\{-1,1\}^t$ to stress the size $t$ of the blocks from $y$ in order to better differentiate between strings in the reduction.
	
	\begin{theorem}\label{thr:thr4}
	    Let $f_s:\{-1,1\}^t \to \{-1,1\}$ be the symmetric Boolean function from Eq.~(\ref{eq:eq5.1}) with $s\geq 2$ such that either $2| t$ or $\theta_2 - \theta_1 < t-1$. Then $R^1(f_s\operatorname{-BHP}^{\alpha}_{t,n}) = \Omega(\sqrt{n/(\alpha t)})$.
	\end{theorem}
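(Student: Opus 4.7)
The plan is to reduce $\operatorname{BHM}^\alpha_{n'}$ with $n' = 2n/t$ to $f_s\operatorname{-BHP}^\alpha_{t,n}$, using Lemma~\ref{lem:lem5.1} to engineer each block of size $t$ so that $f_s$ on that block mimics $\operatorname{PARITY}$ on a matched pair. Let $a, b \in \mathbb{N}$ be the constants from Lemma~\ref{lem:lem5.1}, so that for every $x' \in \{-1,1\}^2$ and the corresponding string $x \in \{-1,1\}^t$ with $|x| = a|x'| + b$, we have $f_s(x) = \operatorname{PARITY}(x') = x'_1 x'_2$. Note in particular that $b$, $a+b$, and $2a+b$ are at most $t$ by the proof of that lemma, so $t - 2a - b \geq 0$.

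First I would describe Alice's local construction. Given $x' \in \{-1,1\}^{n'}$, she produces $x \in \{-1,1\}^n$ by concatenating, for every $i \in [n']$, $a$ copies of $x'_i$, followed by $bn'/2$ fixed coordinates equal to $-1$, followed by $(t-2a-b)n'/2$ fixed coordinates equal to $+1$. The total length is $an' + (t-2a)n'/2 = tn'/2 = n$, as required (assuming $n'$ is even and $n = tn'/2$, which we may arrange).

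Next I would describe Bob's reduction. Given the matching $M = \{(i_\ell, j_\ell)\}_{\ell=1}^{\alpha n'/2}$, Bob picks a permutation $\sigma \in \mathbb{S}_n$ so that, for every $\ell \in [\alpha n'/2]$, the $\ell$-th block $\sigma(x)^{(\ell;t)}$ consists of the $a$ copies of $x'_{i_\ell}$, the $a$ copies of $x'_{j_\ell}$, exactly $b$ of the fixed $-1$s, and exactly $t-2a-b$ of the fixed $+1$s; the remaining coordinates of $x$ (the $a$ copies of the unmatched $x'_i$'s together with the leftover fixed coordinates) are placed arbitrarily in blocks $\ell > \alpha n'/2$. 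Because $\alpha n/t = \alpha n'/2$, only the first $\alpha n'/2$ blocks matter for $B_{f_s}(x, \sigma)$. By construction the $\ell$-th block has Hamming weight $a |\{x'_{i_\ell}, x'_{j_\ell}\}_{=-1}| + b$, so Lemma~\ref{lem:lem5.1} yields $f_s(\sigma(x)^{(\ell;t)}) = x'_{i_\ell} x'_{j_\ell} = z_\ell$, and hence $B_{f_s}(x, \sigma) = z$. Bob then sets his $f_s\operatorname{-BHP}$ string equal to Bob's BHM string $w$; the BHM promise $z \circ w = b^{\alpha n'/2}$ becomes exactly the $f_s\operatorname{-BHP}$ promise $B_{f_s}(x,\sigma) \circ w = b^{\alpha n/t}$, and the two problems share the same answer $b$.

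Since the reduction is purely local (Alice uses only $x'$ and Bob uses only $(M,w)$), any one-way bounded-error protocol for $f_s\operatorname{-BHP}^\alpha_{t,n}$ yields one for $\operatorname{BHM}^\alpha_{n'}$ with the same number of bits of communication. Applying the $\Omega(\sqrt{n'/\alpha})$ lower bound for $\operatorname{BHM}^\alpha_{n'}$ from~\cite{gavinsky2007exponential} with $n' = 2n/t$ gives $R^1(f_s\operatorname{-BHP}^\alpha_{t,n}) = \Omega(\sqrt{n/(\alpha t)})$, as claimed. The only genuinely delicate point is the combinatorial feasibility of the block assignment, i.e.\ the existence of suitable $a, b$ with $2a+b \leq t$, which is precisely what Lemma~\ref{lem:lem5.1} supplies (and which breaks down exactly for the excluded NAE-on-odd-$t$ functions); every other step is bookkeeping.
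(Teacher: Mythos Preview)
Your proposal is correct and follows essentially the same reduction as the paper: both build an $f_s\operatorname{-BHP}^\alpha_{t,n}$ instance from a $\operatorname{BHM}^\alpha_{n'}$ instance (with $n' = 2n/t$) by replicating each of Alice's bits $a$ times and padding with fixed $\pm 1$ coordinates so that every block of size $t$ has Hamming weight $a|x'|+b$, then invoking Lemma~\ref{lem:lem5.1}. Your version is in fact slightly cleaner in the bookkeeping (your assignment of the $b$ fixed $-1$'s and $t-2a-b$ fixed $+1$'s per block directly yields the weight $a|x'|+b$, whereas the paper's text has the roles of the $1$'s and $-1$'s swapped relative to its own Hamming-weight convention).
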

	\begin{proof}
	    Suppose towards a contradiction that $R^1(f_s\operatorname{-BHP}^{\alpha}_{t,n}) = o(\sqrt{n/(\alpha t)})$, i.e., there exists a protocol $\mathcal{P}$ that solves $f_s\operatorname{-BHP}^{\alpha}_{t,n}$ with $o(\sqrt{n/(\alpha t)})$ bits of communication. We are going to show that such protocol would allow Alice and Bob to solve the $\operatorname{BHM}^\alpha_n$ problem with $o(\sqrt{n/\alpha})$ bits of communication, a contradiction.
	    
	    Let $a,b\in\mathbb{N}$ be the numbers used in reducing PARITY to $f_s$ in Lemma~\ref{lem:lem5.1}. Alice increases her bit-string $x\in\{-1,1\}^n$ as follows: she makes $a$ copies of $x$, obtaining $x^{a}\in\{-1,1\}^{an}$, where $x^a = xx\cdots x$ represents $x$ repeated $a$ times. She then adds $bn/2$ times the bit $1$, obtaining $x^a1^{bn/2}$. Finally, she adds $(t-2a-b)n/2$ times the bit $-1$, to finally obtain $x_f = x^a1^{bn/2}\text{-}1^{(t-2a-b)n/2}$. Note that $x_f \in \{-1,1\}^{nt/2}$.
	    
	    Bob, on the other hand, increases his permutation $\sigma\in \mathbb{S}_n$ to a new permutation $\sigma_f\in \mathbb{S}_{nt/2}$. In order to describe how he does this, we shall ease the notation by referring to the $j$-th block $(\pi^{-1}((j-1)t+1),\dots,\pi^{-1}(jt))$ of a given permutation $\pi$ as $(B_{j,1},\dots,B_{j,t})$. With this notation, the $j$-th block $(B_{j,1},B_{j,2})$ of Bob's permutation $\sigma$ is mapped to the $j$-th block
	    \begin{align*}
	        \begin{multlined}[b][\textwidth]
	        \bigg(B_{j,1},~B_{j,2},~n+B_{j,1},~n+B_{j,2},~\dots,~(a-1)n+B_{j,1},~(a-1)n + B_{j,2}, \\
	        an + j,~an + j + \frac{n}{2},~\dots,~an + j + (t-2a-1)\frac{n}{2}\bigg)
	        \end{multlined}
	    \end{align*}
	    %
	    of the new permutation $\sigma_f$. Note that the new block has $t$ elements, as expected.
	    
	    Consider the block strings $\sigma_f(x_f)^{(j;t)}\in\{-1,1\}^t$ and $\sigma(x)^{(j;2)}\in\{-1,1\}^2$, with $j=1,\dots,n/2$. By construction we have that $|\sigma_f(x_f)^{(j;t)}| = a|\sigma(x)^{(j;2)}| + b$ and, according to Lemma~\ref{lem:lem5.1}, we get $f_s(\sigma_f(x_f)^{(j;t)}) = \operatorname{PARITY}(\sigma(x)^{(j;2)})$ for all $j=1,\dots,n/2$. Hence we see that every instance of the problem $\text{BHM}_n^{\alpha}:\{-1,1\}^n \to \{-1,1\}$ is mapped to an instance of the problem $f_s\text{-BHP}^{\alpha}_{t,nt/2}:\{-1,1\}^{nt/2} \to \{-1,1\}$. Therefore, the use of a protocol $\mathcal{P}$ that solves $f_s\text{-BHP}^{\alpha}_{t,tn/2}$ with $o(\sqrt{n/\alpha })$ bits of communication would lead to $R^1(\operatorname{BHM}_n^\alpha) = o(\sqrt{n/\alpha })$, which is impossible. Thus $R^1(f_s\operatorname{-BHP}^{\alpha}_{t,n}) = \Omega(\sqrt{n/(\alpha t)})$.
	\end{proof}
	
	\section{Preliminaries on Fourier Analysis}
    
    Since the next two sections rely heavily on discrete Fourier analysis~\cite{o2014analysis,de2008brief}, we shall provide some standard definitions and results from the area. Given functions $f,g\in\{-1,1\}^n\to\mathbb{R}$, we define their inner product by
    \begin{align*}
        \langle f,g\rangle = \operatorname*{\mathbb{E}}_{x\in\{-1,1\}^n}[f(x)g(x)] = \frac{1}{2^n}\sum_{x\in\{-1,1\}^n}f(x)g(x),
    \end{align*}
    and their $\ell_2$-norm by
    \begin{align*}
        \|f\|_2^2 = \langle f,f\rangle = \frac{1}{2^n}\sum_{x\in\{-1,1\}^n}f(x)^2.
    \end{align*}
    The Fourier transform of $f$ is a function $\widehat{f}:\{0,1\}^n\to\mathbb{R}$ defined by
    \begin{align*}
        \widehat{f}(S) = \langle f,\chi_S\rangle = \frac{1}{2^n}\sum_{x\in\{-1,1\}^n} f(x)\chi_S(x),
    \end{align*}
    where $\chi_S:\{-1,1\}^n\to\{-1,1\}$ is the character function $\chi_S(x) = \prod_{i\in S} x_i$ with $S\subseteq [n]$. The quantity $\widehat{f}(S)$ is the Fourier coefficient of $f$ corresponding to $S$. Thus we can write
    \begin{align*}
        f = \sum_{S\subseteq[n]}\widehat{f}(S)\chi_S.
    \end{align*}
    In our lower bound proofs we will need Parseval's identity and the important KKL inequality~\cite{kahn1988influence}.
    \begin{lemma}[Parseval's]\label{lem:lem4.1}
        For every function $f:\{-1,1\}^n\to\mathbb{R}$ we have $\|f\|_2^2 = \sum_{S\subseteq[n]}\widehat{f}(S)^2$.
    \end{lemma}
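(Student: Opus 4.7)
The plan is to derive Parseval's identity directly from the orthonormality of the character functions $\{\chi_S\}_{S \subseteq [n]}$ with respect to the inner product $\langle \cdot, \cdot \rangle$ defined in the preliminaries, which is the standard textbook route.

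First I would substitute the Fourier expansion $f = \sum_{S \subseteq [n]} \widehat{f}(S)\chi_S$ into the definition $\|f\|_2^2 = \langle f, f\rangle$ and expand to obtain
\begin{align*}
\|f\|_2^2 = \Big\langle \sum_{S \subseteq [n]} \widehat{f}(S)\chi_S,\ \sum_{T \subseteq [n]} \widehat{f}(T)\chi_T\Big\rangle = \sum_{S,T \subseteq [n]} \widehat{f}(S)\widehat{f}(T)\,\langle \chi_S, \chi_T\rangle,
\end{align*}
using bilinearity of the inner product. Everything then reduces to evaluating $\langle \chi_S, \chi_T\rangle$.

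Next I would prove the key orthonormality relation $\langle \chi_S, \chi_T \rangle = \delta_{S,T}$. Since $x_i^2 = 1$ for all $x \in \{-1,1\}^n$, the product $\chi_S(x)\chi_T(x) = \prod_{i \in S} x_i \prod_{j \in T} x_j$ collapses to $\chi_{S \triangle T}(x)$, where $S \triangle T$ is the symmetric difference. Thus $\langle \chi_S, \chi_T\rangle = \mathbb{E}_{x \in \{-1,1\}^n}[\chi_{S \triangle T}(x)]$. When $S = T$, the symmetric difference is empty and the integrand is identically $1$, giving the value $1$. When $S \neq T$, there exists some coordinate $i \in S \triangle T$, and by independence and uniformity of the coordinates $\mathbb{E}_x[\chi_{S \triangle T}(x)] = \prod_{j \in S \triangle T}\mathbb{E}[x_j] = 0$ since $\mathbb{E}[x_j] = 0$ for each $j$.

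Plugging $\langle \chi_S, \chi_T\rangle = \delta_{S,T}$ back into the double sum collapses it to the diagonal, yielding $\|f\|_2^2 = \sum_{S \subseteq [n]} \widehat{f}(S)^2$, which is the claim. There is no real obstacle here: the only nontrivial ingredient is the orthonormality of the characters, and that itself follows from the two elementary facts $x_i^2 = 1$ and $\mathbb{E}[x_i] = 0$ under the uniform distribution on $\{-1,1\}$. The proof is essentially a one-line computation once the Fourier expansion and character orthonormality are in place.
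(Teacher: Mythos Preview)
Your proof is correct and is the standard textbook argument via orthonormality of the characters. The paper itself states Lemma~\ref{lem:lem4.1} without proof, treating it as a well-known preliminary, so there is nothing to compare against; your route is exactly the canonical one.
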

    \begin{lemma}[KKL]
        \label{lem:kkl}
        Let $f:\{-1,1\}^n\to\{-1,0,1\}$ and $A = \{x|f(x) \neq 0\}$. Then, for every $\delta \in[0,1]$,
        \begin{align*}
            \sum_{S\subseteq[n]} \delta^{|S|}\widehat{f}(S)^2 \leq \left(\frac{|A|}{2^n}\right)^{\frac{2}{1+\delta}}.
        \end{align*}
    \end{lemma}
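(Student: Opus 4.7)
The plan is to derive this as a direct corollary of the Bonami--Beckner hypercontractive inequality (cited in the excerpt as~\cite{bonami1970etude,beckner1975inequalities}). Recall that the noise operator $T_\rho$ acts on real-valued functions on $\{-1,1\}^n$ diagonally in the Fourier basis by $T_\rho g = \sum_{S\subseteq[n]} \rho^{|S|}\widehat{g}(S)\chi_S$. Bonami--Beckner states that whenever $1\leq p\leq q$ and $\rho \leq \sqrt{(p-1)/(q-1)}$, one has $\|T_\rho g\|_q \leq \|g\|_p$ for every such $g$. The whole proof consists of picking parameters so that this inequality specializes to exactly the claimed bound.

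First, I would rewrite the left-hand side as $\|T_{\sqrt{\delta}} f\|_2^2$. Indeed, by Parseval's identity (Lemma~\ref{lem:lem4.1}),
\begin{align*}
    \|T_{\sqrt{\delta}} f\|_2^2 = \sum_{S\subseteq[n]} (\sqrt{\delta})^{2|S|}\widehat{f}(S)^2 = \sum_{S\subseteq[n]} \delta^{|S|}\widehat{f}(S)^2.
\end{align*}

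Second, I would apply Bonami--Beckner with $q = 2$, $p = 1+\delta$, $\rho = \sqrt{\delta}$; these satisfy the hypothesis $\rho = \sqrt{(p-1)/(q-1)}$ with equality, so we obtain $\|T_{\sqrt{\delta}} f\|_2 \leq \|f\|_{1+\delta}$. Third, I would use the assumption that $f$ takes values in $\{-1,0,1\}$ to evaluate the right-hand side: for any such $f$, $|f(x)|^{1+\delta}$ equals $1$ when $x\in A$ and $0$ otherwise, so
\begin{align*}
    \|f\|_{1+\delta}^{1+\delta} = \operatorname*{\mathbb{E}}_{x\in\{-1,1\}^n}\bigl[|f(x)|^{1+\delta}\bigr] = \frac{|A|}{2^n},
\end{align*}
and hence $\|f\|_{1+\delta}^2 = (|A|/2^n)^{2/(1+\delta)}$. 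Stringing the three steps together yields exactly the stated inequality.

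The main obstacle is not in any step of this derivation -- each is essentially bookkeeping once Bonami--Beckner is available -- but rather in the hypercontractive inequality itself, which is the genuinely non-trivial classical input and is invoked here as a black box rather than reproved. The only subtle points worth double-checking are that the boundary cases $\delta=0$ and $\delta=1$ work correctly (at $\delta=0$ the bound reduces to $\widehat{f}(\emptyset)^2 \leq |A|^2/2^{2n}$, which follows from $\widehat{f}(\emptyset) = \mathbb{E}[f]$ and $|f|\leq \mathbf{1}_A$; at $\delta=1$ it reduces to Parseval combined with $\|f\|_2^2 = |A|/2^n$), and that the equality $\rho^2 = (p-1)/(q-1)$ in the parameter choice is precisely what forces the exponent $2/(1+\delta)$ on the right-hand side.
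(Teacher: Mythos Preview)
Your derivation is correct and is in fact the standard way to obtain this inequality from Bonami--Beckner hypercontractivity. Note, however, that the paper does not actually prove this lemma: it is stated in the preliminaries section as a known result, attributed to Kahn, Kalai and Linial~\cite{kahn1988influence}, and used as a black box. So there is no ``paper's own proof'' to compare against; your proposal simply supplies the (standard) argument that the paper omits.
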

    
    The above concepts can be generalised to matrix-valued functions, in particular the ones that map $x\in\{-1,1\}^n$ to a $m$-qubit density operator. The state space $\mathcal{A}$ of $m$ qubits is the complex Euclidean space $\mathbb{C}^{2^m}$. The set of all mixed quantum states in $\mathcal{A}$ is denoted by $\operatorname{D}(\mathcal{A})$. The \emph{Fourier transform} $\widehat{f}$ of a matrix-valued function $f:\{-1,1\}^n\to\operatorname{D}(\mathcal{A})$ is defined similarly as for scalar functions: is the function $\widehat{f}:\{-1,1\}^n\to\operatorname{D}(\mathcal{A})$ defined by
    \begin{align*}
         \widehat{f}(S) = \frac{1}{2^n}\sum_{x\in\{-1,1\}^n} f(x)\chi_S(x).
    \end{align*}
    In our quantum lower bound proof we will use the following special case of an extension of the hypercontractive inequality to matrix-valued functions \cite{ben2008hypercontractive}, where, for any $X\in\operatorname{D}(\mathcal{A})$ with singular values $\sigma_1,\dots,\sigma_d$, with $d=\operatorname{dim}(\mathcal{A})$, its \emph{trace norm} is defined as $\|X\|_{\operatorname{tr}} := \sum_{i=1}^d \sigma_i$.
    \begin{theorem}[{\cite[Lemma 6]{ben2008hypercontractive}}]
        \label{lem:hyper}
        For every $f:\{-1,1\}^n\to\operatorname{D}(\mathbb{C}^{2^m})$ and $\delta\in[0,1]$,
        \begin{align*}
            \sum_{S\subseteq[n]} \delta^{|S|}\|\widehat{f}(S)\|_{\operatorname{tr}}^2 \leq 2^{2\delta m}.
        \end{align*}
    \end{theorem}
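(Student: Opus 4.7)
My plan is to prove this matrix-valued hypercontractive inequality by lifting the classical Bonami--Beckner argument to the matrix-valued setting, then using a Schatten-norm interpolation to convert the resulting Hilbert--Schmidt bound into a trace-norm bound.

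The first ingredient is a matrix-valued analogue of Bonami--Beckner that controls the Frobenius Fourier weights of $f$. I would establish $\sum_S \delta^{|S|} \|\widehat{f}(S)\|_F^2 \leq 1$, where $\|\cdot\|_F$ is the Hilbert--Schmidt (Frobenius) norm: this can be obtained by applying the scalar $(2,1+\delta)$-hypercontractive inequality entry-wise to $f$ and summing, or equivalently by proving a non-commutative two-point inequality and inducting over $n$ as in~\cite{ben2008hypercontractive}. The right-hand side collapses to $1$ because $f(x)$ is a density matrix, so $\|f(x)\|_{S_{1+\delta}} \leq \|f(x)\|_{\operatorname{tr}} = 1$ for every $x$.

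The second ingredient is the Schatten--H\"older bound $\|X\|_{\operatorname{tr}}^2 \leq \rank(X)^{2(1-1/p)} \|X\|_{S_p}^2$, obtained by applying $\ell_1 \leq r^{1-1/p}\ell_p$ to the singular values of $X$. Naively plugging in $p=2$ gives $\|X\|_{\operatorname{tr}}^2 \leq 2^m \|X\|_F^2$, which combined with the Frobenius step yields only the weak estimate $\sum_S \delta^{|S|} \|\widehat{f}(S)\|_{\operatorname{tr}}^2 \leq 2^m$, matching the theorem only at $\delta = 1/2$. To recover the correct exponent $2^{2\delta m}$ one should instead take $p = 1/(1-\delta)$, so that $\rank(X)^{2(1-1/p)} = \rank(X)^{2\delta} \leq 2^{2\delta m}$, and then appeal to a Schatten-$p$ version of the matrix hypercontractive inequality giving $\sum_S \delta^{|S|} \|\widehat{f}(S)\|_{S_p}^2 \leq 1$ for this same $p$.

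The main obstacle is establishing the Schatten-$p$ hypercontractivity for $p \neq 2$. Unlike the Hilbert--Schmidt case, there is no matrix Parseval identity for Schatten-$p$ norms, so one must use a genuinely non-commutative interpolation rather than reducing entry-wise to scalar Bonami--Beckner. The required two-point base case compares $\|A+\rho B\|_{S_q}$, $\|A-\rho B\|_{S_q}$ to $\|A+B\|_{S_p}$, $\|A-B\|_{S_p}$ for Hermitian $A, B$, and its proof in~\cite{ben2008hypercontractive} proceeds via operator convexity together with an integral representation of the function $|z|^p$. Once the base case is in hand, an inductive argument on $n$ (tensoring over the Boolean hypercube) produces the Schatten-$p$ Fourier bound, which combined with the H\"older singular-value comparison above yields the stated inequality.
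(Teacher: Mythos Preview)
The paper does not prove this statement; it is quoted as a black box from Ben-Aroya, Regev and de Wolf~\cite{ben2008hypercontractive}. Your sketch is essentially their argument: establish the matrix hypercontractive inequality $\sum_S (p-1)^{|S|}\|\widehat f(S)\|_{S_p}^2 \le \big(\mathbb{E}_x\|f(x)\|_{S_p}^p\big)^{2/p}$ for $1\le p\le 2$ via a two-point operator inequality and induction on $n$, use that density matrices satisfy $\|f(x)\|_{S_p}\le\|f(x)\|_{\operatorname{tr}}=1$ to bound the right side by $1$, and then pass from $S_p$ to trace norm by $\|X\|_{\operatorname{tr}}\le d^{\,1-1/p}\|X\|_{S_p}$ with $d=2^m$.

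There is, however, a slip in your parameter matching. You take $p=1/(1-\delta)$ so that the dimension factor $d^{\,2(1-1/p)}$ becomes exactly $2^{2\delta m}$, and then assert $\sum_S \delta^{|S|}\|\widehat f(S)\|_{S_p}^2\le 1$ for this same $p$. But the hypercontractive theorem delivers weights $(p-1)^{|S|}$, and with your choice $p-1=\delta/(1-\delta)\neq\delta$; more seriously, for $\delta>1/2$ your $p$ exceeds $2$, outside the range where the $(2,p)$ inequality is available. The correct choice is $p=1+\delta$: this keeps $p\in[1,2]$, makes the Fourier weights exactly $\delta^{|S|}$, and produces the dimension factor $2^{2m\delta/(1+\delta)}$, which is at most $2^{2\delta m}$. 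With that adjustment your outline coincides with the proof in~\cite{ben2008hypercontractive}.
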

    
    \section{Classical Lower Bound}
    \label{sec:sec5}
    
    Given a Boolean function $f:\{-1,1\}^t\to\{-1,1\}$, in this section we shall prove that the associated $f$-Boolean Hidden Partition problem is hard if $\operatorname{phdeg}(f) \geq 2$. The proof follows the general idea of~\cite{gavinsky2007exponential, verbin2011streaming}. Recall that the total variation distance\footnote{This distance is often defined with a factor of $1/2$; here we use the same normalisation as~\cite{gavinsky2007exponential}.} between two distributions $D$ and $D'$ is defined as $\|D - D'\|_{\operatorname{tvd}} := \sum_x |D(x) - D'(x)|$.
    
    In the classical and quantum proofs, given an expression $\mathcal{E}$, we denote by $\mathbf{1}[\mathcal{E}]$ the Iverson bracket, i.e., the indicator function $\mathbf{1}[\mathcal{E}] = 1$ if $\mathcal{E}$ is true and $0$ if not. 
    
    We briefly mention that the upper bound below on $\alpha$ comes from technical reasons that will arise during the proof. It is not tight, though, and could possibly be improved up to $\alpha \leq 1/2$.

\begin{theorem}\label{thr:thr5.1}
    If $\operatorname{phdeg}(f) = d \geq 2$ and $\alpha \leq \min\big(1/2, \frac{t}{2d}\hat{\|}f\hat{\|}_1^{-2/d}\big)$, where $\hat{\|}f\hat{\|}_1 := \sum_{T\subseteq[t]}|\widehat{f}(T)|$, then $R^1_\epsilon(f\operatorname{-BHP}^{\alpha}_
    {t,n}) = \Omega\big((\epsilon^4/(\alpha\hat{\|}f\hat{\|}_1^2))^{1/d}(n/t)^{1-1/d}\big)$.
\end{theorem}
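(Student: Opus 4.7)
The plan is to follow the Fourier-analytic template of Gavinsky et al.~\cite{gavinsky2007exponential} and Verbin and Yu~\cite{verbin2011streaming}, generalised to use the full Fourier spectrum of $f$. First, by Yao's minimax principle it suffices to prove a distributional lower bound against a deterministic protocol in which Alice sends $c$ bits. I would take the hard distribution in which $x\in\{-1,1\}^n$, $\sigma\in\mathbb{S}_n$, and $b\in\{-1,1\}$ are independent and uniform, with $w$ set to $B_f(x,\sigma)$ or $\overline{B_f(x,\sigma)}$ according to $b$. A deterministic Alice partitions $\{-1,1\}^n$ into $2^c$ sets $\{A_m\}$ of densities $\mu_m := |A_m|/2^n$; after receiving $m$, Bob must distinguish the two distributions $p_{m,\sigma}(z) := \Pr_{X\sim \text{Unif}(A_m)}[B_f(X,\sigma)=z]$ and $q_{m,\sigma}(z) := p_{m,\sigma}(\overline z)$. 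A standard calculation then shows that the optimal overall bias is $\tfrac{1}{4}\mathbb{E}_{m,\sigma}\|p_{m,\sigma}-q_{m,\sigma}\|_{\operatorname{tvd}}$, and the goal becomes showing this is small unless $c$ is large.

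Next, I would reduce the total variation distance to a sum of squared Fourier coefficients on $\{-1,1\}^N$ with $N=\alpha n/t$. Applying Cauchy-Schwarz on $\{-1,1\}^N$, Parseval (Lemma~\ref{lem:lem4.1}), and the identity $\widehat{q}(T)=(-1)^{|T|}\widehat{p}(T)$ (which follows from $q(z)=p(\overline z)$), one obtains $\|p_{m,\sigma}-q_{m,\sigma}\|_{\operatorname{tvd}}^2 \leq 4\cdot 2^{2N}\sum_{|T|\text{ odd}}\widehat{p_{m,\sigma}}(T)^2$. A further Cauchy-Schwarz across $(m,\sigma)$ then gives $\mathrm{bias}^2 \leq \tfrac{2^{2N}}{4}\mathbb{E}_{m,\sigma}\sum_{|T|\text{ odd}}\widehat{p_{m,\sigma}}(T)^2$, reducing the problem to controlling these Fourier coefficients on average.

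Now I would expand $\widehat{p_{m,\sigma}}(T) = (2^N|A_m|)^{-1}\sum_{x\in A_m}\prod_{j\in T}f(\sigma(x)^{(j)})$ by inserting the Fourier expansion $f(y)=\sum_{S\subseteq[t]}\widehat{f}(S)\chi_S(y)$ into each factor. The hypothesis $\operatorname{phdeg}(f)=d$ forces $|S_j|\geq d$ in every surviving term, and the product of characters collapses to a single $\chi_R(x)$ with $R = \bigsqcup_{j\in T}\sigma^{-1}((j-1)t+S_j)$ of size $|R|=\sum_j|S_j|\geq d|T|$, yielding
\[\widehat{p_{m,\sigma}}(T) \;=\; \frac{1}{\mu_m 2^N}\sum_{\vec S:\,|S_j|\geq d}\Bigl(\prod_{j\in T}\widehat{f}(S_j)\Bigr)\widehat{\mathbf{1}_{A_m}}(R).\]
Squaring and applying Cauchy-Schwarz with the weights $\prod_j|\widehat{f}(S_j)|$ pulls out a factor of $\hat{\|}f\hat{\|}_1^{|T|}$, and averaging over uniform $\sigma$ makes $R$ a uniform random subset of $[n]$ of the prescribed size $r = \sum_j|S_j|$, so the remaining quantity becomes the level sum $\Phi_{A_m}(r):=\sum_{|R'|=r}\widehat{\mathbf{1}_{A_m}}(R')^2$.

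Finally, the KKL inequality (Lemma~\ref{lem:kkl}) provides $\Phi_{A_m}(r)\leq \delta^{-r}\mu_m^{2/(1+\delta)}$ for any $\delta\in[0,1]$, and I would choose $\delta$ of order $c/n$ to balance the trade-off. Summing over $\vec S$ and $|T|=k$, the combinatorial factor $\binom{\alpha n/t}{k}$ together with $\hat{\|}f\hat{\|}_1^{2k}$ compete against factors of roughly $n^{-dk}$ coming from $\binom{n}{dk}^{-1}$ and KKL; the hypothesis $\alpha\leq\tfrac{t}{2d}\hat{\|}f\hat{\|}_1^{-2/d}$ is chosen precisely so that consecutive ratios in this sum are at most $\tfrac12$, making the total geometric and dominated by its $k=1$ term. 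Averaging over $m$ using $\sum_m\mu_m=1$ and requiring $\mathrm{bias}\geq \tfrac12-\epsilon$ then produces the advertised lower bound on $c$. The hard part will be the Fourier expansion step: unlike the Parity case, where $\widehat{f}$ is supported on a single set and each $\widehat{p_{m,\sigma}}(T)$ reduces to one Fourier coefficient of $\mathbf{1}_{A_m}$, here $\widehat{p_{m,\sigma}}(T)$ is a genuine sum over all $\vec S$, and Cauchy-Schwarz introduces $\hat{\|}f\hat{\|}_1^{|T|}$ which can grow geometrically with $|T|$; controlling this against the combinatorial factors is exactly what forces $\alpha$ to be chosen below a small constant depending on $f$.
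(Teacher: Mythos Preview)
Your outline matches the paper's approach closely: Yao's principle with the uniform hard distribution, Cauchy--Schwarz/Parseval to pass to Fourier coefficients of $p_{m,\sigma}-q_{m,\sigma}$, expanding via the Fourier decomposition of $f$ to relate these to level sums of $\widehat{\mathbf{1}_{A_m}}$, and then KKL. The identification of the main difficulty (each $\widehat{p_{m,\sigma}}(T)$ is now a genuine sum over tuples $\vec S$, forcing the $\hat{\|}f\hat{\|}_1^{k}$ factors and the smallness condition on $\alpha$) is exactly right.

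There is one place where your sketch oversimplifies and would not go through as written. You propose a single choice of $\delta$ ``of order $c/n$'' in KKL and assert that the resulting sum over $k$ is geometric with ratio at most $1/2$, dominated by $k=1$. In the paper this does \emph{not} work uniformly in $k$: the sum is split at $k\approx 2c$. For $1\le k<2c$ one applies KKL with a \emph{level-dependent} parameter $\delta=\ell/(2tc)$ (valid because $\ell\le kt<2tc$), which yields the bound $(2tc/\ell)^\ell$ on the level-$\ell$ weight and, after the binomial manipulations, terms of the form $(C\gamma\epsilon^{4/d}/k^{1/d})^{kd}$ that sum. For $k\ge 2c$ the KKL parameter would exceed $1$, so instead one uses Parseval directly together with the monotonicity of $h(k):=\bigl(\tfrac{2\alpha d}{t}\hat{\|}f\hat{\|}_1^{2/d}\bigr)^{kd}\binom{2\alpha n/t}{k}/\binom{2\alpha nd/t}{kd}$; it is precisely here---and only here---that the hypothesis $\alpha\le \tfrac{t}{2d}\hat{\|}f\hat{\|}_1^{-2/d}$ is invoked, to ensure $h$ is non-increasing so the tail is controlled by $h(2c)$. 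Your proposal attributes the $\alpha$-condition to making the \emph{entire} sum geometric, which is not how it functions; without the split, a fixed $\delta\sim c/n$ loses too much at high levels and the tail does not close. Once you incorporate the two-range argument, your plan is the paper's proof.
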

\begin{proof}
    By the minimax principle, it suffices to analyse \emph{deterministic} protocols under some `hard' input distribution. For our input distribution, Alice and Bob receive $x\in\{-1,1\}^n$ and $\sigma\in\mathbb{S}_n$, respectively, uniformly at random, while Bob's input $w\in\{-1,1\}^{\alpha n/t}$ equals $B_f(x,\sigma)$ with probability $1/2$ and $\overline{B_f(x,\sigma)}$ with probability $1/2$. 
    	    
	Fix a small constant $\epsilon>0$ and let $c$ to be determined later. Consider any classical deterministic protocol that communicates at most $C :=  c - \log(1/\epsilon)$ bits. Such protocol partitions the set of all $2^n$ $x$'s into $2^C$ sets. These sets have size $2^{n-C}$ on average, and by a counting argument, with probability $1-\epsilon$, the set $A$ corresponding to Alice's message has size at least $\epsilon 2^{n-C} = 2^{n-c}$. 
	
	Given $A\subseteq\{-1,1\}^n$ such that $|A| \geq 2^{n-c}$, let $X$ be uniformly distributed over $A$ and let $Z = B_f(X,\sigma)$ given Bob's permutation $\sigma$. Bob, by looking at $w$, needs to decide whether $w = Z$ or $w = \overline{Z}$, i.e., he needs to discriminate between the following two induced distributions,
    \begin{align}
        \label{eq:pqdistr}
        p_\sigma(z) = \frac{|\{x\in A|B_f(x,\sigma) = z\}|}{|A|} ~\text{and}~ q_\sigma(z) = \frac{|\{x\in A| B_f(x,\sigma) = \overline{z}\}|}{|A|}.
    \end{align}
    He can only achieve this if the distributions have large total variation distance, since it is well known that the best success probability for distinguishing two distributions $q_1$ and $q_2$ with one sample is $1/2 + \|q_1 - q_2\|_{\operatorname{tvd}}/4$. We prove in Theorem~\ref{thr:raryhidden2} below that, if $\alpha\leq \min\big(1/2,\frac{t}{2d}\hat{\|}f\hat{\|}_1^{-2/d}\big)$ and $c \leq \gamma (\epsilon^4/(\alpha\hat{\|}f\hat{\|}_1^2))^{1/d}(n/t)^{1-1/d}$ for some universal constant $\gamma$, then the average advantage over all permutations $\sigma$ is at most $\epsilon^2/4$:
	\begin{align*}
		\operatorname*{\mathbb{E}}_{\sigma\sim\mathbb{S}_n}[\|p_\sigma - q_\sigma\|_{\text{tvd}}] \leq \epsilon^2.
	\end{align*}
	Therefore, by Markov's inequality, for at least a $(1-\epsilon)$-fraction of $\sigma$, the advantage in distinguishing between $p_\sigma$ and $q_\sigma$ is $\epsilon/4$ small. Hence, Bob's total advantage over randomly guessing the right distribution will be at most $\epsilon$ (for the event that $A$ is too small) plus $\epsilon$ (for the event that $\sigma$ is such that the distance between $p_\sigma$ and $q_\sigma$ is more than $\epsilon$) plus $\epsilon/4$ (for the advantage over random guessing when $\|p_\sigma - q_\sigma\|_{\text{tvd}} \leq \epsilon$), and so $c = \Omega\big( (\epsilon^4/(\alpha\hat{\|}f\hat{\|}_1^2))^{1/d}(n/t)^{1-1/d}\big)$.
\end{proof}

\begin{theorem}
	\label{thr:raryhidden2}
    Let $x\in\{-1,1\}^n$ be uniformly distributed over a set $A\subseteq \{-1,1\}^n$ of size $|A| \geq 2^{n-c}$ for some $c\geq 1$. Consider the distributions $p_\sigma$ and $q_\sigma$ from Eq.~\eqref{eq:pqdistr}. If $\alpha \leq \min\big(1/2,\frac{t}{2d}\hat{\|}f\hat{\|}_1^{-2/d}\big)$, then there is a universal constant $\gamma>0$ (independent of $n$, $t$, $d$ and $\alpha$), such that, for all $\epsilon > 0$, if $c \leq \gamma(\epsilon^4/(\alpha \hat{\|}f\hat{\|}_1^2))^{1/d} (n/t)^{1-1/d}$, then
    \begin{align*}
        \operatorname*{\mathbb{E}}_{\sigma\sim\mathbb{S}_n}[\|p_\sigma - q_\sigma\|_{\operatorname{tvd}}] \leq \epsilon^2.
    \end{align*}
\end{theorem}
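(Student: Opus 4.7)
The plan is to follow the Fourier-analytic strategy of~\cite{gavinsky2007exponential,verbin2011streaming}, adapted to arbitrary Boolean functions of pure high degree $d$. First I would pass from the total variation distance to an $\ell_2$-norm via Cauchy--Schwarz and Parseval:
\begin{align*}
\|p_\sigma - q_\sigma\|_{\operatorname{tvd}}^2 \leq 2^m \sum_z (p_\sigma(z) - q_\sigma(z))^2 = 2^{2m}\sum_{S\subseteq[m]}\widehat{(p_\sigma - q_\sigma)}(S)^2,
\end{align*}
with $m=\alpha n/t$. Combined with Jensen's inequality, this reduces the theorem to bounding $\mathbb{E}_\sigma[\|p_\sigma-q_\sigma\|_{\operatorname{tvd}}^2]$.

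To compute the relevant Fourier coefficients, I would write $p_\sigma(z)=\frac{1}{|A|}\sum_{x\in A}\prod_i\frac{1+z_i f(\sigma(x)^{(i)})}{2}$ (and similarly for $q_\sigma$ with a sign flip); the subtraction kills the terms with $|S|$ even, and expanding $f$ in its Fourier basis yields, for $|S|$ odd,
\begin{align*}
\widehat{(p_\sigma - q_\sigma)}(S) = \frac{2^{n+1}}{|A|\,2^m}\sum_{(T_i)_{i\in S}}\Bigl(\prod_{i\in S}\widehat{f}(T_i)\Bigr)\widehat{\mathbf{1}_A}\bigl(U(\sigma,S,(T_i))\bigr),
\end{align*}
where $U(\sigma,S,(T_i))=\bigcup_{i\in S}\{\sigma^{-1}((i-1)t+j):j\in T_i\}\subseteq[n]$ is a disjoint union of size $u:=\sum_i|T_i|$. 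The pure-high-degree hypothesis forces $|T_i|\geq d$ for every contributing $T_i$, hence $u\geq d|S|$; moreover, for fixed $S$ and $(T_i)$, as $\sigma$ ranges uniformly over $\mathbb{S}_n$ the set $U$ is uniformly distributed over size-$u$ subsets of $[n]$, since the underlying set $\{(i-1)t+j:i\in S,\,j\in T_i\}\subseteq[mt]$ is fixed and $\sigma^{-1}$ of a fixed set under a uniform permutation is a uniform random set of the same size.

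I would then apply Cauchy--Schwarz on the $(T_i)$-sum with weights $\prod_i|\widehat{f}(T_i)|$ to extract a factor of $\hat{\|}f\hat{\|}_1^{|S|}$, and invoke the KKL inequality (Lemma~\ref{lem:kkl}) in the form $\sum_{|V|=u}\widehat{\mathbf{1}_A}(V)^2 \leq \delta^{-u}\rho^{2/(1+\delta)}$, with $\rho=|A|/2^n\geq 2^{-c}$ and a tunable $\delta\in(0,1]$, to bound $\mathbb{E}_\sigma[\widehat{\mathbf{1}_A}(U)^2]\leq \binom{n}{u}^{-1}\delta^{-u}\rho^{2/(1+\delta)}$. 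Assembling everything gives a bound of the form
\begin{align*}
\mathbb{E}_\sigma[\|p_\sigma-q_\sigma\|_{\operatorname{tvd}}^2] \leq 4\rho^{-\frac{2\delta}{1+\delta}}\sum_{s\text{ odd}}\binom{m}{s}\hat{\|}f\hat{\|}_1^{s}\sum_{(T_i):|T_i|\geq d}\prod_i|\widehat{f}(T_i)|\,\frac{\delta^{-u}}{\binom{n}{u}},
\end{align*}
which I would handle by the crude estimate $\binom{n}{u}^{-1}\leq (u/n)^u$ together with $u\leq st$; this factorises the inner sum as $\prod_i\sum_{T_i:|T_i|\geq d}|\widehat{f}(T_i)|(st/(\delta n))^{|T_i|}\leq \hat{\|}f\hat{\|}_1^s(st/(\delta n))^{sd}$, valid as long as $st\leq\delta n$.

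The main obstacle will be the delicate tuning between the three parameters: $\delta$ (the KKL exponent), $\alpha$ (the matching density) and $c$ (Alice's message length). After combining the above estimates with $\binom{m}{s}\leq(em/s)^s$ and $m=\alpha n/t$, the bound reduces to a series of the shape $\sum_{s\text{ odd}}(A(Bs)^{d-1})^s$ with $A=e\alpha\hat{\|}f\hat{\|}_1^2/\delta^d$ and $B=t/n$; since $Bs\leq\alpha$ for all $s\leq m$, the hypothesis $\alpha\leq\frac{t}{2d}\hat{\|}f\hat{\|}_1^{-2/d}$ forces $A(Bs)^{d-1}\leq e\alpha^d\hat{\|}f\hat{\|}_1^2/\delta^d$ to lie below a constant less than $1$ for a suitable $\delta$, making the series geometric and dominated by its $s=1$ term $AB^{d-1}\sim \alpha\hat{\|}f\hat{\|}_1^2 t^{d-1}/(\delta^d n^{d-1})$. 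Requiring the overall bound $O\bigl(\rho^{-2\delta/(1+\delta)}AB^{d-1}\bigr)\leq\epsilon^4$ and optimising $\delta$ against $c$ then yields $c=\Omega\bigl((\epsilon^4/(\alpha\hat{\|}f\hat{\|}_1^2))^{1/d}(n/t)^{1-1/d}\bigr)$; taking square roots via Jensen gives the claimed bound $\mathbb{E}_\sigma[\|p_\sigma-q_\sigma\|_{\operatorname{tvd}}]\leq\epsilon^2$.
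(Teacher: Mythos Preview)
Your Fourier-analytic setup is correct and matches the paper's: the reduction to $\ell_2$ via Jensen and Cauchy--Schwarz, the expansion of $\widehat{r_\sigma}(V)$ in terms of $\widehat{g}$ and $\widehat{f}$, the vanishing of even-$|S|$ terms, and the observation that under a uniform $\sigma$ the set $U$ is uniformly distributed over size-$u$ subsets of $[n]$. The gap is in the endgame.

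You use a \emph{single} KKL parameter $\delta$, pulling the factor $\rho^{-2\delta/(1+\delta)}$ outside the entire $s$-sum, and then argue that the remaining series $\sum_s (A(Bs)^{d-1})^s$ is dominated by a geometric series provided $\delta$ is large enough that the largest base $A(Bm)^{d-1} = e\alpha^d\hat{\|}f\hat{\|}_1^2/\delta^d$ is below $1$. This forces $\delta \geq e^{1/d}\alpha\hat{\|}f\hat{\|}_1^{2/d}$, a constant independent of $n$. But with $\delta$ constant the prefactor $\rho^{-2\delta/(1+\delta)} \leq 2^{2c\delta}$ is exponential in $c$, while the $s=1$ term is only of order $(t/n)^{d-1}$; their product is $\leq\epsilon^4$ only for $c = O(\log(n/t))$, far short of the claimed $c=\Theta((n/t)^{1-1/d})$. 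Conversely, if you take $\delta$ small (of order $1/c$) to tame the prefactor, then your side condition $st\leq\delta n$ for the factorisation fails already at $s\gtrsim n/(ct)\ll m$, and for larger $s$ the base $A(Bs)^{d-1}$ exceeds $1$ and the terms explode. The two constraints on $\delta$ are incompatible; no single optimisation of $\delta$ against $c$ can rescue this.

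The paper resolves precisely this tension by splitting the $k$-sum at $k=2c$ and treating the two ranges with different tools. For $k<2c$ it applies KKL with a \emph{level-dependent} parameter $\delta=\ell/(2tc)$ (where $\ell=|S|$), which converts the dangerous factor $(2^n/|A|)^{2\delta}$ into $2^{\ell/t}$ and yields the polynomial bound $(2^{1/t}\cdot 2tc/\ell)^\ell$ instead of an exponential in $c$; this is what produces the $c^d$ dependence you want. For $k\geq 2c$ it abandons KKL altogether, using Parseval $\sum_S\widehat{g}(S)^2=|A|/2^n$ directly together with a monotonicity argument for $h(k)=(2\alpha d\hat{\|}f\hat{\|}_1^{2/d}/t)^{kd}\binom{2\alpha n/t}{k}/\binom{2\alpha nd/t}{kd}$; it is in this tail estimate that the hypothesis $\alpha\leq\frac{t}{2d}\hat{\|}f\hat{\|}_1^{-2/d}$ is actually used. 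Your proposal is missing both of these ingredients.
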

\begin{proof}
    We start upper bounding the total variation distance by using Jensen's inequality,
    \begin{align*}
        \operatorname*{\mathbb{E}}_{\sigma\sim\mathbb{S}_n}[\|p_\sigma - q_\sigma\|_{\operatorname{tvd}}] \leq \sqrt{\operatorname*{\mathbb{E}}_{\sigma\sim\mathbb{S}_n}[\|p_\sigma - q_\sigma\|_{\operatorname{tvd}}^2]},
    \end{align*}
    and then by the Cauchy-Schwarz inequality and Parseval's identity (Lemma \ref{lem:lem4.1}) we finally get
    \begin{align*}
        \operatorname*{\mathbb{E}}_{\sigma\sim\mathbb{S}_n}[\|p_\sigma - q_\sigma\|_{\operatorname{tvd}}^2] \leq 2^{2\alpha n/t}\operatorname*{\mathbb{E}}_{\sigma\sim\mathbb{S}_n}[\|p_\sigma - q_\sigma\|_{2}^2] = 2^{2\alpha n/t}\operatorname*{\mathbb{E}}_{\sigma\sim\mathbb{S}_n}\left[\sum_{\substack{V\subseteq [\alpha n/t]\\ V\neq \emptyset}} \widehat{r_\sigma}(V)^2\right],
    \end{align*}
    where $r_\sigma(z) := p_\sigma(z) - q_\sigma(z)$, and we observed that $\widehat{p_{\sigma}}(\emptyset) = \widehat{q_{\sigma}}(\emptyset)$, as $p_\sigma$ and $q_\sigma$ are probability distributions. Let $g:\{-1,1\}^n\to\{0,1\}$ be the characteristic function of $A$, i.e., $f(x)=1$ iff $x\in A$. We can show that $p_\sigma$ and $q_\sigma$ are close for most permutations $\sigma$ by bounding the Fourier coefficients of $r_\sigma$, which are related to the Fourier coefficients of $g$ as follows.
    \begin{align*}
	    \widehat{r_\sigma}(V) &= \frac{1}{2^{\alpha n/t}} \sum_{z \in \{-1,1\}^{\alpha n/t}} \left(p_{\sigma}(z) - q_\sigma(z)\right) \chi_V(z)\\\displaybreak[0]
    	&= \frac{1}{|A| 2^{\alpha n/t}} \sum_{\substack{z \in \{-1,1\}^{\alpha n/t}\\x\in\{-1,1\}^n}} \mathbf{1}[x \in A] \left(\mathbf{1}[B_f(x,\sigma) = z] - \mathbf{1}[B_f(x,\sigma) = \overline{z}]\right) \chi_V(z)\\\displaybreak[2]
	    &= \frac{2}{|A| 2^{\alpha n/t}} \sum_{x \in \{-1,1\}^n} g(x) \chi_V(B_f(x,\sigma))
    \end{align*}
    for $|V|$ odd, otherwise $\widehat{r_\sigma}(V) = 0$, since $\chi_V(\overline{z}) = (-1)^{|V|}\chi_V(z)$. Using the Fourier expansion of $f$ and setting $|V| = k$, we have
	\begin{align*}
		\displaybreak[0]\chi_V(B_f(x,\sigma)) &= \prod_{j \in V} \left(\sum_{T \subseteq [t]} \widehat{f}(T) \chi_T(\sigma(x)^{(j)}) \right)\\\displaybreak[0]
		&=\prod_{j = 1}^k \left(\sum_{T_j \subseteq [t]} \widehat{f}(T_j) \chi_{T_j}(\sigma(x)^{(V_j)}) \right) \\\displaybreak[0]
		&=\sum_{T_1,\dots,T_k \subseteq [t]} \widehat{f}(T_1) \cdots \widehat{f}(T_k) \chi_{T_1}(\sigma(x)^{(V_1)}) \cdots \chi_{T_k}(\sigma(x)^{(V_k)})\\\displaybreak[0]
		&=\sum_{T_1,\dots,T_k \subseteq [t]} \widehat{f}(T_1) \cdots \widehat{f}(T_k) \chi_{V_1[T_1] \cup V_2[T_2] \cup \dots \cup V_k[T_k]}(\sigma(x))\\\displaybreak[0]
		&=\sum_{T_1,\dots,T_k \subseteq [t]} \widehat{f}(T_1) \cdots \widehat{f}(T_k) \chi_{V \bullet T}(\sigma(x)),
	\end{align*}
	where at the end we use the notation $V_i[T_i]$ to denote subset $T_i$ being positioned in block $V_i$, and then use the notation $V \bullet T$ to compactly represent $V_1[T_1] \cup V_2[T_2] \cup \dots \cup V_k[T_k]$. So we have
	\begin{align*}
		\widehat{r_\sigma}(V) &= \frac{2}{|A| 2^{\alpha n/t}} \sum_{\substack{x \in \{-1,1\}^n\\T_1,\dots,T_k \subseteq [t]}} g(x) \widehat{f}(T_1) \cdots \widehat{f}(T_k) \chi_{V \bullet T}(\sigma(x)) \\
		&= \frac{2^{n+1}}{|A| 2^{\alpha n/t}}\sum_{T_1,\dots,T_k \subseteq [t]} \widehat{f}(T_1) \cdots \widehat{f}(T_k) \widehat{g}(\sigma^{-1}(V\bullet T)),
	\end{align*}
	using that, for all $U \subseteq [n]$,
	\begin{align*}
	    \chi_U(\sigma(x)) = \prod_{i \in U} \sigma(x)_i = \prod_{i \in U} x_{\sigma^{-1}(i)} = \prod_{\sigma(j) \in U} x_j = \prod_{j \in \sigma^{-1}(U)} x_j.
	\end{align*}
	With all that,
	\begin{align}
	    \frac{1}{4}\operatorname*{\mathbb{E}}_{\sigma\sim\mathbb{S}_n}&[\|p_\sigma - q_\sigma\|_{\operatorname{tvd}}^2] \nonumber\\
	    &\leq \frac{2^{2n}}{|A|^2} \sum_{\substack{k=1 \\ k~\text{odd}}}^{\alpha n/t} \sum_{\substack{V \subseteq [\alpha n/t]\\|V| =k}} \operatorname*{\mathbb{E}}_{\sigma\sim\mathbb{S}_n} \left[ \left( \sum_{T_1,\dots,T_k \subseteq [t]} \widehat{f}(T_1) \cdots \widehat{f}(T_k)\cdot \widehat{g}(\sigma^{-1}(V \bullet T))\right)^2 \right]\nonumber \\\displaybreak[0]
	    &= \frac{2^{2n}}{|A|^2} \sum_{\substack{k=1 \\ k~\text{odd}}}^{\alpha n/t} \sum_{\substack{V \subseteq [\alpha n/t]\\|V| =k}} \sum_{\substack{T_1,\dots,T_k \subseteq [t] \\ U_1,\dots,U_k \subseteq[t]}}\operatorname*{\mathbb{E}}_{\sigma\sim\mathbb{S}_n}\left[\widehat{g}(\sigma^{-1}(V \bullet T))\widehat{g}(\sigma^{-1}(V \bullet U))\right]\prod_{j=1}^k \widehat{f}(T_j)\widehat{f}(U_j)\nonumber \\
	    &\leq \frac{2^{2n}}{|A|^2} \sum_{\substack{k=1 \\ k~\text{odd}}}^{\alpha n/t} \sum_{\substack{V \subseteq [\alpha n/t]\\|V| =k}}\left(\sum_{T_1,\dots,T_k \subseteq [t]}\sqrt{\operatorname*{\mathbb{E}}_{\sigma\sim\mathbb{S}_n}\big[\widehat{g}(\sigma^{-1}(V \bullet T))^2\big]}\prod_{j=1}^k |\widehat{f}(T_j)|\right)^2, \label{eq:eq4.2a}
	\end{align}
	where we used that $\mathbb{E}[XY] \leq \sqrt{\mathbb{E}[X^2]\mathbb{E}[Y^2]}$. Now we use the following combinatorial lemma.
	
	\begin{lemma}
	    For all $S \subseteq [n]$ and all $T_1,\dots,T_k\subseteq[t]$ such that $|S| = \sum_{j=1}^k |T_j|$,
	    \begin{align*}
	        \operatorname*{\mathbb{E}}_{\sigma\sim\mathbb{S}_n}\big[\widehat{g}(\sigma^{-1}(V \bullet T))^2\big] = \frac{1}{\binom{n}{|S|}}\sum_{\substack{S\subseteq[n] \\ |S| = \sum_{j=1}^k |T_j|}} \widehat{g}(S)^2.
	    \end{align*}
	\end{lemma}
	\begin{proof}
	By the definition of expected value,
	\begin{align*}
		\displaybreak[0]\operatorname*{\mathbb{E}}_{\sigma\sim\mathbb{S}_n} \left[\widehat{g}(\sigma^{-1}(V\bullet T))^2\right] &= \frac{1}{n!}\sum_{\sigma\in \mathbb{S}_n}\widehat{g}(\sigma^{-1}(V\bullet T))^2\\\displaybreak[0]
		&= \frac{1}{n!}\sum_{\sigma\in \mathbb{S}_n}\sum_{\substack{S\subseteq [n] \\ |S| = \sum_{j=1}^k |T_j|}} \mathbf{1}[\sigma^{-1}(V\bullet T) = S]\cdot  \widehat{g}(S)^2\\\displaybreak[0]
		&= \frac{1}{\binom{n}{|S|}}\sum_{\substack{S\subseteq [n] \\ |S| = \sum_{j=1}^k |T_j|}}  \widehat{g}(S)^2,
	\end{align*}
	since $\sum_{\sigma\in \mathbb{S}_n}\mathbf{1}[\sigma^{-1}(V\bullet T) = S] = |\{\sigma \in \mathbb{S}_n : \sigma^{-1}(V\bullet T) = S\}| = |S|!(n-|S|)!$.
	\end{proof}

	Using this Lemma in Eq.~(\ref{eq:eq4.2a}), we have
	\begin{align}
	    \frac{1}{4}\operatorname*{\mathbb{E}}_{\sigma\sim\mathbb{S}_n}[\|p_\sigma - q_\sigma\|_{\operatorname{tvd}}^2] &\leq \frac{2^{2n}}{|A|^2} \sum_{\substack{k=1 \\ k~\text{odd}}}^{\alpha n/t} \sum_{\substack{V \subseteq [\alpha n/t]\\|V| =k}} \left(\sum_{T_1,\dots,T_k \subseteq [t]}\sqrt{\sum_{\substack{S\subseteq[n]\\ |S|  = \sum_{i=1}^{k} |T_i|}}\frac{\widehat{g}(S)^2}{\binom{n}{|S|}}}\prod_{j=1}^k |\widehat{f}(T_j)|\right)^2 \nonumber\\\displaybreak[0]
		&\leq \frac{2^{2n}}{|A|^2} \sum_{\substack{k=1 \\ k~\text{odd}}}^{\alpha n/t} \sum_{\substack{V \subseteq [\alpha n/t]\\|V| =k}} \left(\sqrt{\sum_{\substack{S\subseteq[n]\\ kd \leq |S| \leq kt}}\frac{\widehat{g}(S)^2}{\binom{n}{|S|}}}\sum_{T_1,\dots,T_k \subseteq [t]}\prod_{j=1}^k |\widehat{f}(T_j)|\right)^2 \label{eq:eq4.21a} \\\displaybreak[0]
		&= \frac{2^{2n}}{|A|^2} \sum_{\substack{k=1 \\ k~\text{odd}}}^{\alpha n/t}\sum_{\substack{S \subseteq [n]\\kd\leq |S|\leq kt}}\frac{\binom{\alpha n/t}{k}}{\binom{n}{|S|}}\cdot \widehat{g}(S)^2 \left(\sum_{T_1,\dots,T_k \subseteq [t]} \prod_{j=1}^k|\widehat{f}(T_j)| \right)^2\nonumber\\
		&\leq \frac{2^{2n}}{|A|^2} \sum_{\substack{k=1 \\ k~\text{odd}}}^{\alpha n/t}\hat{\|}f\hat{\|}_1^{2k}\sum_{\substack{S \subseteq [n]\\kd\leq |S|\leq kt}}\frac{\binom{\alpha n/t}{k}}{\binom{n}{|S|}}\cdot\widehat{g}(S)^2,\label{eq:eq4.2}
	\end{align}
	where Eq.~(\ref{eq:eq4.21a}) comes from expanding the constraint $|S| = \sum_{j=1}^k |T_j|$ to the interval $kd \leq |S| \leq kt$, since $d \leq |T_j| \leq t$ for all $j\in[k]$ ($\widehat{f}(T_j) = 0$ if $|T_j|<\operatorname{phdeg}(f)=d$), so that the summation on $S\subseteq[n]$ can be pulled out of the summation on $T_1,\dots,T_k\subseteq[t]$, and in Eq.~(\ref{eq:eq4.2}) we denoted the sum of the Fourier masses of $f$ by $\hat{\|}f\hat{\|}_1 := \sum_{T\subseteq[t]} |\widehat{f}(T)|$. 

    We shall divide the above sum in two parts: one in the range $1\leq k < 2c$, and the other in the range $2c \leq k \leq \alpha n/t$.
    
    \paragraph{Sum I ($1\leq k < 2c$).} In order to upper bound the summation over $S\subseteq[n]$, consider $|S|=\ell$ for some $kd\leq \ell \leq kt$. Pick $\delta = \ell/2tc$ in Lemma~\ref{lem:kkl} (note that $\delta\in[0,1]$) to obtain (remember that $|A|\geq 2^{n-c}$)
	\begin{align*}
	    \frac{2^{2n}}{|A|^2}\sum_{\substack{S \subseteq [n]\\|S|=\ell}}\widehat{g}(S)^2 \leq \frac{2^{2n}}{|A|^2}\frac{1}{\delta^{\ell}}\sum_{S \subseteq [n]}\delta^{|S|}\widehat{g}(S)^2 \leq \frac{1}{\delta^\ell}\left(\frac{2^n}{|A|}\right)^{2\delta/(1+\delta)} \leq \frac{1}{\delta^\ell}\left(\frac{2^n}{|A|}\right)^{2\delta} \leq \left(\frac{2^{1/t} 2tc}{\ell}\right)^\ell.
	\end{align*}
	Therefore, and by using the upper bounds $\binom{an}{m} \leq a^m\binom{n}{m}$ for $a \leq 1$ to obtain Eq.~(\ref{eq:eq4.2aa}) below and then $\binom{n}{m}\binom{ln}{lm}^{-1} \leq (\frac{m}{n})^{(l-1)m}$ (see~\cite[Appendix A.5]{shi2012limits}) to obtain Eq.~(\ref{eq:eq4.2bb}), we can write
	\begin{align}
	    \frac{2^{2n}}{|A|^2} \sum_{\substack{k=1 \\ k~\text{odd}}}^{2c-1}\sum_{\substack{S \subseteq [n]\\kd\leq |S|\leq kt}}\hat{\|}f\hat{\|}_1^{2k}\frac{\binom{\alpha n/t}{k}}{\binom{n}{|S|}} \widehat{g}(S)^2 &\leq \sum_{\substack{k=1 \\ k~\text{odd}}}^{2c-1}\sum_{kd\leq |S|\leq kt} \hat{\|}f\hat{\|}_1^{2k}\frac{\binom{\alpha n/t}{k}}{\binom{n}{|S|}} \left(\frac{2^{1/t}2tc}{|S|}\right)^{|S|}\nonumber\\
	    &\leq \sum_{\substack{k=1 \\ k~\text{odd}}}^{2c-1}\sum_{kd\leq |S|\leq kt} \hat{\|}f\hat{\|}_1^{2k}\left(\frac{\alpha|S|}{kt}\right)^k\frac{\binom{kn/|S|}{k}}{\binom{n}{|S|}} \left(\frac{2^{1/t}2tc}{|S|}\right)^{|S|}\label{eq:eq4.2aa}\\
	    &\leq \sum_{\substack{k=1 \\ k~\text{odd}}}^{2c-1}\sum_{kd\leq |S|\leq kt} \hat{\|}f\hat{\|}_1^{2k}\left(\frac{\alpha|S|}{kt}\right)^k\left(\frac{|S|}{n}\right)^{|S|-k} \left(\frac{2^{1/t}2tc}{|S|}\right)^{|S|}\label{eq:eq4.2bb}\\
	    &= \sum_{\substack{k=1 \\ k~\text{odd}}}^{2c-1} \hat{\|}f\hat{\|}_1^{2k}\left(\frac{\alpha n}{kt}\right)^k \sum_{kd\leq |S|\leq kt}\left(\frac{2^{1/t}2c}{n/t}\right)^{|S|}\nonumber\\
	    &\leq \sum_{\substack{k=1 \\ k~\text{odd}}}^{2c-1} 2\hat{\|}f\hat{\|}_1^{2k}\left(\frac{\alpha n}{kt}\right)^k \left(\frac{2^{1/t}2c}{n/t}\right)^{kd},\nonumber
	\end{align}
	where we used a geometric series in the last step with $2^{1/t}2tc/n \leq 1/2$ for $n$ sufficiently large and the quantity $\binom{kn/|S|}{k}$ should be interpreted as $\frac{1}{k!}\prod_{j=0}^{k-1}\big(\frac{kn}{|S|}-j\big)$. Finally, and by using that $c \leq \gamma(\epsilon^4/(\alpha \hat{\|}f\hat{\|}_1^{2}))^{1/d} (n/t)^{1-1/d}$, we obtain
	\begin{align*}
        \frac{2^{2n}}{|A|^2} \sum_{\substack{k=1 \\ k~\text{odd}}}^{2c-1}\sum_{\substack{S \subseteq [n]\\kd\leq |S|\leq kt}}\hat{\|}f\hat{\|}_1^{2k}\frac{\binom{\alpha n/t}{k}}{\binom{n}{|S|}} \widehat{g}(S)^2 &\leq \sum_{\substack{k=1 \\ k~\text{odd}}}^{2c-1} 2\hat{\|}f\hat{\|}_1^{2k}\left(\frac{\alpha n}{kt}\right)^k \left(\frac{2^{1/t}2\gamma\epsilon^{4/d} (n/t)^{1-1/d}}{\alpha^{1/d}\hat{\|}f\hat{\|}_1^{2/d} n/t}\right)^{kd}\\
        &= \sum_{\substack{k=1 \\ k~\text{odd}}}^{2c-1}2\left(\frac{2^{1/t}2\gamma\epsilon^{4/d}}{k^{1/d}}\right)^{kd} \leq \frac{\epsilon^4}{8},
	\end{align*}
	where we took $\gamma$ sufficiently small in the last step.
	
	\paragraph{Sum II ($2c \leq k \leq \alpha n/t$).} Again using the approximation
	$$
	    \frac{\binom{\alpha n/t}{k}}{\binom{n}{kd}} \leq \left(\frac{2\alpha d}{t}\right)^{kd} \frac{\binom{\alpha n/t}{k}}{\binom{2\alpha nd/t}{kd}} \leq \left(\frac{2\alpha d}{t}\right)^{kd} \frac{\binom{2\alpha n/t}{k}}{\binom{2\alpha nd/t}{kd}}
	$$ 
	($\alpha \leq 1/2$, so $\frac{2\alpha d}{t} \leq 1$), note that the function $h(k) := \big(\frac{2\alpha d}{t}\hat{\|}f\hat{\|}_1^{2/d}\big)^{kd} \binom{2\alpha n/t}{k}/\binom{2\alpha nd/t}{kd}$ is non-increasing in the range $1\leq k \leq \alpha n/t$, since $\frac{2\alpha d}{t}\hat{\|}f\hat{\|}_1^{2/d} \leq 1$ by assumption and
	\begin{align*}
	    \frac{h(k-1)}{h(k)} \geq \frac{\binom{2\alpha n/t}{k-1}}{\binom{2\alpha nd/t}{kd-d}}\frac{\binom{2\alpha nd/t}{kd}}{\binom{2\alpha n/t}{k}} = \prod_{j=1}^{d-1}\frac{2\alpha nd/t-kd+j}{kd-j} \geq 1,
	\end{align*}
	since $k \leq \alpha n/t$ implies that $2\alpha nd/t-kd \geq kd$. Hence, by using Parseval's identity $\sum_{S\subseteq[n]} |\widehat{g}(S)|^2 = |A|/2^n$, the inequality $\binom{n}{m}\binom{ln}{lm}^{-1} \leq (\frac{m}{n})^{(l-1)m}$ and $c \leq \gamma(\epsilon^4/(\alpha \hat{\|}f\hat{\|}_1^{2}))^{1/d} (n/t)^{1-1/d}$, we have
	\begin{align*}
	    \frac{2^{2n}}{|A|^2} \sum_{\substack{k=2c \\ k~\text{odd}}}^{\alpha n/t}\sum_{\substack{S \subseteq [n]\\kd\leq |S|\leq kt}}\hat{\|}f\hat{\|}_1^{2k}\frac{\binom{\alpha n/t}{k}}{\binom{n}{|S|}} \widehat{g}(S)^2 &\leq h(2c) \frac{2^{2n}}{|A|^2} \sum_{\substack{k=2c \\ k~\text{odd}}}^{\alpha n/t}\sum_{\substack{S \subseteq [n]\\kd\leq |S|\leq kt}} \widehat{g}(S)^2\\
	    &\leq \frac{\alpha n}{t}2^c h(2c)\\
	    &\leq \frac{\alpha n}{t}2^c \left(\frac{2\alpha d}{t} \hat{\|}f\hat{\|}_1^{2/d}\right)^{2cd}\left(\frac{ct}{\alpha n}\right)^{(d-1)2c}\\
	    &\leq \frac{\alpha n}{t}2^c \left(\frac{2\alpha d}{t} \hat{\|}f\hat{\|}_1^{2/d}\right)^{2cd}\left(\frac{\gamma\epsilon^{4/d}(n/t)^{1-1/d}}{\alpha^{1/d}\hat{\|}f\hat{\|}_1^{2/d} \alpha n/t}\right)^{(d-1)2c}\\
	    &= \frac{\alpha n}{t}2^c \left(\frac{ (2d)^d}{t^d}(\alpha\hat{\|}f\hat{\|}_1^2)^{1/d}\right)^{2c}\left(\frac{\gamma\epsilon^{4/d}}{(n/t)^{1/d}}\right)^{(d-1)2c}\\
	    &\leq \frac{\epsilon^4}{8},
	\end{align*}
	where in the last step we used that $c\geq 1 \implies (1-1/d)2c \geq 1$ (so $n$ is in the denominator and $\epsilon^{(1-1/d)2c} \leq \epsilon$) and picked $\gamma$ sufficiently small.
	
	Summing both results, if $c \leq \gamma(\epsilon^4/(\alpha\hat{\|}f\hat{\|}_1^2))^{1/d} (n/t)^{1-1/d}$, then $\operatorname*{\mathbb{E}}_{\sigma\sim\mathbb{S}_n}[\|p_\sigma - q_\sigma\|^2_{\operatorname{tvd}}] \leq \epsilon^4$. By Jensen's inequality, we finally get $\operatorname*{\mathbb{E}}_{\sigma\sim\mathbb{S}_n}[\|p_\sigma - q_\sigma\|_{\operatorname{tvd}}] \leq \sqrt{\operatorname*{\mathbb{E}}_{\sigma\sim\mathbb{S}_n}[\|p_\sigma - q_\sigma\|^2_{\operatorname{tvd}}]} \leq \epsilon^2$.
    \end{proof}
	
	\section{Quantum Lower Bound}\label{sec:sec6.1}
	
	While in the previous section we proved a classical lower bound $\Omega(n^{1-1/d})$ for $f\operatorname{-BHP}^{\alpha}_{t,n}$ when $\operatorname{phdeg}(f) = d \geq 2$, in this section we shall prove its quantum analogue, Theorem~\ref{thr:thr6.1}. The first half of the proof follows the same line as the quantum lower bound for the $\operatorname{BHH}_{t,n}^\alpha$ problem from~\cite{shi2012limits}, but in the second half we used a similar approach from~\cite{gavinsky2007exponential,verbin2011streaming} of splitting the sum bounding the (average) bias in two parts. Differently from the classical lower bound proof, the Fourier analysis is performed directly on the encoding messages and not on the pre-images of a fixed encoding message, since there is no clear quantum analogue of conditioning on a message. Moreover, the matrix-valued hypercontractive inequality is now used in order to bound the trace norms $\|\widehat{\rho}\|_{\text{tr}}$.
	\begin{lemma}[\cite{helstrom1976quantum}]
	    \label{lem:lem3.5.c3}
	    Let $\rho_0,\rho_1$ be two quantum states which appear with probability $p$ and $1-p$, respectively. The optimal success probability of predicting which state it is by a POVM is
	    \begin{align*}
	        \frac{1}{2} + \frac{1}{2}\|p\rho_0 - (1-p)\rho_1\|_{\operatorname{tr}}.
	    \end{align*}
	\end{lemma}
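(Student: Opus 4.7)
The plan is to reduce the problem to optimizing over two-outcome POVMs and then exhibit the optimal one via spectral decomposition of the Hermitian operator $\Delta := p\rho_0 - (1-p)\rho_1$. First, I would observe that a POVM used for distinguishing $\rho_0$ from $\rho_1$ can always be taken to be a two-outcome POVM $\{M_0, M_1\}$ with $M_0, M_1 \succeq 0$ and $M_0 + M_1 = I$: any POVM $\{N_x\}$ together with a classification rule $b : x \mapsto \{0,1\}$ is equivalent to the two-outcome POVM $\{\sum_{x : b(x)=0} N_x, \sum_{x : b(x)=1} N_x\}$ with the same success probability. The success probability then simplifies as
\begin{align*}
p \tr(M_0 \rho_0) + (1-p) \tr(M_1 \rho_1) = (1-p) + \tr(M_0 \Delta),
\end{align*}
using $M_1 = I - M_0$, so the problem reduces to maximizing $\tr(M_0 \Delta)$ subject to $0 \preceq M_0 \preceq I$.

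Next, I would diagonalize $\Delta = \sum_i \lambda_i |v_i\rangle\langle v_i|$ with real eigenvalues $\lambda_i$ and decompose $\Delta = \Delta_+ - \Delta_-$ into its positive and negative parts. The natural candidate for the maximizer is $M_0 = \Pi_+$, the projector onto the positive eigenspace of $\Delta$. For achievability, $\tr(\Pi_+ \Delta) = \tr(\Delta_+)$. For optimality, any $0 \preceq M_0 \preceq I$ satisfies $\tr(M_0 \Delta_+) \leq \tr(\Delta_+)$ (since $\Delta_+ - M_0 \Delta_+ \ldots$ is handled by noting $0 \preceq M_0 \preceq I$ and $\Delta_+ \succeq 0$) together with $\tr(M_0 \Delta_-) \geq 0$, hence $\tr(M_0 \Delta) \leq \tr(\Delta_+)$.

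Finally, I would convert $\tr(\Delta_+)$ into half the trace norm plus a correction. Because $\rho_0$ and $\rho_1$ have unit trace, $\tr(\Delta) = 2p - 1$, giving $\tr(\Delta_+) - \tr(\Delta_-) = 2p - 1$. By definition, $\|\Delta\|_{\operatorname{tr}} = \tr(\Delta_+) + \tr(\Delta_-)$, so $\tr(\Delta_+) = \tfrac{1}{2}\bigl(\|\Delta\|_{\operatorname{tr}} + 2p - 1\bigr)$. Substituting into $(1-p) + \tr(\Pi_+ \Delta)$ yields the claimed optimal success probability $\tfrac{1}{2} + \tfrac{1}{2} \|p\rho_0 - (1-p)\rho_1\|_{\operatorname{tr}}$.

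I do not foresee any serious obstacle here: the only mildly delicate point is confirming that no generality is lost by restricting to two-outcome POVMs, which I addressed at the outset via the standard grouping trick. The remainder is a routine spectral computation tying the operator $\Delta$ to its trace norm.
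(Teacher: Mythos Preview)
Your proof is correct and is the standard Helstrom argument. The paper does not actually prove this lemma; it simply cites it from Helstrom~\cite{helstrom1976quantum} and uses the statement as a black box in the quantum lower bound, so there is no paper proof to compare against.
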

\begin{theorem}\label{thr:thr6.1}
    If $\operatorname{phdeg}(f) = d \geq 3$ and $\alpha \leq \min(1/2,\frac{t}{2d}\hat{\|}f\hat{\|}_1^{-2/d})$, where $\hat{\|}f\hat{\|}_1 := \sum_{T\subseteq[t]}|\widehat{f}(T)|$, then $Q^1_\epsilon(f\operatorname{-BHP}^{\alpha}_{t,n}) = \Omega\big((\epsilon^2/(\alpha \hat{\|}f\hat{\|}_1))^{2/d} (n/t)^{1-2/d}\big)$.
\end{theorem}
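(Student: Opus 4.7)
The plan is to follow the structure of the classical lower bound (Theorem~\ref{thr:raryhidden2}), with scalar Fourier analysis of the characteristic function of Alice's message set replaced by matrix-valued Fourier analysis of Alice's quantum encoding $\rho : \{-1,1\}^n \to \operatorname{D}(\mathbb{C}^{2^c})$, and the KKL inequality replaced by the matrix-valued hypercontractive inequality of Theorem~\ref{lem:hyper}. The hypothesis $d \geq 3$ enters only to ensure that the exponent $1-2/d$ is positive.

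\textbf{Step 1 (Minimax and Helstrom).} I apply Yao's minimax principle with the same hard distribution as in Theorem~\ref{thr:thr5.1}. For a fixed deterministic quantum protocol using $c$ qubits and each $\sigma\in\mathbb{S}_n$, Bob's task on input $(\sigma,w)$ is to distinguish the two ensembles of Alice's encoded message corresponding to $w=B_f(x,\sigma)$ and $w=\overline{B_f(x,\sigma)}$. By Lemma~\ref{lem:lem3.5.c3}, after the appropriate renormalisation, his bias on $(\sigma,w)$ is at most $\tfrac12\|R_\sigma(w)\|_{\operatorname{tr}}$, where
\[
    R_\sigma(z) := \operatorname*{\mathbb{E}}_x\bigl[\rho(x)\bigl(\mathbf{1}[B_f(x,\sigma)=z]-\mathbf{1}[B_f(x,\sigma)=\overline{z}]\bigr)\bigr].
\]
Since $R_\sigma(\overline{z})=-R_\sigma(z)$, only Fourier coefficients $\widehat{R_\sigma}(V)$ with $|V|$ odd survive. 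Averaging over $w$, then Jensen and matrix Parseval reduce the goal to upper-bounding $\operatorname*{\mathbb{E}}_\sigma\sum_{|V|\text{ odd}}\|\widehat{R_\sigma}(V)\|_{\operatorname{tr}}^2$.

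\textbf{Step 2 (Matrix-valued Fourier expansion).} The expansion of $\widehat{R_\sigma}(V)$ in terms of $\widehat f$ and the operator-valued Fourier coefficients $\widehat{\rho}(S)$ is formally identical to the one derived for $\widehat{r_\sigma}(V)$ in the proof of Theorem~\ref{thr:raryhidden2}:
\[
    \widehat{R_\sigma}(V)=2\sum_{T_1,\dots,T_k\subseteq[t]}\widehat f(T_1)\cdots\widehat f(T_k)\,\widehat{\rho}\bigl(\sigma^{-1}(V\bullet T)\bigr),\qquad k=|V|.
\]
Taking trace norms, applying the triangle inequality and Cauchy-Schwarz, and then averaging over $\sigma$ via the same combinatorial lemma but with $\|\widehat{\rho}(\sigma^{-1}(\cdot))\|_{\operatorname{tr}}^2$ in place of $\widehat g(\sigma^{-1}(\cdot))^2$ yields a bound structurally identical to Eq.~\eqref{eq:eq4.2}, with $\widehat g(S)^2$ replaced by $\|\widehat{\rho}(S)\|_{\operatorname{tr}}^2$ and the prefactor $2^{2n}/|A|^2$ removed.

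\textbf{Step 3 (Hypercontractivity and sum splitting).} I substitute Theorem~\ref{lem:hyper}, $\sum_S\delta^{|S|}\|\widehat{\rho}(S)\|_{\operatorname{tr}}^2\leq 2^{2\delta c}$, for the KKL inequality. The factor $2$ in the exponent $2\delta c$ (versus $2\delta/(1+\delta)$ in KKL) is precisely what degrades the final exponent from $1-1/d$ to $1-2/d$. I then mimic the splitting of the outer sum at $k=2c$: for $1\leq k<2c$, I take $\delta=\ell/(2c)$ (rather than $\ell/(2tc)$), perform the geometric sum in $\ell=|S|$, and use the same combinatorial estimates $\binom{an}{m}\leq a^m\binom{n}{m}$ and $\binom{n}{m}\binom{\ell n}{\ell m}^{-1}\leq(m/n)^{(\ell-1)m}$ as in Theorem~\ref{thr:raryhidden2}; for $2c\leq k\leq\alpha n/t$, monotonicity in $k$ combined with Theorem~\ref{lem:hyper} at $\delta=1$ reduces the sum to its $k=2c$ value. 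Setting $c=\gamma(\epsilon^2/(\alpha\hat\|f\hat\|_1))^{2/d}(n/t)^{1-2/d}$ for a sufficiently small universal constant $\gamma$ and invoking $\alpha\leq\tfrac{t}{2d}\hat\|f\hat\|_1^{-2/d}$ to close Sum II, the averaged squared-bias becomes $O(\epsilon^4)$, whence Jensen gives an $O(\epsilon)$ bound on the averaged bias, contradicting correctness.

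\textbf{Main obstacle.} The principal conceptual hurdle is that the trace norm does not admit Parseval, so there is no quantum analogue of conditioning on a message-set $A$ as in the classical proof. One must Fourier-analyse $\rho$ itself directly and apply the triangle inequality for $\|\cdot\|_{\operatorname{tr}}$ before squaring, forcing all the combinatorial bookkeeping to be carried out with the $\widehat{\rho}(\sigma^{-1}(V\bullet T))$ operators rather than with a single fixed Fourier coefficient. The most delicate part of the calculation is balancing the choice of $\delta$ across the two ranges so that both Sum~I and Sum~II close under the same condition $\alpha\leq\tfrac{t}{2d}\hat\|f\hat\|_1^{-2/d}$ tightly enough to deliver the claimed $n^{1-2/d}$ rate.
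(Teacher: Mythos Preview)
Your overall architecture is sound, and Steps~2--3 point in the right direction, but Step~1 contains a genuine gap, and as a consequence the mechanism you give for the degradation from $n^{1-1/d}$ to $n^{1-2/d}$ is incorrect.

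\textbf{The gap.} In Step~1 you write that ``Jensen and matrix Parseval reduce the goal to upper-bounding $\operatorname*{\mathbb{E}}_\sigma\sum_{|V|\text{ odd}}\|\widehat{R_\sigma}(V)\|_{\operatorname{tr}}^2$.'' There is no Parseval identity for the trace norm; you yourself say so in your ``Main obstacle'' paragraph. If you tried to salvage this via $\|M\|_{\operatorname{tr}}\leq 2^{c/2}\|M\|_2$ and Hilbert--Schmidt Parseval, you would pick up an extraneous $2^c$ factor. More importantly, if your Step~1 reduction \emph{were} valid, then together with the rest of your plan (the $\binom{\alpha n/t}{k}$ first-power structure of Eq.~\eqref{eq:eq4.2} and the $(\ell-1)m$ estimate) you would actually prove $Q^1_\epsilon=\Omega(n^{1-1/d})$, which is too strong. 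Your claim that the loss comes from ``$2\delta c$ versus $2\delta/(1+\delta)$'' is wrong: in the classical proof of Theorem~\ref{thr:raryhidden2} the KKL bound is already relaxed to $\leq 2^{2c\delta}$ (see the line after Lemma~\ref{lem:kkl} is invoked), so the hypercontractive bound $2^{2\delta c}$ is not the source of the $1\to 2$ degradation.

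\textbf{How the paper actually proceeds.} The paper never squares the bias and never invokes any Parseval-type step. Instead it Fourier-expands $\rho_x$ in the $x$-variable (not $R_\sigma$ in the $z$-variable), applies the triangle inequality for $\|\cdot\|_{\operatorname{tr}}$ directly to obtain $\varepsilon_{bias}\leq\sum_{S,\sigma,w}|u(\sigma,w,S)|\,\|\widehat{\rho}(S)\|_{\operatorname{tr}}$, and evaluates $|u(\sigma,w,S)|$ exactly via Lemma~\ref{lem:lem6.1}. Only \emph{after} this does it apply Cauchy--Schwarz, now on the $S$-sum, which introduces a square root and leaves $\binom{\alpha n/t}{k}^2$ (squared, not first power) inside the radical. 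Consequently the relevant combinatorial estimate is $\binom{n}{m}^2\binom{\ell n}{\ell m}^{-1}\leq (m/n)^{(\ell-2)m}$ rather than the $(\ell-1)m$ version you propose; the split is at $k=4m$ with $\delta=\ell/(4tm)$. It is this $(\ell-2)$ exponent, stemming from the unavoidable Cauchy--Schwarz \emph{on a linear} (not quadratic) expression in $\|\widehat{\rho}(S)\|_{\operatorname{tr}}$, that is the true origin of $n^{1-2/d}$.
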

\begin{proof}
    Consider an arbitrary $m$-qubit communication protocol, which can be viewed as Alice encoding her input $x\in \{-1,1\}^n$ into a quantum state and sending it to Bob so that Bob can distinguish if his input $w$ equals $B_f(x,\sigma)$ or $\overline{B_f(x,\sigma)}$.
    Let $\rho:\{-1,1\}^n\to \operatorname{D}(\mathbb{C}^{2^m})$ be Alice's encoding function. For our `hard' distribution, Alice and Bob receive $x\in\{-1,1\}^n$ and $\sigma\in\mathbb{S}_n$, respectively, uniformly at random, while Bob's input $w\in\{-1,1\}^{\alpha n/t}$ equals $B_f(x,\sigma)$ with probability $1/2$ or $\overline{B_f(x,\sigma)}$ with probability $1/2$. Let $p_x := 1/2^{n}$, $p_\sigma := 1/n!$ and $p_b := 1/2$, then our hard distribution $\mathcal{P}$ is
	\begin{align}
	    \label{eq:eq3.5.c3}
        \operatorname{Pr}[x,b,\sigma,w] = p_xp_\sigma p_b \mathbf{1}[B_f(x,\sigma) \circ b^{\alpha n/t} = w].
	\end{align}
	    
	Conditioning on Bob's input $(\sigma, w)$, from his perspective, Alice sends the message $\rho_x$ with probability $\operatorname{Pr}[x|\sigma,w]$. Therefore, conditioned on the value of $b$, Bob receives one of the following two quantum states $\rho_{+1}^{\sigma,w}$ and $\rho_{-1}^{\sigma,w}$, each appearing with probability $\operatorname{Pr}[b=+1|\sigma,w]$ and $\operatorname{Pr}[b=-1|\sigma,w]$, respectively,
	\begin{align}
	\label{eq:rhoyesrhono}
	\begin{aligned}
		\rho_{+1}^{\sigma,w} &= \sum_{x\in\{-1,1\}^n} \operatorname{Pr}[x|b=+1,\sigma,w]\cdot\rho_x = \frac{1}{\operatorname{Pr}[b=+1,\sigma,w]} \sum_{x\in\{-1,1\}^n} \operatorname{Pr}[x,b=+1,\sigma,w] \cdot \rho_x,\\ 
	    \rho_{-1}^{\sigma,w} &= \sum_{x\in\{-1,1\}^n} \operatorname{Pr}[x|b=-1,\sigma,w]\cdot\rho_x = \frac{1}{\operatorname{Pr}[b=-1,\sigma,w]} \sum_{x\in\{-1,1\}^n} \operatorname{Pr}[x,b=-1,\sigma,w] \cdot \rho_x.
	\end{aligned}
	\end{align}
	His best strategy to determine the value of $b$ conditioning on his input $(\sigma,w)$ is no more than the chance to distinguish between these two quantum states $\rho_{+1}^{\sigma,w}$ and $\rho_{-1}^{\sigma,w}$.
	
	Now let $\varepsilon_{bias}$ be the bias of the protocol that distinguishes between $\rho_{+1}^{\sigma,w}$ and $\rho_{-1}^{\sigma,w}$. According to Lemma~\ref{lem:lem3.5.c3}, the bias $\varepsilon_{bias}$ of any quantum protocol for a fixed $\sigma$ and $w$ can be upper bounded as
	\begin{align*}
		\varepsilon_{bias} \leq \big\|{\operatorname{Pr}}[b=+1|\sigma,w]\cdot\rho_{+1}^{\sigma,w} - \operatorname{Pr}[b=-1|\sigma,w]\cdot\rho_{-1}^{\sigma,w}\big\|_{\operatorname{tr}}.
	\end{align*}
	We prove in Theorem~\ref{thr:raryhidden} below that, if $\alpha \leq \min(1/2,\frac{t}{2d}\hat{\|}f\hat{\|}_1^{-2/d})$ and $m \leq \gamma(\epsilon^2/(\alpha \hat{\|}f\hat{\|}_1))^{2/d} (n/t)^{1-2/d}$ for a universal constant $\gamma$, then the average bias over $\sigma$ and $w$ is at most $\epsilon^2$, i.e.,
	\begin{align*}
		\operatorname*{\mathbb{E}}_{(\sigma,w)\sim\mathcal{P}_{\sigma,w}}[\varepsilon_{bias}] \leq \epsilon^2,
	\end{align*}
	where $\mathcal{P}_{\sigma,w}$ is the marginal distribution of $\mathcal{P}$. Therefore, by Markov's inequality, for at least a $(1-\epsilon)$-fraction of $\sigma$ and $w$, the bias in distinguishing between $\rho_{+1}^{\sigma,w}$ and $\rho_{-1}^{\sigma,w}$ is $\epsilon$ small. Therefore, Bob's advantage over randomly guessing the right distribution will be at most $\epsilon$ (for the event that $\sigma$ and $w$ are such that the distance between $\rho_{+1}^{\sigma,w}$ and $\rho_{-1}^{\sigma,w}$ is more than $\epsilon$) plus $\epsilon/2$ (for the advantage over random guessing when $\varepsilon_{bias} \leq \epsilon$), and so $m = \Omega\big((\epsilon^2/(\alpha \hat{\|}f\hat{\|}_1))^{2/d} (n/t)^{1-2/d}\big)$.
\end{proof}
\begin{theorem}
	\label{thr:raryhidden}
    For $x\in\{-1,1\}^n$, $\sigma\in\mathbb{S}_n$, $w\in\{-1,1\}^{\alpha n/t}$ and $b\in\{-1,1\}$, consider the probability distribution $\mathcal{P}$ defined in Eq.~(\ref{eq:eq3.5.c3}). Given an encoding function $\rho:\{-1,1\}^n\to \operatorname{D}(\mathbb{C}^{2^m})$, consider the quantum states $\rho_{+1}^{\sigma,w}$ and $\rho_{-1}^{\sigma,w}$ from Eq.~(\ref{eq:rhoyesrhono}). If $\alpha \leq \min(1/2,\frac{t}{2d}\hat{\|}f\hat{\|}_1^{-2/d})$, then there is a universal constant $\gamma>0$ (independent of $n$, $t$, $d$ and $\alpha$), such that, for all $\epsilon > 0$, if $m \leq \gamma(\epsilon^2/(\alpha \hat{\|}f\hat{\|}_1))^{2/d} (n/t)^{1-2/d}$, then
    \begin{align*}
        \operatorname*{\mathbb{E}}_{(\sigma,w)\sim\mathcal{P}_{\sigma,w}}\Big[\big\|\operatorname{Pr}[b=+1|\sigma,w]\cdot\rho_{+1}^{\sigma,w} - \operatorname{Pr}[b=-1|\sigma,w]\cdot\rho_{-1}^{\sigma,w}\big\|_{\operatorname{tr}}\Big] \leq \epsilon^2.
    \end{align*}
\end{theorem}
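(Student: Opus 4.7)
My plan is to mirror the proof of the classical Theorem~\ref{thr:raryhidden2} with matrix-valued substitutions throughout. A direct computation using $\operatorname{Pr}[\sigma,w] = p_\sigma\operatorname{Pr}[w|\sigma]$ collapses the quantity of interest to
\begin{align*}
\operatorname*{\mathbb{E}}_{(\sigma,w)}\bigl[\|\operatorname{Pr}[b=+1|\sigma,w]\rho_{+1}^{\sigma,w} - \operatorname{Pr}[b=-1|\sigma,w]\rho_{-1}^{\sigma,w}\|_{\operatorname{tr}}\bigr] = \tfrac{1}{2}\operatorname*{\mathbb{E}}_\sigma\sum_w\|M(\sigma,w)\|_{\operatorname{tr}},
\end{align*}
where $M(\sigma,w) := \sum_x p_x \rho_x\bigl(\mathbf{1}[B_f(x,\sigma)=w] - \mathbf{1}[B_f(x,\sigma)=\overline w]\bigr)$. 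Using the identity $\mathbf{1}[y=w] - \mathbf{1}[y=\overline w] = \tfrac{2}{2^{\alpha n/t}}\sum_{V\subseteq[\alpha n/t],\,|V|\text{ odd}}\chi_V(y)\chi_V(w)$ combined with the Fourier expansion of $\chi_V(B_f(x,\sigma))$ exactly as in the classical proof gives
\begin{align*}
M(\sigma,w) = \tfrac{2}{2^{\alpha n/t}}\sum_{|V|\text{ odd}}\chi_V(w)\sum_{T_1,\dots,T_{|V|}\subseteq[t]}\Bigl(\prod_j\widehat f(T_j)\Bigr)\widehat\rho(\sigma^{-1}(V\bullet T)),
\end{align*}
where $\widehat\rho(S) := \tfrac{1}{2^n}\sum_x\chi_S(x)\rho_x$ is the matrix-valued Fourier coefficient of the encoding.

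Second, I would apply the triangle inequality for the trace norm together with $|\chi_V(w)|=1$; summing over $w$ then cancels the $2^{\alpha n/t}$ prefactor. Taking $\operatorname*{\mathbb{E}}_\sigma$ and using $\operatorname*{\mathbb{E}}_\sigma X \leq \sqrt{\operatorname*{\mathbb{E}}_\sigma X^2}$ reduces the problem to bounding $\operatorname*{\mathbb{E}}_\sigma\|\widehat\rho(\sigma^{-1}(V\bullet T))\|_{\operatorname{tr}}^2$. The proof of the permutation lemma inside Theorem~\ref{thr:raryhidden2} uses only uniformity of $\sigma$, so it transfers verbatim to the matrix-valued setting (with $\widehat g(S)^2$ replaced by $\|\widehat\rho(S)\|_{\operatorname{tr}}^2$), yielding $\operatorname*{\mathbb{E}}_\sigma\|\widehat\rho(\sigma^{-1}(V\bullet T))\|_{\operatorname{tr}}^2 = \binom{n}{|S|}^{-1}\sum_{|S|=\sum_j|T_j|}\|\widehat\rho(S)\|_{\operatorname{tr}}^2$. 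Relaxing the constraint $|S|=\sum_j|T_j|$ to $kd \leq |S| \leq kt$ (valid since $\widehat f(T_j)=0$ for $|T_j|<d$) lets me pull the $S$-sum outside the $T$-sum and collapse $\sum_T\prod_j|\widehat f(T_j)|$ to $\hat{\|}f\hat{\|}_1^k$, giving
\begin{align*}
\operatorname*{\mathbb{E}}_{(\sigma,w)\sim\mathcal{P}_{\sigma,w}}\bigl[\|\cdot\|_{\operatorname{tr}}\bigr] \leq \sum_{\substack{k=1\\k\text{ odd}}}^{\alpha n/t}\binom{\alpha n/t}{k}\hat{\|}f\hat{\|}_1^k\sqrt{\sum_{kd \leq |S| \leq kt}\frac{\|\widehat\rho(S)\|_{\operatorname{tr}}^2}{\binom{n}{|S|}}}.
\end{align*}

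The final step is to bound $\sum_{|S|=\ell}\|\widehat\rho(S)\|_{\operatorname{tr}}^2$ via the matrix-valued hypercontractive inequality (Theorem~\ref{lem:hyper}): optimizing $\delta = \ell/(2m\ln 2)$ gives the tight bound $(2em\ln 2/\ell)^\ell$ for $\ell \leq 2m\ln 2$, while $\delta=1$ gives the trivial bound $\sum_S\|\widehat\rho(S)\|_{\operatorname{tr}}^2 \leq 2^{2m}$. Following the classical proof, I would split the $k$-sum at $k$ of order $m$: for small $k$, the tight per-$\ell$ bound combined with $\binom{an}{k}\leq a^k\binom{n}{k}$ and $\binom{n}{m}\binom{\ell n}{\ell m}^{-1}\leq (m/n)^{(\ell-1)m}$ shows the sum is dominated by the $k=1$, $\ell=d$ term and produces the exponent $1-2/d$ in $n/t$; for large $k$, the trivial bound together with monotonicity of $h(k):=(2\alpha d\hat{\|}f\hat{\|}_1^{2/d}/t)^{kd}\binom{2\alpha n/t}{k}/\binom{2\alpha nd/t}{kd}$ (ensured by $\alpha\leq t\hat{\|}f\hat{\|}_1^{-2/d}/(2d)$, exactly as in Sum~II of Theorem~\ref{thr:raryhidden2}) reduces the tail to a single term controlled under the hypothesis on $m$. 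Summing both ranges with $\gamma$ small enough gives the target $\leq\epsilon^2$.

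The principal obstacle relative to the classical case is the absence of a Parseval-type identity for the trace norm: the classical argument exploits $\|r_\sigma\|_{\operatorname{tvd}}^2 \leq 2^{2\alpha n/t}\sum_V\widehat{r_\sigma}(V)^2$ to arrive at a clean squared summand $\hat{\|}f\hat{\|}_1^{2k}$, whereas here the orthogonality of the $\chi_V(w)$ is lost at the triangle-inequality step and I am left with a single square root outside the $k$-sum. The remaining combinatorics still closes because the bound $2^{2\delta m}$ in Theorem~\ref{lem:hyper} contributes an $(m/\ell)^\ell$ factor per Fourier level — playing the role of the classical $(tc/\ell)^\ell$ but with an effective doubling of the $m$-exponent, which explains the weaker $n^{1-2/d}$ exponent (rather than $n^{1-1/d}$) and the need for $d \geq 3$ for the bound to be non-trivial.
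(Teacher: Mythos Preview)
Your approach is essentially the paper's: the Fourier expansion of $M(\sigma,w)$ is equivalent to the paper's Lemma~\ref{lem:lem6.1}, and the subsequent Jensen/permutation-lemma/hypercontractive/split-sum steps match, landing on the same bound $\sum_k\hat{\|}f\hat{\|}_1^k\sqrt{\sum_{kd\le|S|\le kt}\binom{\alpha n/t}{k}^2\binom{n}{|S|}^{-1}\|\widehat\rho(S)\|_{\operatorname{tr}}^2}$.

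Two corrections are needed for the combinatorics to close. First, once the $\binom{\alpha n/t}{k}$ outside your square root is pulled inside it becomes $\binom{\alpha n/t}{k}^2$, so Sum~I requires the \emph{squared} binomial estimate $\binom{n}{m}^2\binom{\ell n}{\ell m}^{-1}\leq (m/n)^{(\ell-2)m}$ rather than the $(\ell-1)$ version you cite; correspondingly, the monotone function in Sum~II must be $h(k)=\bigl(\tfrac{2\alpha d}{t}\hat{\|}f\hat{\|}_1^{2/d}\bigr)^{kd/2}\binom{2\alpha n/t}{k}\big/\sqrt{\binom{2\alpha nd/t}{kd}}$, not the classical $h(k)$ you wrote. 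Second, your closing diagnosis slightly misplaces the source of the $n^{1-2/d}$ exponent: Theorem~\ref{lem:hyper} gives $\delta^{-\ell}2^{2\delta m}$, which has exactly the same shape as the classical $\delta^{-\ell}2^{2\delta c}$ after KKL, so there is no ``doubling'' there. The weakening from $1-1/d$ to $1-2/d$ is driven by the extra $\binom{\alpha n/t}{k}$ factor --- equivalently, by the $(\ell-2)$ versus $(\ell-1)$ exponent above --- which is precisely the cost of replacing Parseval by the triangle inequality, as you correctly identify.
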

\begin{proof}
    For simplicity we shall write
	\begin{align*}
	    \varepsilon_{bias} := \operatorname*{\mathbb{E}}_{(\sigma,w)\sim\mathcal{P}_{\sigma,w}}\Big[\big\|\operatorname{Pr}[b=+1|\sigma,w]\cdot\rho_{+1}^{\sigma,w} - \operatorname{Pr}[b=-1|\sigma,w]\cdot\rho_{-1}^{\sigma,w}\big\|_{\operatorname{tr}}\Big].
	\end{align*}
    By the definition of $\varepsilon_{bias}$, we have that
    \begin{align*}
        \varepsilon_{bias} &= \sum_{\sigma\in\mathbb{S}_n}\sum_{w\in\{-1,1\}^{\alpha n/t}}\operatorname{Pr}[\sigma,w]\cdot \big\|\operatorname{Pr}[b=+1|\sigma,w]\cdot\rho_{+1}^{\sigma,w} - \operatorname{Pr}[b=-1|\sigma,w]\cdot\rho_{-1}^{\sigma,w}\big\|_{\operatorname{tr}}\\
        &= \sum_{\sigma\in\mathbb{S}_n}\sum_{w\in\{-1,1\}^{\alpha n/t}}\Big\|\sum_{x\in\{-1,1\}^n}\big(\operatorname{Pr}[x,b=+1,\sigma,w]\cdot\rho_x - \operatorname{Pr}[x,b=-1,\sigma,w]\cdot\rho_x\big)\Big\|_{\operatorname{tr}}\\
        &= \sum_{\sigma\in\mathbb{S}_n}\sum_{w\in\{-1,1\}^{\alpha n/t}}\Big\|\sum_{x\in\{-1,1\}^n}\frac{1}{2}p_xp_\sigma \left(\mathbf{1}\big[B_f(x,\sigma) = w\big] - \mathbf{1}\big[B_f(x,\sigma) = \overline{w}\big]\right)\rho_x\Big\|_{\text{tr}}\\
        &= \sum_{\sigma\in\mathbb{S}_n}\sum_{w\in\{-1,1\}^{\alpha n/t}}\Big\|\sum_{x\in\{-1,1\}^n}\frac{1}{2}p_xp_\sigma \left(\mathbf{1}\big[B_f(x,\sigma) = w\big] - \mathbf{1}\big[B_f(x,\sigma) = \overline{w}\big]\right)\sum_{S\subseteq[n]}\widehat{\rho}(S)\chi_S(x)\Big\|_{\text{tr}}\\
        &= \sum_{\sigma\in\mathbb{S}_n}\sum_{w\in\{-1,1\}^{\alpha n/t}} \Big\|\sum_{S\subseteq[n]} u(\sigma,w,S) \widehat{\rho}(S)\Big\|_{\text{tr}}\\
        &\leq \sum_{S\subseteq[n]}\sum_{\sigma\in\mathbb{S}_n}\sum_{w\in\{-1,1\}^{\alpha n/t}}|u(\sigma,w,S)|\|\widehat{\rho}(S)\|_{\text{tr}},
    \end{align*}
    where we used the Fourier expansion of $\rho_x$ and defined
	\begin{align*}
		u(\sigma,w,S) := \frac{1}{2}\sum_{x\in\{-1,1\}^n}p_xp_\sigma \chi_S(x) \left(\mathbf{1}\big[B_f(x,\sigma) = w\big] - \mathbf{1}\big[B_f(x,\sigma) = \overline{w}\big]\right).
	\end{align*}
    
    We now need to analyse $|u(\sigma,w,S)|$ for different $\sigma, w, S$. The way it is done is by `breaking' $\sigma(S)$ into blocks. Consider the set $[i|j] := \{i+1,i+2\dots,j\}$. Given $V \subseteq[n]$, we note that the set $V\cap [(j-1)t|jt]$ contains the elements of $V$ that are in the interval $[(j-1)t+1,jt]$.  From this we define $U_j\subseteq[t]$, for $j\in[n/t]$, as the set with elements from $V\cap [(j-1)t|jt]$ all shifted by $-(t-1)j$, so they fall in the interval $[1,t]$. It is clear that $V = \bigcup_{j=1}^{n/t} V\cap [(j-1)t|jt]$, which we shall write as $V = [n/t]\bullet \mathbb{U}_V$, where $\mathbb{U}_V = (U_1,\dots,U_{n/t}) \in [t]^{n/t}$. The sequence of sets $\mathbb{U}_V$ is giving the decomposition of $V$ into $n/t$ blocks of length $t$. From it we can define $\mathbb{U}^\ast_V$ as the sequence of entries from $\mathbb{U}_V$ that are nonempty. With these in mind, the quantity $|u(\sigma,w,S)|$ is given by the following lemma, which is proven at the end of the section.
    \begin{lemma}\label{lem:lem6.1}
        Given $S\subseteq[n]$ and $\sigma\in\mathbb{S}_n$, take $\sigma(S) = [n/t]\bullet \mathbb{U}_{\sigma(S)}$ with $\mathbb{U}_{\sigma(S)} = (U_1,\dots,U_{n/t}) \in [t]^{n/t}$. Define $\Delta := \{V\subseteq[\alpha n] : |\mathbb{U}_{V}^\ast| ~\text{odd and}~ |U| \geq d, ~\forall U \in \mathbb{U}_{V}^\ast\}$, where $\operatorname{phdeg}(f) = d$. Then
        \begin{align*}
	        |u(\sigma, w, S)| = \begin{cases}
		        \frac{p_\sigma}{2^{\alpha n/t}} \prod_{U\in \mathbb{U}^\ast_{\sigma(S)}}|\widehat{f}(U)| &~\text{if}~ \sigma(S)\in \Delta,\\
		        0 &~\text{if}~ \sigma(S)\notin \Delta.
	        \end{cases}
        \end{align*}
    \end{lemma}
    From Lemma~\ref{lem:lem6.1} we see that $u(\sigma,w,S)$ is only nonzero when $\sigma(S)\in \Delta$, so the expression for the bias becomes
    \begin{align}
        \varepsilon_{bias} &\leq \sum_{S\subseteq[n]}\sum_{\substack{\sigma\in\mathbb{S}_n \\ \sigma(S) \in \Delta}}p_\sigma \sum_{w\in\{-1,1\}^{\alpha n/t}}\frac{1}{2^{\alpha n/t}}\|\widehat{\rho}(S)\|_{\text{tr}}\prod_{U\in \mathbb{U}^\ast_{\sigma(S)}}|\widehat{f}(U)|\nonumber\\
        &= \sum_{S\subseteq[n]}\|\widehat{\rho}(S)\|_{\text{tr}}\sum_{\substack{\sigma\in\mathbb{S}_n \\ \sigma(S) \in \Delta}}p_\sigma \prod_{U\in \mathbb{U}^\ast_{\sigma(S)}}|\widehat{f}(U)|.\label{eq:eq6.01}
    \end{align}
    One of the requirements for $\sigma(S)\in \Delta$ is that the block decomposition of $\sigma(S)$ must have an odd number of nonempty blocks. Given $S$ and $\sigma$, the number of nonempty blocks of $\sigma(S)$ (which we will denote by the index $k$ below) is lower-bounded by $\lceil |S|/t\rceil \geq 1$ and upper-bounded by $\min(\lfloor|S|/d\rfloor, \alpha n/t) \leq \alpha n/t$ (since $\sigma(S)\subseteq[\alpha n]$). With these points in mind, and decomposing $\sigma(S)$ as $V\bullet \mathbb{U}$ for some $V\subseteq[\alpha n/t]$ with $|V|=k$ odd (number of blocks) and $\mathbb{U}=(U_1,\dots,U_k)$, we have
    \begin{align}
        \sum_{\substack{\sigma\in\mathbb{S}_n \\ \sigma(S) \in \Delta}}p_\sigma &\prod_{U\in \mathbb{U}^\ast_{\sigma(S)}}|\widehat{f}(U)| \nonumber\\
        &= \frac{1}{n!}\sum_{\sigma\in\mathbb{S}_n} [\sigma(S)\in \Delta]\prod_{U\in \mathbb{U}^\ast_{\sigma(S)}}|\widehat{f}(U)| \nonumber\\\displaybreak[0]
        &\leq \sum_{\substack{k = 1 \\ k~\text{odd}}}^{\alpha n/t}\frac{1}{n!}\sum_{\substack{U_1,\dots,U_k\subseteq[t]\\ d\leq |U_j| \leq t}} \mathbf{1}\left[\sum_{j=1}^k |U_j| = |S|\right]\sum_{\substack{V\subseteq[\alpha n/t] \\ |V|=k}}|\{\sigma\in \mathbb{S}_n : \sigma^{-1}(V\bullet \mathbb{U}) = S\}| \prod_{j=1}^k|\widehat{f}(U_j)|\nonumber\\\displaybreak[0]
        &= \sum_{\substack{k = 1 \\ k~\text{odd}}}^{\alpha n/t}\frac{\binom{\alpha n/t}{k}}{\binom{n}{|S|}}\sum_{\substack{U_1,\dots,U_k\subseteq[t]\\ d\leq |U_j| \leq t}} \mathbf{1}\left[\sum_{j=1}^k |U_j| = |S|\right]\prod_{j=1}^k|\widehat{f}(U_j)|, \label{eq:eq6.02}
    \end{align}
    since $|\{\sigma \in \mathbb{S}_n : \sigma^{-1}(V\bullet \mathbb{U}) = S\}| = |S|!(n-|S|)!$ for any $V$. By plugging Eq.~(\ref{eq:eq6.02}) into Eq.~(\ref{eq:eq6.01}),
    \begin{align*}
        \displaybreak[0]\varepsilon_{bias} &\leq \sum_{S\subseteq[n]}\sum_{\substack{k = 1 \\ k~\text{odd}}}^{\alpha n/t} \frac{\binom{\alpha n/t}{k}}{\binom{n}{|S|}} \|\widehat{\rho}(S)\|_{\text{tr}}\sum_{\substack{U_1,\dots,U_k\subseteq[t]\\ d\leq |U_j| \leq t}} \mathbf{1}\left[\sum_{j=1}^k |U_j| = |S|\right]\prod_{j=1}^k|\widehat{f}(U_j)| \\\displaybreak[0]
        &= \sum_{\substack{k = 1 \\ k~ \text{odd}}}^{\alpha n/t}\sum_{\substack{U_1,\dots,U_k\subseteq[t]\\ d\leq |U_j| \leq t}} \frac{\prod_{j=1}^k|\widehat{f}(U_j)|}{\binom{\alpha n/t}{k}} \sum_{\substack{S\subseteq[n] \\ |S| = \sum_{j=1}^k |U_j|}} \frac{\binom{\alpha n/t}{k}^2}{\binom{n}{|S|}} \|\widehat{\rho}(S)\|_{\text{tr}}.
    \end{align*}
    By Cauchy-Schwarz, 
    \begin{align*}
        \displaybreak[0]\sum_{\substack{S\subseteq[n] \\ |S| = \sum_{j=1}^k |U_j|}} \frac{\binom{\alpha n/t}{k}^2}{\binom{n}{|S|}} \|\widehat{\rho}(S)\|_{\text{tr}} &\leq \sqrt{\sum_{\substack{S\subseteq[n] \\ |S| = \sum_{j=1}^k |U_j|}} \frac{\binom{\alpha n/t}{k}^2}{\binom{n}{|S|}}}\sqrt{\sum_{\substack{S\subseteq[n] \\ |S| = \sum_{j=1}^k |U_j|}} \frac{\binom{\alpha n/t}{k}^2}{\binom{n}{|S|}} \|\widehat{\rho}(S)\|_{\text{tr}}^2}\\\displaybreak[0]
        &= \binom{\alpha n/t}{k}\sqrt{\sum_{\substack{S\subseteq[n] \\ |S| = \sum_{j=1}^k |U_j|}} \frac{\binom{\alpha n/t}{k}^2}{\binom{n}{|S|}} \|\widehat{\rho}(S)\|_{\text{tr}}^2}.\displaybreak[0]
    \end{align*}
    Therefore, by expanding the constraint $|S| = \sum_{j=1}^k |U_j|$ to the interval $kd \leq |S| \leq kt$, since $d \leq |U_j| \leq t$ for all $j\in[k]$ ($\widehat{f}(U_j) = 0$ if $|U_j|<\operatorname{phdeg}(f)=d$), and by denoting the sum of the Fourier masses of $f$ by $\hat{\|}f\hat{\|}_1 := \sum_{U\subseteq[t]} |\widehat{f}(U)|$, we get
    \begin{align*}
        \varepsilon_{bias} \leq \sum_{\substack{k = 1 \\ k~ \text{odd}}}^{\alpha n/t}\hat{\|}f\hat{\|}_1^k \sqrt{\sum_{\substack{S\subseteq[n] \\ kd \leq |S| \leq kt}} \frac{\binom{\alpha n/t}{k}^2}{\binom{n}{|S|}} \|\widehat{\rho}(S)\|_{\text{tr}}^2}.
    \end{align*}
    
    Similarly to the classical proof, we shall divide the above sum in two parts: one in the range $1\leq k < 4m$, and the other in the range $4m \leq k \leq \alpha n/t$.
    
    \paragraph{Sum I ($1\leq k < 4m)$.} In order to upper bound the summation over $S\subseteq[n]$, consider $|S|=\ell$ for some $kd\leq \ell \leq kt$. We pick $\delta = \ell/4tm$ in Lemma~\ref{lem:hyper} (note that $\delta\in[0,1]$) to obtain
    \begin{align*}
        \sum_{\substack{S\subseteq[n] \\ |S| = \ell}} \|\widehat{\rho}(S)\|_{\operatorname{tr}}^2 \leq \frac{1}{\delta^{\ell}}\sum_{S\subseteq[n]}\delta^{|S|} \|\widehat{f}(S)\|_{\operatorname{tr}}^2 \leq \frac{1}{\delta^{\ell}}2^{2\delta m} = \left(\frac{2^{1/(2t)}4tm}{\ell}\right)^{\ell}.
    \end{align*}
    Therefore, and by using the upper bounds $\binom{an}{m} \leq a^m\binom{n}{m}$ for $a \leq 1$ to obtain Eq.~(\ref{eq:eq5.2aa}) below and then $\binom{n}{m}^2\binom{ln}{lm}^{-1} \leq (\frac{m}{n})^{(l-2)m}$ (see~\cite[Appendix A.5]{shi2012limits}) to obtain Eq.~(\ref{eq:eq5.2bb}), we can write
    \begin{align}
        \displaybreak[0]\sum_{\substack{k = 1 \\ k~ \text{odd}}}^{4m-1}\hat{\|}f\hat{\|}_1^k \sqrt{\sum_{\substack{S\subseteq[n] \\ kd \leq |S| \leq kt}} \frac{\binom{\alpha n/t}{k}^2}{\binom{n}{|S|}} \|\widehat{\rho}(S)\|_{\text{tr}}^2} &\leq \sum_{\substack{k = 1 \\ k~ \text{odd}}}^{4m-1}\hat{\|}f\hat{\|}_1^k \sqrt{\sum_{kd \leq |S| \leq kt} \frac{\binom{\alpha n/t}{k}^2}{\binom{n}{|S|}} \left(\frac{2^{1/(2t)}4tm}{|S|} \right)^{|S|}} \nonumber\\\displaybreak[0]
        &\leq \sum_{\substack{k = 1 \\ k~ \text{odd}}}^{4m-1}\hat{\|}f\hat{\|}_1^k \sqrt{\sum_{kd \leq |S| \leq kt} \left(\frac{\alpha |S|}{kt}\right)^{2k} \frac{\binom{kn/|S|}{k}^2}{\binom{n}{|S|}} \left(\frac{2^{1/(2t)}4tm}{|S|} \right)^{|S|}} \label{eq:eq5.2aa}\\\displaybreak[0]
        &\leq \sum_{\substack{k = 1 \\ k~ \text{odd}}}^{4m-1}\hat{\|}f\hat{\|}_1^k \sqrt{\sum_{kd \leq |S| \leq kt} \left(\frac{\alpha |S|}{kt}\right)^{2k} \left(\frac{|S|}{n}\right)^{|S|-2k} \left(\frac{2^{1/(2t)}4tm}{|S|} \right)^{|S|}} \label{eq:eq5.2bb}\\\displaybreak[0]
        &= \sum_{\substack{k = 1 \\ k~ \text{odd}}}^{4m-1}\hat{\|}f\hat{\|}_1^k \sqrt{\sum_{kd \leq |S| \leq kt} \left(\frac{\alpha n}{kt}\right)^{2k} \left(\frac{2^{1/(2t)}4m}{n/t} \right)^{|S|}}\nonumber\\\displaybreak[0]
        &\leq \sum_{\substack{k = 1 \\ k~ \text{odd}}}^{4m-1}\sqrt{2}\hat{\|}f\hat{\|}_1^k \left(\frac{\alpha n}{kt}\right)^{k} \left(\frac{2^{1/(2t)}4m}{n/t} \right)^{kd/2},\nonumber\displaybreak[0]
    \end{align}
    where we used a geometric series in the last step with $2^{1/(2t)}4tm/n \leq 1/2$ for $n$ sufficiently large and the quantity $\binom{kn/|S|}{k}$ should be interpreted as $\frac{1}{k!}\prod_{j=0}^{k-1}\big(\frac{kn}{|S|}-j\big)$. Finally, and by using that $m \leq \gamma(\epsilon^2/(\alpha \hat{\|}f\hat{\|}_1))^{2/d} (n/t)^{1-2/d}$, we obtain
    \begin{align*}
        \sum_{\substack{k = 1 \\ k~ \text{odd}}}^{4m-1}\hat{\|}f\hat{\|}_1^k \sqrt{\sum_{\substack{S\subseteq[n] \\ kd \leq |S| \leq kt}} \frac{\binom{\alpha n/t}{k}^2}{\binom{n}{|S|}} \|\widehat{\rho}(S)\|_{\text{tr}}^2} &\leq \sum_{\substack{k = 1 \\ k~ \text{odd}}}^{4m-1}\sqrt{2}\hat{\|}f\hat{\|}_1^k \left(\frac{\alpha n}{kt}\right)^{k} \left(\frac{2^{1/(2t)}4\gamma\epsilon^{4/d}(n/t)^{1-2/d}}{\alpha^{2/d}\hat{\|}f\hat{\|}_1^{2/d} n/t} \right)^{kd/2}\\
        &= \sum_{\substack{k = 1 \\ k~ \text{odd}}}^{4m-1}\sqrt{2} \left(\frac{2^{1/(2t)}4\gamma\epsilon^{4/d}}{k^{2/d}} \right)^{kd/2} \leq \frac{\epsilon^2}{2},
    \end{align*}
    taking $\gamma$ sufficiently small in the last step.
    
    \paragraph{Sum II ($4m\leq k \leq \alpha n/t$).} Again using the approximation
	$$
	    \frac{\binom{\alpha n/t}{k}^2}{\binom{n}{kd}} \leq \left(\frac{2\alpha d}{t}\right)^{kd} \frac{\binom{\alpha n/t}{k}^2}{\binom{2\alpha nd/t}{kd}} \leq \left(\frac{2\alpha d}{t}\right)^{kd} \frac{\binom{2\alpha n/t}{k}^2}{\binom{2\alpha nd/t}{kd}}
	$$ 
	($\alpha \leq 1/2$, so $\frac{2\alpha d}{t} \leq 1$), note that the function $h(k) := \big(\frac{2\alpha d}{t}\hat{\|}f\hat{\|}_1^{2/d}\big)^{kd/2} \binom{2\alpha n/t}{k}/\sqrt{\binom{2\alpha nd/t}{kd}}$ is non-increasing in the range $1\leq k \leq \alpha n/t$, since $\frac{2\alpha d}{t}\hat{\|}f\hat{\|}_1^{2/d} \leq 1$ by assumption and then
	\begin{align*}
	    \frac{h(k-1)}{h(k)} \geq \frac{\binom{2\alpha n/t}{k-1}}{\sqrt{\binom{2\alpha nd/t}{kd-d}}}\frac{\sqrt{\binom{2\alpha nd/t}{kd}}}{\binom{2\alpha n/t}{k}} &= \sqrt{\frac{kd}{2\alpha nd/t-kd+d}\prod_{j=1}^{d-1}\frac{2\alpha nd/t-kd+j}{kd-j}}\\
	    &\geq \sqrt{\frac{kd}{2\alpha nd/t-kd+d}\prod_{j=1}^{d-1}\frac{2\alpha nd/t-kd+j+1}{kd-j+1}}\displaybreak[0]\\
		&= \sqrt{\prod_{j=1}^{d-2}\frac{2\alpha nd/t-kd+j+1}{kd-j}} \geq 1,
	\end{align*}
	where we used that $\frac{a}{b} \geq \frac{a+s}{b+s}$ for all $a,b,s>0$ with $a\geq b$ (note that $k \leq \alpha n/t$ implies that $2\alpha nd/t-kd \geq kd$).
	Thus, by using Lemma~\ref{lem:hyper} with $\delta = 1$, the inequality $\binom{n}{m}^2\binom{ln}{lm}^{-1} \leq (\frac{m}{n})^{(l-2)m}$ and that $m \leq \gamma(\epsilon^2/(\alpha \hat{\|}f\hat{\|}_1))^{2/d} (n/t)^{1-2/d}$, we obtain
    \begin{align*}
        \displaybreak[0]\sum_{\substack{k = 4m \\ k~ \text{odd}}}^{\alpha n/t}\hat{\|}f\hat{\|}_1^k \sqrt{\sum_{\substack{S\subseteq[n] \\ kd \leq |S| \leq kt}} \frac{\binom{\alpha n/t}{k}^2}{\binom{n}{|S|}} \|\widehat{\rho}(S)\|_{\text{tr}}^2} &\leq h(4m)\sum_{\substack{k = 4m \\ k~ \text{odd}}}^{\alpha n/t} \sqrt{\sum_{\substack{S\subseteq[n] \\ kd \leq |S| \leq kt}} \|\widehat{\rho}(S)\|_{\text{tr}}^2}\\\displaybreak[0]
        &\leq \frac{\alpha n}{t}2^m h(4m)\\\displaybreak[0]
        &\leq \frac{\alpha n}{t}2^m \left(\frac{2\alpha d}{t}\hat{\|}f\hat{\|}_1^{2/d}\right)^{2md}\left(\frac{2mt}{\alpha n}\right)^{(d-2)2m}\\\displaybreak[0]
        &\leq \frac{\alpha n}{t}2^m \left(\displaybreak[0]\frac{2\alpha d}{t}\hat{\|}f\hat{\|}_1^{2/d}\right)^{2md} \left(\frac{2\gamma \epsilon^{4/d}(n/t)^{1-2/d}}{\alpha^{2/d}\hat{\|}f\hat{\|}_1^{2/d} \alpha n/t}\right)^{(d-2)2m}\\\displaybreak[0]
        &= \frac{\alpha n}{t}2^m \left(\frac{(2d)^d}{t^d}(\alpha\hat{\|}f\hat{\|}_1)^{4/d}\right)^{2m} \left(\frac{2\gamma \epsilon^{4/d}}{(n/t)^{2/d}}\right)^{(d-2)2m}\\\displaybreak[0]
        &\leq \frac{\epsilon^2}{2},\displaybreak[0]
    \end{align*}
    where in the last step we used that $m\geq 1 \implies (1-2/d)4m \geq 1$ (so $n$ is in the denominator and $\epsilon^{(1-2/d)4m} \leq \epsilon$) and picked $\gamma$ sufficiently small.
	
	Summing both results, if $m \leq \gamma(\epsilon^2/(\alpha \hat{\|}f\hat{\|}_1))^{2/d} (n/t)^{1-2/d}$, then $\varepsilon_{bias} \leq \epsilon^2$. 
\end{proof}

\begin{myproof}{Lemma~\ref{lem:lem6.1}}
    We start with the following:
    \begin{align*}
       \sum_{x\in\{-1,1\}^n}\chi_S(x)\mathbf{1}\big[B_f(x,\sigma) = w\big]  &= \sum_{x\in\{-1,1\}^n}\chi_S(x) \prod_{j=1}^{\alpha n/t} \mathbf{1}\left[f(\sigma(x)^{(j)}) = w_j\right]\\
        &= \sum_{x\in\{-1,1\}^n}\chi_{\sigma(S)}(\sigma(x)) \prod_{j=1}^{\alpha n/t} \mathbf{1}\left[f(\sigma(x)^{(j)}) = w_j\right]\\
        &= \sum_{x\in\{-1,1\}^n}\chi_{\sigma(S)}(x) \prod_{j=1}^{\alpha n/t} \mathbf{1}\left[f(x^{(j)}) = w_j\right].
    \end{align*}
    We use the block decomposition $\sigma(S) = [n/t]\bullet \mathbb{U}_{\sigma(S)}$ with $\mathbb{U}_{\sigma(S)} = (U_1,\dots,U_{n/t})\in [t]^{n/t}$. Therefore
    \begin{align*}
        \sum_{x\in\{-1,1\}^n}\chi_S(x)\mathbf{1}\big[B_f(x,\sigma) = w\big] &= \sum_{x\in\{-1,1\}^n}\chi_{[n/t]\bullet \mathbb{U}_{\sigma(S)}}(x) \prod_{j=1}^{\alpha n/t} \mathbf{1}\left[f(x^{(j)}) = w_j\right]\\
        &= \sum_{x\in\{-1,1\}^n}\chi_{U_1}(x^{(1)})\cdots \chi_{U_{n/t}}(x^{(n/t)}) \prod_{j=1}^{\alpha n/t} \mathbf{1}\left[f(x^{(j)}) = w_j\right]\\
        &= 2^{n(1-\alpha)}\sum_{x\in\{-1,1\}^{\alpha n}}\chi_{U_1}(x^{(1)})\cdots \chi_{U_{\alpha n/t}}(x^{(\alpha n/t)}) \prod_{j=1}^{\alpha n/t} \mathbf{1}\left[f(x^{(j)}) = w_j\right],
    \end{align*}
    if $U_j = \emptyset$ for $\alpha n/t < j \leq n/t$, i.e., $\sigma(S) \subseteq[\alpha n]$; otherwise, the sum evaluates to 0.
    %
    \begin{align*}
        \sum_{x\in\{-1,1\}^n}\chi_S(x)\mathbf{1}\big[B_f(x,\sigma) = w\big] &= 2^{n(1-\alpha)} \prod_{j=1}^{\alpha n/t}\left(\sum_{x\in\{-1,1\}^t} \chi_{U_j}(x)\mathbf{1}\left[f(x) = w_j\right]\right)\\
        &= 2^{n(1-\alpha)} \prod_{j=1}^{\alpha n/t}\left(\sum_{x\in\{-1,1\}^t} \chi_{U_j}(x)\frac{1 + f(x)w_j}{2}\right)\\
        &= \frac{2^{n}}{2^{\alpha n/t}}\prod_{j=1}^{\alpha n/t}\left( \mathbf{1}[U_j = \emptyset] + \widehat{f}(U_j) w_j\right).
    \end{align*}
    Since $\operatorname{phdeg}(f) > 0$, $\widehat{f}(\emptyset) = 0$. Hence 
    \begin{align*}
        \sum_{x\in\{-1,1\}^n}\chi_S(x)\mathbf{1}\big[B_f(x,\sigma) = w\big] &= \frac{2^{n}}{2^{\alpha n/t}} \prod_{\substack{j\in[\alpha n/t] \\ U_j \neq \emptyset}}\widehat{f}(U_j) w_j.
    \end{align*}
    By the same token,
    \begin{align*}
         \sum_{x\in\{-1,1\}^n}\chi_S(x)\mathbf{1}\big[B_f(x,\sigma) = \overline{w}\big] &= \frac{2^{n}}{2^{\alpha n/t}} \prod_{\substack{j\in[\alpha n/t] \\ U_j \neq \emptyset}}\widehat{f}(U_j) \overline{w_j}
         = (-1)^{|\mathbb{U}^\ast_{\sigma(S)}|}\frac{2^{n}}{2^{\alpha n/t}} \prod_{\substack{j\in[\alpha n/t] \\ U_j \neq \emptyset}}\widehat{f}(U_j) w_j.
    \end{align*}
    Therefore
    \begin{align*}
        u(\sigma, w, S) = \frac{p_\sigma}{2^{\alpha n/t}} \frac{1 - (-1)^{|\mathbb{U}^\ast_{\sigma(S)}|}}{2} \prod_{\substack{j\in[\alpha n/t] \\ U_j \neq \emptyset}}\widehat{f}(U_j) w_j.
    \end{align*}
    We see that $u(\sigma,w,S)$ can only be nonzero if $\sigma(S)\subseteq[\alpha n]$, if $|\mathbb{U}_{\sigma(S)}^\ast|$ is odd and if $d \leq |U| \leq t$ for all $U \in \mathbb{U}_{\sigma(S)}^\ast$ (since $\operatorname{phdeg}(f)= d$), i.e., if $\sigma(S)\in \Delta$. In summary,
    \begin{align*}
        |u(\sigma, w, S)| = \begin{cases}
	        \frac{p_\sigma}{2^{\alpha n/t}} \prod_{U\in \mathbb{U}^\ast_{\sigma(S)}}|\widehat{f}(U)| &~\text{if}~ \sigma(S)\in \Delta,\\
	        0 &~\text{if}~ \sigma(S)\notin \Delta. \qedalt
        \end{cases}
    \end{align*}
\end{myproof}

\section{Limitations of proof technique}
\label{sec:sec6}

In the last section we saw that Theorem~\ref{thr:thr5.1} guarantees the hardness of the $f\operatorname{-BHP}^{\alpha}_{t,n}$ problem if $f$ has pure high degree $\geq 2$, but the hardness result is not guaranteed if only sign-degree $\geq 2$. To arrive at that result, we used the uniform distribution as a `hard' distribution for Yao's principle. In this section we shall prove that under the uniform distribution we cannot obtain a better result. More specifically, we shall prove that under the uniform distribution there is an efficient bounded-error classical protocol for solving the $f\operatorname{-BHP}^{\alpha}_{t,n}$ problem if $\operatorname{phdeg}(f) \leq 1$.
\begin{theorem}
    \label{thr:nonuniformchap3}
    Under the uniform distribution for Alice and Bob's inputs, if $\operatorname{phdeg}(f) \leq 1$, then $R^1(f\operatorname{-BHP}^{\alpha}_{t,n}) = O(\log{n})$ for a success probability strictly greater than $1/2$ independent of $n$.
\end{theorem}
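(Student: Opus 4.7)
The plan is to split into two cases based on whether $\widehat{f}(\emptyset) = 0$, since the hypothesis $\operatorname{phdeg}(f) \leq 1$ forces either $\widehat{f}(\emptyset) \neq 0$, or $\widehat{f}(\emptyset) = 0$ together with the existence of some position $i^* \in [t]$ with $\widehat{f}(\{i^*\}) \neq 0$. Each case will be handled by a different protocol, but both will use only $O(\log n)$ bits and succeed with probability bounded away from $1/2$ by a constant depending only on $f$, $t$ and $\alpha$ (so in particular, independent of $n$).

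First, consider $\widehat{f}(\emptyset) = \mu \neq 0$. Here Alice sends nothing; Bob outputs $\operatorname{sgn}(\mu \sum_{j=1}^{\alpha n/t} w_j)$. Under the uniform distribution on $x$, for fixed $\sigma$ and $b$, the blocks $\sigma(x)^{(j)}$ are i.i.d.\ uniform on $\{-1,1\}^t$, so the values $w_j = b \cdot f(\sigma(x)^{(j)})$ are i.i.d.\ with mean $b\mu$ and bounded in $\{-1,1\}$. A Chebyshev bound immediately shows that $\mu\sum_j w_j$ has the same sign as $b$ with probability $1 - O(t/(\alpha n \mu^2))$, which exceeds $1/2$ for all sufficiently large $n$; for the finitely many smaller values of $n$, Alice can just send her entire bit-string as an $O(1)$ message.

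In the remaining case, $\widehat{f}(\emptyset) = 0$ but $\widehat{f}(\{i^*\}) = c \neq 0$ for some position $i^* \in [t]$. I would use a sampling protocol in the spirit of Theorem~\ref{thr:thr2.1}: Alice picks $m$ indices $k_1, \ldots, k_m$ uniformly and independently from $[n]$ and sends the pairs $(k_\ell, x_{k_\ell})$, totalling $O(m \log n)$ bits. Setting $j(k) := \lceil \sigma(k)/t \rceil$, Bob retains only those $\ell$ for which $\sigma(k_\ell) \leq \alpha n$ and $\sigma(k_\ell) \equiv i^* \pmod{t}$, so that $x_{k_\ell}$ is precisely the $i^*$-th coordinate of the $j(k_\ell)$-th block of $\sigma(x)$. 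He forms
\[
X := \operatorname{sgn}(c) \sum_{\ell=1}^m x_{k_\ell}\, w_{j(k_\ell)} \cdot \mathbf{1}\big[\sigma(k_\ell) \leq \alpha n \text{ and } \sigma(k_\ell) \equiv i^* \pmod{t}\big]
\]
and outputs $\operatorname{sgn}(X)$. The Fourier identity $\mathbb{E}_x\big[\sigma(x)^{(j)}_{i^*} f(\sigma(x)^{(j)})\big] = \widehat{f}(\{i^*\}) = c$, together with the fact that a uniform $k$ satisfies both indicators with probability exactly $\alpha/t$, then yields $\mathbb{E}[bX] = m\alpha|c|/t$.

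The main subtlety I expect is controlling the variance of $X$ well enough that a \emph{constant} number $m$ of samples (independent of $n$) already suffices, keeping the total communication at $O(\log n)$. Since the $k_\ell$'s are independent, the summands of $X$ are independent conditional on $(x,\sigma,b)$, giving $\operatorname{Var}[X \mid x,\sigma,b] \leq m$; the remaining term in the law of total variance, $m^2 \operatorname{Var}_{x,\sigma}[\mathbb{E}[X_1 \mid x,\sigma]]$, is $O(m^2/n)$ because $\mathbb{E}[X_1 \mid x,\sigma]$ is itself an average of $\alpha n/t$ independent bounded contributions over uniform $x$. Hence $\operatorname{Var}[X] = O(m)$ for sufficiently large $n$, and Chebyshev gives error probability $O(t^2/(m\alpha^2 c^2))$, which drops below $1/4$ once $m$ is a large enough constant depending only on $f$, $t$ and $\alpha$. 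Combining the two cases yields the desired $O(\log n)$-communication protocol with success probability bounded away from $1/2$ independently of $n$.
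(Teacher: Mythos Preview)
Your proof is correct and follows the same high-level idea as the paper (exploit a low-order nonzero Fourier coefficient via random sampling), but the execution differs in two notable ways.

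First, your case split on $\widehat{f}(\emptyset)$ is actually more careful than the paper's. The paper dismisses the constant case as trivial and then writes ``Given that $\operatorname{phdeg}(f)=1$, this set is non-empty,'' effectively assuming $\widehat{f}(\emptyset)=0$. But there exist non-constant $f$ with $\operatorname{phdeg}(f)=0$ and $\widehat{f}(\{i\})=0$ for all $i$ (for instance $\operatorname{NAE}$ on three bits), which your zero-communication majority-vote protocol on the $w_j$'s handles and the paper's write-up does not.

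Second, in the main case the paper gives a cleaner argument that avoids your variance bookkeeping altogether: Alice sends a constant number of uniformly random indexed bits; with high probability at least one of them, say with block-position $k$, satisfies $\widehat{f}(\{k\})\neq 0$. Bob uses \emph{that single bit} and outputs $\operatorname{sgn}\big[\widehat{f}(\{k\})\big]\cdot \sigma(x)^{(j)}_{k}\cdot w_{j}$. The key identity
\[
\Pr_{x}\big[\operatorname{sgn}[\widehat{f}(\{k\})]\,\sigma(x)^{(j)}_{k}=f(\sigma(x)^{(j)})\big]=\tfrac{1}{2}+\tfrac{1}{2}\big|\widehat{f}(\{k\})\big|
\]
gives the bias directly, with no Chebyshev or law of total variance needed. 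Your averaging approach reaches the same conclusion but is heavier; the advantage of the single-sample argument is that it makes the source of the constant advantage (the correlation $\widehat{f}(\{k\})$ between one coordinate and $f$) completely transparent.

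A small point of presentation: in your Case~2 you compute $\mathbb{E}[bX]$ but then bound $\operatorname{Var}[X]$. Since $X$ depends on $b$ through $w$, it is cleaner to observe that $bX$ does not depend on $b$ and carry the whole second-moment analysis for $bX$; otherwise the law-of-total-variance step picks up an extra $(\alpha|c|/t)^2$ term from the randomness of $b$. This does not break the argument, but tightening the notation would make the variance bound unambiguous.
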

\begin{proof}
    Assume that $f$ is non-constant, otherwise the result holds trivially. Let $F := \{i\in[t]:\widehat{f}(\{i\}) \neq 0\}$. Given that $\operatorname{phdeg}(f) = 1$, this set is non-empty. Consider the following protocol: Alice picks a subset $I \subseteq [n]$ of indices uniformly at random using shared randomness, where $|I|$ will be determined later, and sends the indices and corresponding bitvalues to Bob. Let $\{x_i\}_{i\in I}$ be the bitvalues sent, and let $j(i) = \lceil \sigma(i)/t\rceil$ and $k(i) \equiv \sigma(i)~(\text{mod}~t)$ for all $i\in I$, where $\sigma\in \mathbb{S}_n$ is Bob's permutation. The probability that none of the indices sent by Alice are matched to a non-zero Fourier coefficient according to Bob's permutation, within one of the $\alpha n/t$ blocks he has, is
    \begin{align*}
        \operatorname{Pr}[k(i)\notin F, ~\forall i\in I] \le \left(1 - \alpha\frac{|F|}{t}\right)^{|I|} \leq e^{-\alpha|I||F|/t},
    \end{align*}
    which we can make almost arbitrarily small by choosing $|I|$ to be sufficiently large. (Note that the first inequality above would be an equality if we chose the elements of $I$ with replacement, and choosing them without replacement cannot make $\operatorname{Pr}[k(i)\notin F, ~\forall i\in I]$ higher.) Hence, with high probability, $\exists i \in I\cap [\alpha n/t]$ such that $k(i)\in F$. Bob computes $\operatorname{sgn}[\widehat{f}(\{k(i)\})]\cdot\sigma(x)^{(j(i))}_{k(i)}\cdot w_{j(i)}$: if it is $+1$, then he outputs that $B_f(x,\sigma) = w$, and if it is $-1$, then he outputs that $B_f(x,\sigma) = \overline{w}$. Otherwise, if $k(i)\notin F$ for all $i\in I\cap [\alpha n/t]$, then Bob outputs a random bit.
   
    To see why the protocol works, we calculate the probability that $\operatorname{sgn}[\widehat{f}(\{k(i)\})]\cdot\sigma(x)^{(j(i))}_{k(i)}$ is equal to $f(\sigma(x)^{(j(i))})$:
    \begin{align*}
        \operatorname*{Pr}_x\left[\operatorname{sgn}[\widehat{f}(\{k(i)\})]\sigma(x)^{(j(i))}_{k(i)} = f(\sigma(x)^{(j(i))})\right] &= \frac{1}{2} + \frac{1}{2^{t+1}}\sum_{x\in\{-1,1\}^t}\operatorname{sgn}[\widehat{f}(\{k(i)\})]\sigma(x)^{(j(i))}_{k(i)} f(\sigma(x)^{(j(i))})\\
        &= \frac{1}{2} + \frac{1}{2} \operatorname{sgn}[\widehat{f}(\{k(i)\})]\cdot\widehat{f}(\{k(i)\})\\
        &= \frac{1}{2} + \frac{1}{2}|\widehat{f}(\{k(i)\})|,
    \end{align*}
    which is greater than $1/2$ and where we used in the first line that the distribution on Alice's inputs is uniform. The overall success probability of the protocol ($\exists i \in I\cap [\alpha n/t]$ such that $k(i)\in F$ and Bob's output equals $f$) is at least $\frac{1}{2} + \frac{1}{2}|\widehat{f}(\{k(i)\})|(1-e^{-\alpha |I||F|/t})$. By taking $|I|=O(1)$, this is strictly greater than $1/2$ by $\Omega\big(\frac{\alpha}{t}|\widehat{f}(\{k(i)\})|\big)$, which does not depend on $n$, since $|\widehat{f}(\{k(i)\})| \geq 2^{1-t}$ (as it is nonzero and is an average of $2^t$ $\pm1$'s).
\end{proof}

\section{Conclusions}

We proposed a very broad generalisation of the famous Boolean Hidden (Hyper)Matching problem, which we called the $f$-Boolean Hidden Partition ($f\operatorname{-BHP}^{\alpha}_{t,n}$) problem. Instead of using the Parity function to arrive at the final bit-string that Alice and Bob wish to explore, we use a generic Boolean function $f$. We partially characterize the communication complexity of the whole problem in terms of one property of $f$: its sign-degree. We proved that if $\operatorname{sdeg}(f) \leq 1$, then there exists an efficient bounded-error classical protocol that solves $f\operatorname{-BHP}^{\alpha}_{t,n}$ with $O(\log{n})$ bits. Similarly to the classical case, we proved that if $\operatorname{sdeg}(f) \leq 2$, then there exists an efficient bounded-error quantum protocol that solves $f\operatorname{-BHP}^{\alpha}_{t,n}$ with $O(\log{n})$ qubits. We then pursued a classical-quantum communication gap by proving classical and quantum lower bounds for cases of the problem where $\operatorname{sdeg}(f) \geq 2$. First we noted that the $f\operatorname{-BHP}^{\alpha}_{t,n}$ problem is hard for almost all symmetric functions with $\operatorname{sdeg}(f) \geq 2$ via a simple reduction from the Boolean Hidden Matching problem. And second we generalised previous communication complexity lower bounds based on Fourier analysis to prove that functions with $\operatorname{phdeg}(f) = d \geq 2$ lead to a classical $\Omega(n^{1-1/d})$ communication cost and functions with $\operatorname{phdeg}(f) = d \geq 3$ lead to a quantum $\Omega(n^{1-2/d})$ communication cost for $f\operatorname{-BHP}^{\alpha}_{t,n}$.

It is known that $\operatorname{phdeg}(f) \leq \operatorname{sdeg}(f)$~\cite{sherstov2011pattern,bun2015dual}, but our lower bounds are probably not tight for {\em all} functions with sign-degree $\geq 2$. We proved that this is an inherent limitation of the chosen distribution for Alice and Bob's inputs during the proof, since under the uniform distribution it is possible to solve the problem with $O(\log{n})$ bits of communication if $\operatorname{phdeg}(f) \leq 1$. We then make the following conjectures.
\begin{conjecture}
    $R^1_\epsilon(f\operatorname{-BHP}^{\alpha}_{t,n}) = \Omega(n^{1-1/d})$ if $\operatorname{sdeg}(f) = d \geq 2$.	
\end{conjecture}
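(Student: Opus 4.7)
Theorem~\ref{thr:thr5.1} already delivers the claimed lower bound whenever $\operatorname{phdeg}(f)=d$, so what remains is the case $\operatorname{phdeg}(f)<\operatorname{sdeg}(f)=d$. Theorem~\ref{thr:nonuniformchap3} pinpoints the uniform input distribution as the obstacle: as soon as $f$ has a non-vanishing linear Fourier coefficient, a single random bit from Alice is enough for Bob to gain a non-trivial correlation. Any proof of the conjecture must therefore replace Yao's uniform distribution by a non-uniform one under which $f$ \emph{effectively} has pure high degree $d$.

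The canonical candidate is the dual distribution from the LP that characterises $\operatorname{sdeg}(f)$: there exists a signed measure $\psi:\{-1,1\}^t\to\mathbb{R}$ with $\operatorname{sgn}\psi=f$, $\|\psi\|_1=1$, and $\sum_x\psi(x)\chi_S(x)=0$ for every $|S|<d$. Setting $\mu(x):=|\psi(x)|$ gives a probability distribution under which $f$ has pure high degree $d$ in the $\mu$-biased Fourier basis. The first step of the plan is to replace the uniform distribution of Theorem~\ref{thr:thr5.1} by the joint law that draws $\sigma\sim\mathbb{S}_n$ uniformly, independently draws $y\in\{-1,1\}^n$ so that its blocks $y^{(1)},\dots,y^{(n/t)}$ are i.i.d.\ from $\mu$, and sets $x:=\sigma^{-1}(y)$. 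Bob's induced distributions $p_\sigma,q_\sigma$ are defined exactly as in~\eqref{eq:pqdistr}, and the second step is to rerun the Fourier-analytic machinery of Theorem~\ref{thr:raryhidden2} in the $\mu$-biased basis on each of Bob's blocks. In this basis the low-degree Fourier coefficients of $f$ still vanish, so the constraint $|S|\geq kd$ appearing in~\eqref{eq:eq4.2} is preserved. The third step is to replace the scalar KKL bound (Lemma~\ref{lem:kkl}) by a $\mu$-biased hypercontractive inequality, absorbing the resulting $\min_x\mu(x)$-dependence into the universal constant in the lower bound.

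The principal obstacle is that $\mu$ is not generally a product distribution on $\{-1,1\}^t$, so the marginal law of $x$ depends on $\sigma$, and the preimage set $A\subseteq\{-1,1\}^n$ corresponding to a fixed message of Alice no longer has a $\sigma$-independent mass. One then has to work with a $\sigma$-dependent density $g_\sigma$ in place of the indicator $g$, and the delicate counting identity relating $\widehat{g}(\sigma^{-1}(V\bullet T))$ to the Fourier mass of $g$ at level $|S|$ must be upgraded to an analogous averaging identity for the $\mu$-biased coefficients of $g_\sigma$. For symmetric $f$ one can take $\mu$ to be symmetric under coordinate permutations within a block, matching the symmetrisation used in Theorem~\ref{thr:thr4} and restoring enough invariance to make the bookkeeping tractable; the non-symmetric case is genuinely harder. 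A natural fallback, should this direct approach resist, is Sherstov's pattern-matrix method, which is the canonical route for lifting sign-degree lower bounds to communication lower bounds; the caveat is that $f\operatorname{-BHP}^{\alpha}_{t,n}$ is not itself a pattern matrix, and some structural modification or composition trick would be required to apply the framework here.
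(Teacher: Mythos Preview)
The statement you are trying to prove is a \emph{conjecture} in the paper, not a theorem. The paper does not provide a proof; on the contrary, the authors explicitly present it as open and remark (just after stating it) that ``A proof of these results would require a non-uniform distribution on Alice and Bob's inputs, as proven in Theorem~\ref{thr:nonuniformchap3}.'' So there is nothing to compare your proposal against.

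That said, your proposal is a sensible sketch of an \emph{attack} on the open problem, and it lines up with the hints the paper gives: you correctly isolate the gap $\operatorname{phdeg}(f)<\operatorname{sdeg}(f)$, correctly invoke Theorem~\ref{thr:nonuniformchap3} to rule out the uniform distribution, and reach for the LP-dual witness $\psi$ to manufacture a distribution under which $f$ behaves as if it had pure high degree $d$. The obstacle you name is real: because $\mu=|\psi|$ is typically not a product measure on the block, the law of Alice's string $x=\sigma^{-1}(y)$ depends on $\sigma$, so the message-conditioning that produces the set $A$ and the function $g$ in Theorem~\ref{thr:raryhidden2} no longer decouples from $\sigma$. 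The clean identity $\mathbb{E}_\sigma[\widehat{g}(\sigma^{-1}(V\bullet T))^2]=\binom{n}{|S|}^{-1}\sum_{|S'|=|S|}\widehat{g}(S')^2$ relies precisely on this decoupling, and there is no off-the-shelf $\mu$-biased replacement for it. Your fallback to pattern-matrix technology is also reasonable in spirit, but as you note, $f\operatorname{-BHP}^{\alpha}_{t,n}$ is not a composed function of the form $f\circ g$ with a gadget $g$ on disjoint coordinates, so Sherstov's lifting does not apply directly either.

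In short: your write-up is a coherent research plan, not a proof, and the paper itself contains no proof to compare it to.
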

\begin{conjecture}
    $Q^1_\epsilon(f\operatorname{-BHP}^{\alpha}_{t,n}) = \Omega(n^{1-2/d})$ if $\operatorname{sdeg}(f) = d \geq 3$.	
\end{conjecture}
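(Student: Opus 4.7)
The plan is to preserve the Fourier-analytic architecture of Theorem~\ref{thr:thr6.1} but replace the uniform hard distribution (which Theorem~\ref{thr:nonuniformchap3} shows is too weak) with a non-uniform distribution built from a \emph{dual witness} for $\operatorname{sdeg}(f)=d$. By LP duality one obtains $\psi:\{-1,1\}^t\to\mathbb{R}$ with $\sum_x|\psi(x)|=1$, $\sum_x\psi(x)f(x)\geq\beta$ for some $\beta=\beta(f)>0$, and $\widehat{\psi}(S)=0$ for all $|S|<d$. Write $\psi=\psi_+-\psi_-$ and, for $b\in\{-1,1\}$, form a block distribution $\mu_b$ on $\{-1,1\}^t$ whose density relative to uniform equals $1+b\,\psi(x)/\|\psi\|_\infty$ (rescaled); by construction $\Pr_{x\sim\mu_b}[f(x)=b]>1/2$ by a constant margin, and the signed measure $\mu_{+1}-\mu_{-1}$ has vanishing Fourier mass on all $|S|<d$. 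Alice's input would then be sampled so that each of the $\alpha n/t$ blocks indexed by $[\alpha n/t]$ is independently drawn from $\mu_b$ for a single hidden bit $b$, and the remaining blocks are uniform; Bob's permutation $\sigma$ and his string $w$ remain as in Theorem~\ref{thr:thr6.1}, so the block structure is hidden.

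With this distribution in place, the plan is to re-run the chain of identities leading up to Lemma~\ref{lem:lem6.1}. The block-wise calculation producing $\widehat{f}(U_j)$ from the indicator $\mathbf{1}[f(x^{(j)})=w_j]$ is replaced by a calculation producing $\widehat{\psi}(U_j)$ (up to a harmless factor of $\|\psi\|_\infty^{-1}$), because the block density is now proportional to $1+b\psi$. Since $\widehat{\psi}(U_j)=0$ whenever $|U_j|<d$, exactly the same support restriction $|U_j|\geq d$ that drove the $n^{1-2/d}$ bound in the $\operatorname{phdeg}$ case resurfaces here, now in terms of $\operatorname{sdeg}(f)$. The combinatorial sum over $(k,V,U_1,\dots,U_k,S)$ then splits at $k=4m$ exactly as in Theorem~\ref{thr:raryhidden} and should yield the same $\Omega\bigl((\epsilon^2/(\alpha\hat\|\psi\hat\|_1))^{2/d}(n/t)^{1-2/d}\bigr)$ bound, provided the matrix-valued sum $\sum_S\|\widehat{\rho}(S)\|_{\operatorname{tr}}^2$ can still be controlled.

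Control of $\|\widehat{\rho}(S)\|_{\operatorname{tr}}$ is the main obstacle, and the crux of why this is a conjecture. Theorem~\ref{lem:hyper} gives matrix-valued hypercontractivity for Fourier coefficients with respect to the \emph{uniform} measure on $\{-1,1\}^n$, whereas in our argument the encoding $\rho$ is averaged against a product of biased distributions $\mu_b^{\otimes \alpha n/t}\otimes U^{\otimes(1-\alpha)n/t}$. The strategy I would pursue is to expand each biased Fourier coefficient as an explicit linear combination of uniform Fourier coefficients of $\rho$, using that $\mu_b$ has density $O(1)$ relative to uniform (the constant depending only on $\|\psi\|_\infty$ and $t$), and then invoke Theorem~\ref{lem:hyper} on the resulting uniform-Fourier expression. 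For $t=O(1)$ this change of basis only loses constant factors, which can be absorbed into $\alpha$ and the universal constant $\gamma$, so that the two-range Chernoff-style bookkeeping of Theorem~\ref{thr:raryhidden} carries through with $\beta$ playing the role of the correlation bound and $\hat\|\psi\hat\|_1$ replacing $\hat\|f\hat\|_1$.

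The hardest step is verifying that this change of basis preserves enough of the trace-norm structure to avoid an exponential-in-$k$ loss when $k$ is large: if the rewriting costs a factor $C^k$ with $C>1$, then the $k\geq 4m$ part of the sum can no longer be absorbed and the argument collapses. A cleaner alternative, which I would pursue in parallel, is to look for a black-box reduction analogous to Theorem~\ref{thr:thr4}: given $f$ with $\operatorname{sdeg}(f)=d$, construct a (not necessarily symmetric) Boolean function $g$ with $\operatorname{phdeg}(g)=d$ together with a constant-blowup reduction from $g\operatorname{-BHP}^{\alpha}_{t',n}$ to $f\operatorname{-BHP}^{\alpha}_{t,O(n)}$, at which point Theorem~\ref{thr:thr6.1} applies directly. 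Existence of such a $g$ for every sign-degree-$d$ function $f$ appears to be an open question in its own right, and it is plausibly the real content of the conjecture.
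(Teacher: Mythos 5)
First, note where this sits in the paper: the statement you were asked to prove is stated there explicitly as a \emph{conjecture}; the paper proves the quantum bound only under $\operatorname{phdeg}(f)=d\geq 3$ (Theorem~\ref{thr:thr6.1}) and proves in Theorem~\ref{thr:nonuniformchap3} that the uniform hard distribution used there cannot be pushed to sign-degree. So there is no paper proof to compare against, and your proposal --- candidly presented as a plan with two unresolved steps --- does not close the gap either. Beyond the obstacles you flag yourself, there is a structural problem with the hard distribution you propose: in $f\operatorname{-BHP}^{\alpha}_{t,n}$ the inputs must satisfy the promise $B_f(x,\sigma)\circ w = b^{\alpha n/t}$, and in Theorem~\ref{thr:raryhidden} this holds by construction because $w$ is \emph{defined} as $B_f(x,\sigma)$ or its complement. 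In your construction the hidden bit $b$ is instead encoded in the bias of the blocks, each drawn independently from $\mu_b$, under which $f(\sigma(x)^{(j)})=b$ only with probability $\tfrac12+\Omega(1)$; hence with probability $1-2^{-\Omega(n)}$ no choice of $b$ makes the promise true, and you cannot simultaneously keep ``$w$ as in Theorem~\ref{thr:thr6.1}'' (which ties $w$ and the answer bit to $B_f(x,\sigma)$) and let $b$ govern the block distribution. Yao's principle gives nothing if the distribution lives outside the promise. The natural repair uses the genuine dual characterisation of threshold degree --- $\psi$ with $\psi(x)f(x)\geq 0$ \emph{pointwise} and $\widehat{\psi}(S)=0$ for $|S|<d$ --- and samples each block from the conditional measures $\psi_+/\|\psi_+\|_1$ on $f^{-1}(1)$ and $\psi_-/\|\psi_-\|_1$ on $f^{-1}(-1)$, conditioned so that $f(\sigma(x)^{(j)})=b\,w_j$; mere correlation $\sum_x\psi(x)f(x)\geq\beta$, which is all your $\mu_b$ uses, is not sufficient.

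Second, the step you label as the crux is indeed where the argument breaks, but your proposed remedy is too optimistic: the claim that the change of basis from biased to uniform Fourier coefficients ``only loses constant factors for $t=O(1)$'' ignores that the density of the full input distribution relative to uniform is a \emph{product} over the $\alpha n/t$ biased blocks, so the rewriting costs a constant factor per nonempty block of $S$ --- precisely the $C^{k}$ loss you identify as fatal in the range $k\geq 4m$, where the paper's Sum~II absorbs a $2^m h(4m)$ term only because Theorem~\ref{lem:hyper} (which holds for the uniform measure only) keeps the trace-norm mass bounded by $2^{2\delta m}$ with no such exponential-in-$k$ overhead. Nothing in the proposal controls this, and no biased-measure analogue of the matrix-valued hypercontractive inequality with the needed per-level decay is supplied. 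Your fallback route --- a reduction from some $g$ with $\operatorname{phdeg}(g)=d$ to $f\operatorname{-BHP}$, in the spirit of Theorem~\ref{thr:thr4} --- is, as you say, itself open; the paper's reduction only transfers the $\Omega(\sqrt{n})$ BHM bound to symmetric $f$ and does not yield $\Omega(n^{1-2/d})$. So the proposal correctly diagnoses why the conjecture is hard (non-uniform distributions plus dual witnesses are needed, uniform hypercontractivity no longer applies), but it does not constitute a proof, and its concrete distribution as written would not even define a valid instance of the promise problem.
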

A proof of these results would require a non-uniform distribution on Alice and Bob's inputs, as proven in Theorem~\ref{thr:nonuniformchap3}.

We hope that these conjectures help motivate the development of necessary quantum lower bound techniques.

\subsection*{Acknowledgements}

We would like to thank Ronald de Wolf for pointing out Ref.~\cite{shi2012limits} and for overall suggestions to the paper, Noah Linden for overall suggestions to the paper and Makrand Sinha for useful discussions about the hypercontractive inequality. This project has received funding from the QuantERA ERA-NET Cofund in Quantum Technologies implemented within the European Union's Horizon 2020 Programme (QuantAlgo project); the European Research Council (ERC) under the European Union's Horizon 2020 research and innovation programme (grant agreement No.\ 817581); and EPSRC grants EP/L015730/1, EP/L021005/1, and EP/R043957/1. No new data were created during this study.
	
\bibliographystyle{abbrv}
\bibliography{name}

\begin{thebibliography}{10}

\bibitem{aaronson2018forrelation}
S.~Aaronson and A.~Ambainis.
\newblock Forrelation: A problem that optimally separates quantum from
  classical computing.
\newblock {\em SIAM Journal on Computing}, 47(3):982--1038, 2018.
\newblock {\tt arXiv:1411.5729}.

\bibitem{aaronson2016polynomials}
S.~Aaronson, A.~Ambainis, J.~Iraids, M.~Kokainis, and J.~Smotrovs.
\newblock Polynomials, quantum query complexity, and {G}rothendieck's
  inequality.
\newblock In {\em 31\textsuperscript{st} Annual Conference on Computational
  Complexity}, 2016.
\newblock {\tt arXiv:1511.08682}.

\bibitem{aspnes1994expressive}
J.~Aspnes, R.~Beigel, M.~Furst, and S.~Rudich.
\newblock The expressive power of voting polynomials.
\newblock {\em Combinatorica}, 14(2):135--148, 1994.

\bibitem{bar2004exponential}
Z.~Bar-Yossef, T.~S. Jayram, and I.~Kerenidis.
\newblock Exponential separation of quantum and classical one-way communication
  complexity.
\newblock In {\em Proceedings of the 36\textsuperscript{th} Annual ACM
  Symposium on Theory of Computing}, pages 128--137. ACM, 2004.

\bibitem{beckner1975inequalities}
W.~Beckner.
\newblock Inequalities in {F}ourier analysis.
\newblock {\em Ann. of Math.}, 102:159--182, 1975.

\bibitem{ben2008hypercontractive}
A.~Ben-Aroya, O.~Regev, and R.~{\DE{Wolf}{De}{de}}~Wolf.
\newblock A hypercontractive inequality for matrix-valued functions with
  applications to quantum computing and {LDC}s.
\newblock In {\em Proceedings of the 49\textsuperscript{th} Annual IEEE
  Symposium on Foundations of Computer Science}, pages 477--486. IEEE, 2008.
\newblock {\tt arXiv:0705.3806}.

\bibitem{bonami1970etude}
A.~Bonami.
\newblock \'{E}tude des coefficients {F}ourier des fonctions de {$L^p(G)$}.
\newblock In {\em Annales de l'institut Fourier}, volume~20, pages 335--402,
  1970.

\bibitem{buhrman10}
H.~Buhrman, R.~Cleve, S.~Massar, and R.~{\DE{Wolf}{De}{de}}~Wolf.
\newblock Nonlocality and communication complexity.
\newblock {\em Reviews of modern physics}, 82(1):665, 2010.
\newblock {\tt arXiv:0907.3584}.

\bibitem{buhrman2001quantum}
H.~Buhrman, R.~Cleve, J.~Watrous, and R.~{\DE{Wolf}{De}{de}}~Wolf.
\newblock Quantum fingerprinting.
\newblock {\em Physical Review Letters}, 87(16):167902, 2001.
\newblock {\tt quant-ph/0102001}.

\bibitem{buhrman07}
H.~Buhrman, N.~Vereshchagin, and R.~{\DE{Wolf}{De}{de}}. Wolf.
\newblock On computation and communication with small bias.
\newblock In {\em Proceedings of the 22\textsuperscript{nd} Annual Conference
  on Computational Complexity}, pages 24--32, 2007.

\bibitem{bun2015dual}
M.~Bun and J.~Thaler.
\newblock Dual lower bounds for approximate degree and {M}arkov--{B}ernstein
  inequalities.
\newblock {\em Information and Computation}, 243:2--25, 2015.

\bibitem{doriguello19}
J.~F. Doriguello and A.~Montanaro.
\newblock Quantum sketching protocols for {H}amming distance and beyond.
\newblock {\em Physical Review A}, 99:062331, 2019.
\newblock {\tt arXiv:1810.12808}.

\bibitem{dubhashi09}
D.~P. Dubhashi and A.~Panconesi.
\newblock {\em Concentration of measure for the analysis of randomized
  algorithms}.
\newblock Cambridge University Press, 2009.

\bibitem{gavinsky08b}
D.~Gavinsky.
\newblock Classical interaction cannot replace a quantum message.
\newblock In {\em Proceedings of the 40\textsuperscript{th} Annual ACM
  Symposium on Theory of Computing}, pages 95--102, 2008.
\newblock {\tt quant-ph/0703215}.

\bibitem{gavinsky2007exponential}
D.~Gavinsky, J.~Kempe, I.~Kerenidis, R.~Raz, and R.~{\DE{Wolf}{De}{de}}~Wolf.
\newblock Exponential separations for one-way quantum communication complexity,
  with applications to cryptography.
\newblock {\em SIAM Journal on Computing}, 38(5):1695--1708, 2008.
\newblock {\tt quant-ph/0611209}.

\bibitem{helstrom1976quantum}
C.~W. Helstrom.
\newblock {\em Quantum detection and estimation theory}.
\newblock Academic press, 1976.

\bibitem{kahn1988influence}
J.~Kahn, G.~Kalai, and N.~Linial.
\newblock The influence of variables on {B}oolean functions.
\newblock In {\em Proceedings of the 29\textsuperscript{th} Annual IEEE
  Symposium Foundations of Computer Science}, pages 68--80. IEEE, 1988.

\bibitem{klauck2001lower}
H.~Klauck.
\newblock Lower bounds for quantum communication complexity.
\newblock In {\em Proceedings of the 42\textsuperscript{nd} Annual IEEE
  Symposium on Foundations of Computer Science}, pages 288--297. IEEE, 2001.
\newblock {\tt quant-ph/0106160}.

\bibitem{montanaro11}
A.~Montanaro.
\newblock A new exponential separation between quantum and classical one-way
  communication complexity.
\newblock {\em Quantum Inf. Comput.}, 11(7{\&}8):574--591, 2011.
\newblock {\tt arXiv:1007.3587}.

\bibitem{newman91}
I.~Newman.
\newblock Private vs. common random bits in communication complexity.
\newblock {\em Information Processing Letters}, 39(2):67--71, 1991.

\bibitem{o2014analysis}
R.~O'Donnell.
\newblock {\em Analysis of {B}oolean functions}.
\newblock Cambridge University Press, 2014.

\bibitem{MR3540825}
R.~O'Donnell and Y.~Zhao.
\newblock Polynomial bounds for decoupling, with applications.
\newblock In {\em 31\textsuperscript{st} {C}onference on {C}omputational
  {C}omplexity}, volume~50 of {\em LIPIcs. Leibniz Int. Proc. Inform.}, pages
  Art. No. 24, 18. Schloss Dagstuhl. Leibniz-Zent. Inform., Wadern, 2016.

\bibitem{raz1995fourier}
R.~Raz.
\newblock Fourier analysis for probabilistic communication complexity.
\newblock {\em Computational Complexity}, 5(3-4):205--221, 1995.

\bibitem{raz1999exponential}
R.~Raz.
\newblock Exponential separation of quantum and classical communication
  complexity.
\newblock In {\em Proceedings of the 31\textsuperscript{st} Annual ACM
  Symposium on Theory of Computing}, pages 358--367, 1999.

\bibitem{klartag11}
O.~Regev and B.~Klartag.
\newblock Quantum one-way communication can be exponentially stronger than
  classical communication.
\newblock In {\em Proceedings of the 43\textsuperscript{rd} Annual ACM
  Symposium on Theory of Computing}, pages 31--40, 2011.
\newblock {\tt arXiv:1009.3640}.

\bibitem{sherstov2011pattern}
A.~A. Sherstov.
\newblock The pattern matrix method.
\newblock {\em SIAM Journal on Computing}, 40(6):1969--2000, 2011.

\bibitem{shi2012limits}
Y.~Shi, X.~Wu, and W.~Yu.
\newblock Limits of quantum one-way communication by matrix hypercontractive
  inequality.
\newblock 2012.

\bibitem{shor97}
P.~W. Shor.
\newblock Polynomial-time algorithms for prime factorization and discrete
  logarithms on a quantum computer.
\newblock {\em SIAM Journal on Computing}, 26:1484--1509, 1997.
\newblock {\tt quant-ph/9508027}.

\bibitem{verbin2011streaming}
E.~Verbin and W.~Yu.
\newblock The streaming complexity of cycle counting, sorting by reversals, and
  other problems.
\newblock In {\em Proceedings of the 22\textsuperscript{nd} Annual ACM-SIAM
  Symposium on Discrete Algorithms}, pages 11--25. Society for Industrial and
  Applied Mathematics, 2011.

\bibitem{MR2184730}
S.~Wehner and R.~{\DE{Wolf}{De}{de}}~Wolf.
\newblock Improved lower bounds for locally decodable codes and private
  information retrieval.
\newblock In {\em Automata, languages and programming}, volume 3580 of {\em
  Lecture Notes in Comput. Sci.}, pages 1424--1436. Springer, Berlin, 2005.
\newblock {\tt arXiv:quant-ph/0403140}.

\bibitem{de2008brief}
R.~{\DE{Wolf}{De}{de}}~Wolf.
\newblock A brief introduction to {F}ourier analysis on the {B}oolean cube.
\newblock {\em Theory of Computing}, pages 1--20, 2008.

\bibitem{yao1977probabilistic}
A.~C.-C. Yao.
\newblock Probabilistic computations: Toward a unified measure of complexity.
\newblock In {\em Proceedings of the 18\textsuperscript{th} Annual IEEE
  Symposium on Foundations of Computer Science}, pages 222--227. IEEE, 1977.

\bibitem{yao1979some}
A.~C.-C. Yao.
\newblock Some complexity questions related to distributive computing
  (preliminary report).
\newblock In {\em Proceedings of the 11\textsuperscript{th} Annual ACM
  Symposium on Theory of Computing}, pages 209--213. ACM, 1979.

\bibitem{yao1993quantum}
A.~C.-C. Yao.
\newblock Quantum circuit complexity.
\newblock In {\em Proceedings of the 34\textsuperscript{th} Annual IEEE
  Symposium on Foundations of Computer Science}, pages 352--361. IEEE, 1993.

\end{thebibliography}

\end{document}